\documentclass[
 superscriptaddress,
 numerical,
 amsmath,amssymb,
 aps,
 prx,
 longbibliography,
 floatfix,
 notitlepage,
 nofootinbib,
 twocolumn
]{revtex4-2}

\usepackage{times}
\usepackage{graphicx}
\usepackage{bm}
\usepackage{mathtools}
\usepackage{braket}
\usepackage{ulem}
\usepackage[dvipsnames]{xcolor}
\usepackage[
  colorlinks=true,
  urlcolor=Blue,
  linkcolor=BrickRed,
  citecolor=Blue,
  linktoc=all,
  bookmarksnumbered=true,
  bookmarksopen=true
]{hyperref}
\usepackage{fix-cm}
\usepackage{tikz}
\usetikzlibrary{arrows.meta,calc}
\usepackage{comment}

\makeatletter
\let\original@opargbegintheorem\@opargbegintheorem
\def\@opargbegintheorem#1#2#3{%
  \original@opargbegintheorem{#1}{#2\normalfont}{#3}%
}
\makeatother

\newtheorem{theorem}{Theorem}
\newtheorem{proposition}[theorem]{Proposition}
\newtheorem{lemma}[theorem]{Lemma}

\newenvironment{proof}{\par\noindent\textit{Proof.}\ }{\hfill$\square$\par}

\DeclareMathOperator{\supp}{supp}
\DeclareMathOperator{\Tr}{Tr}
\newcommand{\id}{\operatorname{id}}
\newcommand{\Ad}{\operatorname{Ad}}

\newcommand{\cC}{\mathcal C}
\newcommand{\cE}{\mathcal E}
\newcommand{\cM}{\mathcal M}
\newcommand{\cN}{\mathcal N}
\newcommand{\cR}{\mathcal R}
\newcommand{\cD}{\mathcal D}
\newcommand{\cQ}{\mathcal Q}
\newcommand{\cG}{\mathcal G}
\newcommand{\cT}{\mathcal T}
\newcommand{\eps}{\varepsilon}
\newcommand{\norm}[1]{\left\lVert #1\right\rVert}
\newcommand{\qparams}[3]{\left[\!\left[#1,#2,#3\right]\!\right]}
\renewcommand{\emph}{\textit}

\definecolor{ForestGreen}{rgb}{0.13, 0.55, 0.13}

\begin{document}

\title{Coherent error threshold for quantum LDPC codes}

\author{Zhengyi Han}
\thanks{These authors contributed equally to this work.}
\affiliation{Yau Mathematical Sciences Center, Tsinghua University, Beijing 100084, China}

\author{Yuanchen Zhao}
\thanks{These authors contributed equally to this work.}
\affiliation{State Key Laboratory of Low Dimensional Quantum Physics, Department of Physics, Tsinghua University, Beijing, 100084, China}

\author{Yijia Xu}
\affiliation{Joint Center for Quantum Information and Computer Science, University of Maryland, College Park,
Maryland 20742, USA}

\author{Yixu Wang}
\affiliation{Shanghai Institute for Mathematics and Interdisciplinary Sciences (SIMIS), Shanghai 200433, China}

\author{Zi-Wen Liu}
\email{zwliu0@tsinghua.edu.cn}
\affiliation{Yau Mathematical Sciences Center, Tsinghua University, Beijing 100084, China}

\date{\today}

\begin{abstract}

A key appeal of quantum low-density parity check (qLDPC) codes is their ability to suppress stochastic Pauli noise below nonzero thresholds. Coherent errors are fundamentally different: they produce superpositions of error patterns whose amplitudes can interfere even after syndrome measurement. Rigorous understanding of coherent errors remains limited. Here we show that general qLDPC codes admit a nonzero code capacity threshold against local coherent noise and more generally local channel noise. For any family of qLDPC codes with distance $d=\Omega(\log n)$, we show that there is a constant noise strength below which the logical recovery error in diamond distance decays exponentially with the code distance. The result is established for optimal recovery as well as the minimum-weight decoder. The key technical ingredient is what we call a \emph{cluster resummation}: rather than bounding superposed error configurations one by one, we isolate a large connected error cluster in the channel expansion and exactly resum all errors disconnected from it before taking norms. Standard cluster counting then yields exponential suppression.
This work resolves a longstanding challenge in fault tolerance theory, providing general robustness guarantees for qLDPC codes against coherent noise and laying a rigorous foundation for future studies of fault-tolerant quantum technologies.
\end{abstract}

\maketitle


\section{Introduction}

Physical quantum computers are inevitably affected by noise and errors arising from decoherence and imperfect control. Without adequate protection, the resulting errors can accumulate during a computation and compromise the reliable execution
of large-scale quantum algorithms
\cite{Shor1995,Preskill1998}. Realizing scalable quantum computers requires fault-tolerant quantum computation (FTQC) with sufficiently low overhead~\cite{Shor1996,Preskill1998}. Central to this goal is quantum error correction (QEC), which encodes logical qubits redundantly across physical qubits and uses error syndromes to detect and correct errors without revealing the encoded information~\cite{Shor1995,Steane1996,CalderbankShor1996,KnillLaflamme1997}.
A crucial question for scalability is whether increasing the code size can make the logical error rate arbitrarily small at a fixed nonzero physical noise strength~\cite{DennisEtAl2002,Gottesman2014}. A fault-tolerance threshold extends this guarantee to entire quantum computations, with rigorous results established for concatenated codes~\cite{TerhalBurkard2005,AliferisGottesmanPreskill2006,AharonovBenOr2008,YamasakiKoashi2024}.

Quantum low-density parity-check (qLDPC) codes provide a leading framework in this context. These are stabilizer codes admitting generating sets in which each check acts on only a constant number of physical qubits and each qubit participates in only a constant number of checks, independently of the code size~\cite{Gottesman1997,BreuckmannEberhardt2021}. This class encompasses both topological codes and nongeometric constructions, and recent breakthroughs have produced asymptotically good qLDPC families~\cite{PanteleevKalachev2022,LeverrierZemor2022,DinurEtAl2023}. Understanding whether and under what conditions these codes admit nonzero thresholds under realistic noise is a problem central to their use in scalable fault-tolerant quantum computation.

The threshold theory for stochastic noise is well established. In the qLDPC setting, both memory-level thresholds~\cite{DennisEtAl2002,KovalevPryadko2013,DumerKovalevPryadko2015,kovalevNumericalAnalyticalBounds2018,chubbStatisticalMechanicalModels2021} and thresholds for full fault-tolerant quantum computation~\cite{Gottesman2014,NguyenPattison2025,TamiyaEtAl2026} have been extensively studied using analytical and numerical methods. In particular, stochastic Pauli noise models fit within the stabilizer formalism, permitting large-scale numerical simulation of stabilizer circuits using the Gottesman--Knill theorem~\cite{Gottesman1997,AaronsonGottesman2004}.

The case of coherent noise is fundamentally different and much less understood. Miscalibrations of control phases produce unitary errors~\cite{FengEtAl2016,PollorenoYoung2022,BravyiEtAl2018}. Such noise contains
quantum superpositions of {error} patterns. Although syndrome measurement removes coherence between distinct syndrome sectors, amplitudes of errors with the same syndrome can still interfere. Consequently, stochastic threshold arguments based on probabilities of error patterns do not apply. 
Generic non-Clifford errors also preclude the use of efficient stabilizer simulation.
Several lines of work have established a partial understanding of robustness under coherent noise in specific settings.
Memory-level analyses of stabilizer codes have characterized the suppression of logical coherence by error correction under suitable assumptions~\cite{BealeEtAl2018,IversonPreskill2020}.
Tensor network simulations have also demonstrated coherent error suppression in surface codes~\cite{DarmawanPoulin2017}. Further numerical and statistical-mechanical analyses based on tensor network methods or free-fermion mappings available for specific noise models  have provided evidence for  coherent noise thresholds in surface and toric codes~\cite{BravyiEtAl2018,VennBeri2020,VennEtAl2023,BehrendsBeri2025a,BehrendsBeri2025b,MartonAsboth2023,YangEtAl2026}.
More recently, fault-tolerant quantum computation with constant qubit overhead has been established under general circuit-level noise, including coherent errors, using constant-rate qLDPC codes with linear distance~\cite{ChristandlFawziGoswami2025}. However, the argument relies on linear distance and does not cover many important qLDPC constructions such as topological codes~\cite{Kitaev2003,DennisEtAl2002,BombinMartinDelgado2006Color,BombinMartinDelgado2006Rate,BreuckmannTerhal2016}, quantum expander codes~\cite{LeverrierTillichZemor2015}, and known qLTC~\cite{DinurLinVidick2024}. Such codes also underpin important fault-tolerant schemes, including some of the lowest overhead schemes
currently known~\cite{NguyenPattison2025,TamiyaEtAl2026}.
A general rigorous understanding of coherent noise thresholds for qLDPC codes remains elusive.

Here we prove general nonzero code capacity thresholds for any family of
stabilizer qLDPC codes with distance $d = \Omega(\log n)$ under unknown local coherent noise. The noise is represented by an arbitrary constant-depth circuit of gates $e^{-\mathrm{i}\eta h}$, where the Hermitian generators $h$ have uniformly bounded operator norm and support size. After the noise, all stabilizer checks are measured perfectly, followed by a completely positive trace-preserving (CPTP) recovery channel.
We first establish the threshold for optimal recovery, which defines the best achievable error correction performance and provides a fundamental information-theoretic benchmark. We further show that the threshold persists under minimum-weight decoding, demonstrating that it is not merely an idealized phenomenon but can also hold for natural concrete decoding schemes without prior knowledge of the coherent noise.

Our analysis adopts an approximate quantum error correction (AQEC) viewpoint, bounding the optimal recovery error through the complementary channel, which captures the logical information leaked to the environment~\cite{BenyOreshkov2010,FaistEtAl2020,KongLiu2022}. AQEC measures have been used to characterize decoder-independent error thresholds~\cite{ZhaoLiu2024,ColmenarezEtAl2024,Lee2025,NiwaLee2025,ColmenarezKimMuller2025,Kim2026}; here we use the diamond distance, which is well suited to coherent logical noise~\cite{KretschmannSchlingemannWerner2008}. Specifically, we prove that there exists a constant $\eta_{\rm th}>0$, independent of $n$, such that for $|\eta|<\eta_{\rm th}$, the logical error measured in diamond distance is exponentially small in the code distance. Consequently, logarithmic code distance suffices to suppress the logical error.
The proof roughly goes as follows. We expand the coherent error channel into channel error configurations and show that subdistance connected components carry no logical information after syndrome measurement. Employing a technique that we dub \emph{cluster resummation}, we isolate one large connected component while resumming all disconnected components before taking norms. Standard cluster counting then yields exponential suppression. The proof for minimum-weight decoding uses similar resummation techniques and also yields exponential suppression, albeit with a larger error bound and a smaller guaranteed threshold. Our channel expansion method can be readily applied to general local completely positive trace-preserving (CPTP) errors. These advances fill a longstanding gap in the rigorous understanding of coherent error thresholds for general qLDPC codes and provides an analytical framework for extending the analysis to more realistic QEC and FTQC settings.

The remainder of the paper is organized as follows.  Sec.~\ref{sec:setting}
formally defines the error model, recovery criterion, and states the main theorems.  Sec.~\ref{sec:strategy}  introduces the proof strategy of optimal recovery threshold in the AQEC framework. Sec.~\ref{sec:discussion} concludes with  discussion and outlook. In the Appendix, we present the comprehensive proof and technical details, as well as  a parallel proof for the minimum-weight decoder setting.

\section{Setting and main results}
\label{sec:setting}

\subsection{Codes and the check connectivity graph}

Let $\cC_n$ be a qLDPC code with
parameters $\qparams{n}{k}{d}$, and let $\cG_n$ be stabilizer generator set of $\cC_n$. Suppose that each check acts on at most $w$ qubits and each
physical qubit participates in at most $\ell$ checks, where
$w$ and $\ell$ are constants independent of $n$.

Define the \emph{check connectivity graph} $\Gamma_n$ on the
physical qubits, with two qubits adjacent if they share a check
in $\cG_n$. Its maximum degree therefore satisfies \(\deg(\Gamma_n)\le \ell(w-1)\).

Throughout this paper, we consider the following \emph{local coherent error} model. 
Fix positive integers $D$ and $r$, independent of $n$. 
Each error realization is described by a depth-$D$ unitary circuit $U_n(\eta)=U_{n,D}(\eta)\cdots U_{n,1}(\eta)$. Each layer has the form $U_{n,t}(\eta)=\prod_{\alpha\in A_{n,t}}e^{-\mathrm{i}\eta h_{n,t,\alpha}}$ and consists of gates acting on pairwise disjoint sets of at most $r$ qubits. The generators $h_{n,t,\alpha}$ are arbitrary Hermitian operators satisfying $\norm{h_{n,t,\alpha}}\le1$ and $1\le \lvert\operatorname{supp}(h_{n,t,\alpha})\rvert\le r$. Gates in different layers need not commute.
We stress that the generators need not be Pauli operators, so the model goes beyond simple Pauli rotations.  Moreover, the circuit architecture may vary with $n$,
with $D$ and $r$ fixed, and our results hold uniformly over all allowed noise realizations.

In particular, this definition naturally captures coherent gate miscalibration. A noisy logical gate implemented by a constant-depth circuit can be decomposed into the ideal logical gate followed by a local coherent error of constant depth and bounded support. More broadly, this decomposition extends our code capacity analysis to coherent imperfections in such logical gate implementations, thereby linking the present result to the study of fault tolerance under coherent noise. A more rigorous statement and proof can be found in Appendix~\ref{app:common-setting}.

\subsection{Syndrome measurements and recovery criteria}
 
Unlike stochastic errors, a coherent error \(U\) acts unitarily on the code state without leaking logical information to an environment.
If $U$ were known explicitly  and exactly,
one could in principle recover the original
code state by simply applying \(U^\dagger\).
In practice, however, we do not generally have full knowledge of $U$, and implementing its inverse would itself be subject to control errors.
Therefore, the relevant setting is standard error correction in which we measure syndromes after the coherent error, rather than being able to directly reverse $U$. The measurement can leak logical information to the environment, making recovery nontrivial. We consider optimal recovery as an fundamental benchmark of what can be recovered in principle, and minimum-weight decoding as a concrete recovery rule determined by the code. The former yields a tighter logical error bound and a higher threshold, while the latter has more direct practical
relevance.

Let $\{\Pi_s:s\in\Sigma_n\}$ be the projectors onto the syndrome sectors labeled by $s$. The measurement channel is {$\cM_n(Y)=\sum_{s\in\Sigma_n}\ket{s}\!\bra{s}_S\otimes \Pi_sY\Pi_s$}.
The recovery map has access to both the classical syndrome register \(S\)
and the postmeasurement physical system.
Syndrome measurements remove coherence between different syndrome
sectors while preserving coherence within each sector.

We now define the logical error measure for both the optimal recovery and the minimum-weight decoding.
For a given fully specified local coherent error $U$, let
$\cN_{n,U,\eta}=\cM_n\circ\Ad_{U_n(\eta)}\circ\cE_n$. The
corresponding \emph{optimal recovery logical error} is defined by
\begin{equation}
    \eps_n^\star(U,\eta)
    =
    \inf_{\cR}
    \frac12\norm{\cR\circ\cN_{n,U,\eta}-\id}_{\diamond},
    \label{eq:optimal-error}
\end{equation}
The infimum is over CPTP recovery maps from the syndrome register and
the postmeasurement physical system to the logical space.
The order of quantifiers in Eq.~\eqref{eq:optimal-error} is essential: the recovery is optimized for each fixed noise realization before a supremum over all allowed local coherent errors is taken. 
Let $\cR^{\rm MW}$ denote the recovery map which, for each observed syndrome $s$, applies a minimum-weight Pauli correction with the same syndrome.
This
\emph{minimum-weight decoding logical error} is defined by
\begin{equation}
    \eps_n^{\rm MW}(U,\eta)
    =
    \frac12\norm{
    \cR^{\rm MW}\circ\cN_{n,U,\eta}-\id
    }_{\diamond},
    \label{eq:MW-error}
\end{equation}
This recovery depends only on the code and requires no knowledge of
the noise model.

\subsection{Coherent error threshold results}

We now state our main  threshold results.
\begin{theorem}[Optimal recovery threshold, informal]
\label{thm:informal}
Suppose a stabilizer qLDPC code family has distance
$d_n=\Omega(\log n)$.
There exist constants $\eta_{\rm th}>0$, $C>0$, and $c>0$ such that
\begin{equation}
    \sup_U \eps_n^\star(U,\eta)
    \le C \sqrt{n} e^{-c d_n}
    \label{eq:informal-optimal}
\end{equation}
for every $n$ and every $|\eta|<\eta_{\rm th}$.  The supremum is taken over all realizations
of local coherent error while fixing $\eta$. 
\end{theorem}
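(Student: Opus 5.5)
We plan to work in the complementary-channel (AQEC) picture. Write the isometric encoding $\cE_n$, the noise $\Ad_{U_n(\eta)}$, and the syndrome measurement $\cM_n$, with the measurement dilated so that the syndrome register $S$ together with the measurement purification forms the environment $E$. By the Bény--Oreshkov duality in diamond norm, $\eps_n^\star$ is bounded (up to a square root) by the distance of the complementary channel $\widehat{\cN}$ to a constant (replacement) channel $\rho\mapsto\omega_E$. Concretely we aim for
\begin{equation}
\eps^\star_n \le \sqrt{\min_{\omega}\norm{\widehat{\cN}-\cT_\omega}_\diamond},
\end{equation}
which is where the $\sqrt{n}$ and the halved exponent come from. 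The task therefore reduces to showing that the leaked quantities $\Tr_{\rm phys}[\Pi_s U \bar\rho U^\dagger \Pi_s]$ and their off-diagonal analogues depend on the logical input $\bar\rho$ only up to $e^{-c' d}$ times polynomial factors.

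The first step is the cluster expansion of the noise. Expand each gate as $e^{-\mathrm{i}\eta h}=1+(e^{-\mathrm{i}\eta h}-1)$, with $\norm{e^{-\mathrm{i}\eta h}-1}\le|\eta|$. Then $U=\sum_{F}E_F$, where $F$ ranges over sets of "active" gates and $E_F$ is an ordered product. Its support $\supp(F)$ decomposes into connected components with respect to a fattened graph: qubits are adjacent if they are joined in $\Gamma_n$ or share a gate, and gates in consecutive layers that overlap are linked. Applying this to both the ket and the bra gives a double sum over pairs $(F,F')$, and we group everything by the connected components of $\supp F\cup\supp F'$.

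The key lemma is that if every component has size below $d$, the term carries no logical information after projection onto a syndrome sector. The intended argument is as follows. Operators supported on a set that is disconnected in $\Gamma_n$ and of size less than $d$ act on the code space, after $\Pi_s$, in a way that factorizes across components. Each piece $\Pi_s A\, \bar P$ with $|\supp A|<d$ restricted to the code space behaves like a correctable error, so its contribution to $\bar P A^\dagger\Pi_s B \bar P$ is proportional to $\bar P$ by the Knill--Laflamme condition. Since syndromes of disconnected pieces combine independently, this extends to the whole configuration. Some care is needed here, because a union of small components can be large. The point to establish is that any undetectable logical operator supported in a union of components must have a restriction to a single component that is itself logical, which uses the connectivity of checks in $\Gamma_n$.

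The second step, and in my view the main obstacle, is cluster resummation. We cannot sum norms over all $F$, since that gives $(1+|\eta|)^{\#\text{gates}}$, which is exponential in $n$. Instead we write the logical-dependent part of the leakage as a sum over a distinguished large connected cluster $K$ of size at least $d$ containing a chosen root qubit. Summing over the root gives the factor $n$. For a fixed $K$, we resum exactly all configurations disjoint from $K$ and not adjacent to it. Those configurations recombine into the unitary gates outside $K$ (a partial $U$), whose norm is $1$, so taking norms only after resumming leaves $\norm{\text{cluster part}}\le |\eta|^{|K|}$. Subtleties arise from non-commuting layers, where we need a light-cone ordering so that the outside gates factor, and from the requirement that the neighborhood of $K$ be excluded consistently to avoid double counting. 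We would handle the latter by an inclusion--exclusion, or by choosing $K$ as the component of the root in the combined support.

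Finally, standard lattice-animal counting on a graph of bounded degree $\Delta$ gives at most $(e\Delta)^{m}$ connected sets of size $m$ containing a root. Summing $n\sum_{m\ge d}(e\Delta|\eta|)^{m}\cdot\mathrm{poly}$ yields $\le C'ne^{-2cd}$ for $|\eta|<\eta_{\rm th}$ small, and the square root from the duality then gives $C\sqrt n e^{-cd}$. The hypothesis $d=\Omega(\log n)$ is used only to make the bound nontrivial.
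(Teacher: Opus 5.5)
Your outline has the same three ingredients as the paper's proof. These are the Kretschmann--Schlingemann--Werner bound on the complementary (syndrome) channel; cancellation of configurations whose clusters are all subdistance, where the fact you isolate is exactly the Kovalev--Pryadko lemma the paper uses; and resummation plus animal counting. The step that is not yet a proof is the resummation itself.

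Fixing a large cluster $K$ and summing freely over everything outside $N[K]$ gives $\sum_K F_K\circ\Ad_{U_{\Lambda\setminus N[K]}}=\sum_B\nu_m(B)F_B$. Here $U_W$ is the circuit with the gates outside $W$ deleted. This sum counts every configuration with several large clusters more than once, so it is not $\Ad_U-\cT_{\mathrm{small}}$.

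Your second remedy, taking $K$ to be the component of a root, does not fix this. Summing over roots still produces a multiplicity (the number or total size of the large clusters). Any rule that singles out one cluster globally, e.g.\ the one containing the smallest location, constrains the outside configurations, so they no longer resum to a unitary without a further expansion.

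Your first remedy, inclusion--exclusion, does work and should be carried out. Since $1-0^{\nu}=\sum_{k\ge1}(-1)^{k+1}\binom{\nu}{k}$, in the paper's channel-level notation $F_\alpha=\Ad_{u_\alpha}-\id$ one has
\[
\Ad_U-\cT_{\mathrm{small}}=\sum_{k\ge1}(-1)^{k+1}\sum_{\{K_1,\dots,K_k\}}F_{K_1\cup\cdots\cup K_k}\circ\Ad_{U_{\Lambda\setminus N[K_1\cup\cdots\cup K_k]}}.
\]
The inner sum runs over families of pairwise non-adjacent connected location sets of size at least $m=\lceil d/r\rceil$. The outside is now unconstrained and collapses to a unitary channel, so $\norm{\Ad_U-\cT_{\mathrm{small}}}_\diamond\le\sum_{k\ge1}\Xi^k/k!=e^{\Xi}-1$. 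Your ket/bra version works the same way.

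The paper takes a different route at this point. It interpolates with $\cT_W(t)=\sum_{B\subseteq W}t^{\nu_m(B)}F_B$, whose derivative marks one large cluster $C$ and resums the rest into $\cT_{W\setminus N_W[C]}(t)$. For $t<1$ that remainder is not a unitary channel. It is controlled by the self-consistent bound $M\le 1+\Xi M$ on $M=\max_{W,t}\norm{\cT_W(t)}_\diamond$, which gives $\Xi/(1-\Xi)$ for $\Xi<1$. Inclusion--exclusion avoids this bootstrap and holds for every $\Xi$; it is essentially the $t=1+(t-1)$ device the paper itself uses for minimum-weight decoding. Both routes end in $\eps^\star\lesssim\sqrt{\Xi}$.

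Two smaller corrections. First, in your double expansion a cluster location can be active in the ket, the bra, or both, so the cluster weight is $(2|\eta|+\eta^2)^{|K|}$, not $|\eta|^{|K|}$; the channel-level expansion gives $(2\sin|\eta|)^{|K|}$. Second, $K$ must be a set of gate locations of size at least $\lceil d/r\rceil$, counted in a graph with at most $Dn$ vertices and degree at most $D(\kappa+1)r-1$; this is where the exponent $d/(2r)$ comes from.

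No light-cone ordering is needed. Locations that are non-adjacent in the interaction graph have disjoint supports, so their factors commute, and the free outside sum is simply the circuit with the excluded gates removed.
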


\begin{theorem}[Minimum-weight decoding threshold, informal]
\label{thm:MW-informal}
Suppose a stabilizer qLDPC code family has distance
$d_n=\Omega(\log n)$.
There exist constants $\eta^{\rm MW}_{\rm th}>0$, $C_{\rm MW}>0$, and $c_{\rm MW}>0$
such that
\begin{equation}
    \sup_U \eps_n^{\rm MW}(U,\eta)
    \le C_{\rm MW} n e^{-c_{\rm MW} d_n}
    \label{eq:informal-mwd}
\end{equation}
for every $n$ and every $|\eta|<\eta^{\rm MW}_{\rm th}$.
The supremum is taken over all realizations of local coherent error
while fixing $\eta$.
\end{theorem}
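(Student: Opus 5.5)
We prove Theorem~\ref{thm:MW-informal} by bounding the error of $\cR^{\rm MW}$ directly, without passing to the complementary channel. We will use the same channel expansion as for Theorem~\ref{thm:informal}, now combined with a case analysis on which configurations the decoder corrects exactly.

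\textbf{Step 1: Pauli expansion on the light-cone graph.} Expand every gate as $e^{-\mathrm{i}\eta h}=\sum_{P} a_P(\eta) P$, where $P$ runs over Paulis supported on $\operatorname{supp}(h)$. We have $a_I=1+O(\eta^2)$, and the total weight of non-identity terms is $\sum_{P\neq I}|a_P|\le 4^r|\eta|$. Then
\[
\Ad_{U}(\rho)=\sum_{E,E'} c_E \bar c_{E'}\, E\rho E'^\dagger ,
\]
where $E$ and $E'$ are products over chosen gates, i.e.\ \emph{configurations}, each labeled by a set $\gamma$ of active gates on the ket or bra side.

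Build a graph $\Gamma'_n$ on gate locations: two locations are adjacent if their supports, enlarged by one step of $\Gamma_n$ and taken across the $D$ layers, overlap. This graph has bounded degree $\Delta=\Delta(w,\ell,r,D)$. A connected set of size $m$ containing a fixed vertex can therefore be chosen in at most $(e\Delta)^m$ ways.

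\textbf{Step 2: Good and bad configurations.} Call a pair $(\gamma,\gamma')$ \emph{good} if every connected component of $\gamma\cup\gamma'$ in $\Gamma'_n$ has support of size below $d_n/2$, up to a constant factor from $r$. For a good pair, the Pauli errors $E$ and $E'$ decompose componentwise.

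Within each syndrome sector, the minimum-weight correction restricted to a small component differs from the true error on that component by a stabilizer. This uses an LDPC clustering lemma: on a component with small syndrome-connected support, minimum weight yields a correction of comparable weight confined near the component, and the product of the two has weight $<d_n$ with trivial syndrome, hence lies in the stabilizer group.

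A subtlety is that MW decoding is global. We will need the standard fact, as in Kovalev--Pryadko, that the MW correction's difference from the error splits into connected pieces, each adjacent to an error cluster of comparable size. Consequently, a logical failure forces a connected set of size $\gtrsim d_n$ in the union of error and correction, of which at least half is error.

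So on good pairs, $\cR^{\rm MW}\circ\cM$ maps $E\rho E'^\dagger$ to $\rho\cdot\Tr$-consistent terms. Because the syndromes of $E$ and $E'$ must coincide to survive $\cM$, the corrections of the ket and bra sides coincide and the logical action is the identity.

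\textbf{Step 3: Resummation over good pairs.} Here lies the obstacle. Naively summing $|c_E||c_{E'}|$ over bad pairs gives $\bigl(\sum|c|\bigr)^2$ weight on everything, which is not small.

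We will use cluster resummation. Every bad pair contains a maximal connected cluster $K$ with $|K|\ge c' d_n$. Fix $K$ and its root vertex (at most $n$ choices, which is the source of the factor $n$), and resum exactly over all gate choices outside $K\cup\partial K$.

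The resummed outer part reassembles into a product of the actual unitaries, $\Ad$ of gates away from $K$, which is a CPTP map of norm $1$. The inner part restricted to $K$ carries weight at most $(4^r|\eta|)^{|K|}$, up to boundary factors $(1+O(\eta))^{|\partial K|}\le e^{O(\eta)|K|}$.

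The step I expect to be hardest is showing that the decoder's output, conditioned on the cluster $K$, can still be controlled when the outer part is resummed. MW decoding depends on the global syndrome, so "the error outside $K$" is not separable. I would handle this by the triangle inequality in diamond norm, bounding the contribution of each $K$ by $\norm{\text{inner}}_{1\to1}\cdot\norm{\text{outer}}_\diamond$ without tracking the decoder. This sacrifices interference but costs only a factor 2 per configuration, which is why the MW threshold is smaller than the optimal one.

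\textbf{Step 4: Counting.} Summing over roots and clusters of size $m\ge c'd_n$ gives
\[
\eps^{\rm MW}_n\le \sum_{\text{roots}}\sum_{m\ge c'd_n}(e\Delta)^{2m}\bigl(C_0|\eta|\bigr)^m\le C_{\rm MW}\, n\, e^{-c_{\rm MW} d_n}
\]
whenever $|\eta|<\eta^{\rm MW}_{\rm th}:=\bigl(2e^2\Delta^2C_0\bigr)^{-1}$. Here the factor $2m$ accounts for the ket and bra copies together with the decoder's correction cluster. Taking the supremum over $U$ is uniform, since all constants depend only on $w,\ell,r,D$.
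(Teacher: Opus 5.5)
There is a genuine gap in Steps 2--3. Your good/bad split uses clusters of the error configuration $\gamma\cup\gamma'$ alone, and the ``clustering lemma'' you invoke (the minimum-weight correction stays near each small error component) is false. Here is a counterexample in the toric code. Put $Z$ errors on blocks of $R+2$ consecutive edges of a straight non-contractible cycle, separated by gaps of $R+1$ edges. Each block is a constant-size cluster, and consecutive blocks are at $\Gamma$-distance at least $R+2$, so they are separate clusters in your $\Gamma'_n$. Yet every syndrome defect has its unique nearest partner across a gap. The minimum-weight correction is therefore exactly the set of gap edges, and $C_s^\dagger E$ is the logical $\bar Z$. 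Such a pair is ``good'' in your sense but is decoded to a logical error.

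The correct criterion, which you state in your ``subtlety'' paragraph and which the paper uses, concerns components of $\Gamma[S_B\cup\supp C_s]$. If all have fewer than $d$ qubits, then $\cR_s\circ F_B\circ\cE\propto\id_A$, and by minimality of $C_s$ every nonzero term has $|K\cap S_B|\ge|K|/2$. A large component of this kind can be assembled from many small error clusters, which your count of error clusters of size $\ge c'd_n$ never sees. This criterion also depends on $s$, so you cannot drop the decoder and bound $\norm{\text{inner}}\cdot\norm{\text{outer}}$. The terms to be bounded are selected by the syndrome, which a priori destroys the resummation of the outer gates into unitary channels.

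The missing idea is the paper's observation about a fixed compatible family $\mathcal J$ of connected qubit sets with union $J$. The condition that every $K\in\mathcal J$ is a component of $\Gamma[S_B\cup\supp C_s]$ forces $B\cap\Lambda_{\partial J}=\varnothing$. Otherwise it depends only on $T=B\cap\Lambda_J$, and on $s$ only through membership in a set $\Sigma_J(T)$. Summing over all $B_{\rm out}\subseteq\Lambda\setminus\Lambda_{J\cup\partial J}$ and $s\in\Sigma_J(T)$ then gives $\cR_{\Sigma_J(T)}\circ F_{J,T}\circ\cE$. Here $\cR_{\Sigma_J(T)}=\sum_{s\in\Sigma_J(T)}\cR_s$ is completely positive and trace-nonincreasing, so it has norm at most one. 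Also $\norm{F_{J,T}}_\diamond\le a^{|T|}$, with $|T|\ge|J|/(2r)$ by half-coverage. This yields the weight $(2^Da^{1/(2r)})^{|J|}$, i.e.\ only $|\eta|^{1/(2r)}$ per physical qubit rather than your $O(|\eta|)$. The factor $n$ comes from rooting connected sets in $\Gamma$.

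Separately, ``fix a maximal cluster $K$ and resum everything outside $K\cup\partial K$'' counts a configuration with $j$ large clusters $j$ times, because the resummed exterior still contains large clusters. Summing over $K$ therefore does not reproduce the bad part. The paper resolves this with the interpolation $t^{|\mathcal C_d(B,s)|}$ and by expanding $t^{|\mathcal C_d(B,s)|-1}$ as $\sum_{\mathcal J}(t-1)^{|\mathcal J|}$ over subfamilies of the remaining large components, which gives the bound $e^{\Xi_{\rm MW}}-1$. You need this or an equivalent inclusion--exclusion.
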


A proof sketch of the optimal recovery threshold is given in Sec.~\ref{sec:strategy}, while full proofs of both theorems are given in Appendices~\ref{app:optimal} and \ref{app:minimum-weight}.
Our bounds contain only polynomial prefactors in \(n\) and are independent of \(k\), while retaining exponential suppression in the code distance. To our knowledge, these are the most general bounds on the logical error in diamond distance under coherent noise with this scaling. This behavior closely parallels logical error suppression under local stochastic noise~\cite{Gottesman2014} and substantially improves on previous bounds containing factors exponential in \(n\)~\cite{HuangDohertyFlammia2019,ChristandlFawziGoswami2025}.
Consequently, our results apply to qLDPC families with both vanishing or nonvanishing code rates as long as the distance is not worse than logarithmic. Relevant examples include topological codes~\cite{Kitaev2003,DennisEtAl2002,BombinMartinDelgado2006Color, BombinMartinDelgado2006Rate,BreuckmannTerhal2016}, bivariate bicycle codes~\cite{BravyiEtAl2024BB}, hypergraph product codes~\cite{TillichZemor2014}, quantum expander codes~\cite{LeverrierTillichZemor2015}, asymptotically good qLDPC codes~\cite{PanteleevKalachev2022,LeverrierZemor2022,DinurEtAl2023}, and almost-good qLTCs~\cite{DinurLinVidick2024}. For any such family, our theorems establish a nonzero threshold against arbitrary local coherent errors.
This broad scope of validity encompasses practically relevant constructions offering features such as geometric locality, local testability, or low implementation overhead, substantially expanding the applicability of the previous threshold guarantee relying on constant rate and linear distance~\cite{ChristandlFawziGoswami2025}. Many of these constructions play central roles in current proposals for QEC and FTQC architectures~\cite{GoogleQuantumAI2023,BluvsteinEtAl2024,XuEtAl2024, BravyiEtAl2024BB}.

Compared with the minimum-weight decoding bound in Eq.~\eqref{eq:informal-mwd}, the optimal recovery bound in Eq.~\eqref{eq:informal-optimal} has a smaller polynomial prefactor, \(\sqrt{n}\) instead of \(n\). Our estimates also give \(c>c_{\rm MW}\) and \(\eta_{\rm th}>\eta^{\rm MW}_{\rm th}\). Both thresholds are lower bounds on the optimal code capacity threshold.

 We express our main bounds in terms of diamond distance, which is particularly well suited to QEC because it controls the worst case logical error even when the encoded block is entangled with an arbitrary reference system~\cite{KretschmannSchlingemannWerner2008,IversonPreskill2020}. Diamond distance does not increase when ideal channels are applied before or after the noisy channel, and errors accumulate at most additively under successive composition. These properties allow a code capacity bound to be incorporated into a larger FTQC analysis~\cite{IversonPreskill2020}. Composing a code capacity bound with surrounding ideal channels therefore does not increase the error. The diamond distance characterization is especially important for coherent errors because average infidelity can substantially underestimate their worst-case effects. For small unitary errors, the diamond distance scales as the square root of the infidelity, whereas for stochastic Pauli noise the diamond distance scales linearly with the infidelity~\cite{KuengEtAl2016,IversonPreskill2020}. This difference in scaling is relevant both theoretically and experimentally. Experiments have observed substantial reductions in diamond distance after noise randomization, with little change in average fidelity~\cite{WareEtAl2021}.

It is worth emphasizing that locality here refers to bounded support rather than geometric proximity. 
Each term in the coherent error acts on at most \(r\) qubits, but when a spatial geometry is present, those qubits can be arbitrarily far apart.
Thus, even for geometrically local topological codes, the noise need not respect the underlying geometry. Our results therefore cover coherent errors of bounded support with arbitrarily long spatial range.

The logical error bound holds uniformly over all allowed coherent errors. Subject only to the stated noise model constraints, the gate arrangement gates, supports, and generators may otherwise be chosen arbitrarily. No specific microscopic noise model or assumptions of independence or phase randomness are required. The results therefore apply even when the detailed error structure depends on the device and is difficult to characterize experimentally.

\section{Proof sketch}
\label{sec:strategy}
Here we sketch the proof of the optimal recovery threshold, which captures the key insights underlying our approach. The proof for minimum-weight decoding follows a similar strategy while requiring additional combinatorial arguments. Complete proofs of both results are given in Appendices~\ref{app:optimal} and \ref{app:minimum-weight}. Our argument roughly proceeds in three steps.
\begin{enumerate}
\item
We bound the optimal recovery error in terms of how strongly
the syndrome distribution depends on the logical input. 

\item
Code distance ensures that configurations with only small connected components contribute no logical dependence.

\item
We develop a new resummation technique to bound the total contribution of configurations containing large connected components. Counting the possible large components then yields the threshold.
\end{enumerate}
For the remainder of this sketch, we fix a certain noise realization and suppress the dependence on $n$ and $\eta$ until the final estimate.

We first characterize recoverability using a complementary channel $\cQ_U$, which captures logical information leaked into the environment. In our setting, $\cQ_U$ outputs only the syndrome distribution. If this distribution is independent of the logical input, exact recovery is possible.
Quantitatively, \cite[Theorem~3]{KretschmannSchlingemannWerner2008} gives
\begin{equation}
 \eps^\star(\cN_U)
 \le
 \sqrt{
 \norm{\cQ_U\circ(\id-\cD_A)}_\diamond
 },
 \label{eq:information-disturbance}
\end{equation}
where $\cD_A(X)=\Tr(X)I_A/\dim A$.
The norm measures how strongly the syndrome distribution depends on the logical input and could be viewed as an AQEC measure.
The construction of $\cQ_U$ is given in Appendix~\ref{app:syndrome-recovery}.

Expand the noise channel using the local {error} maps
$F_\alpha\coloneqq \Ad_{u_\alpha}-\id$:
\begin{equation}
 \Ad_U=\sum_{B\subseteq\Lambda}F_B.
 \label{eq:error-map}
\end{equation}
Here $\Lambda$ labels the circuit locations, $u_\alpha$ is
the gate at location $\alpha$, and $F_B$ is the composition
of the maps $F_\alpha$ for $\alpha\in B$ in circuit order,
with $F_\varnothing=\id$. For every nonempty \(B\), \(F_B\) is a Hermiticity-preserving, trace-annihilating map, which we treat directly as an elementary error configuration rather than expanding in the Pauli basis. This formulation extends naturally to more general noise channels. In particular, the main theorem could be generalized to local CPTP errors, see Appendix \ref{app:local-cptp}.

We organize each error configuration $B$ using an \emph{interaction graph} $H$ on $\Lambda$, in which two locations are adjacent if their supports overlap or both intersect the support of a common stabilizer check.
The connected components of the induced graph $H[B]$ define the {error} components
of $B$. Since each gate acts on at most $r$ qubits, a component
with fewer than $m\coloneqq \lceil d/r\rceil$ locations involves fewer
than $d$ qubits.
If all components are small, the configuration contributes no logical dependence, even when its total support size exceeds $d$. This is because no stabilizer check meets two distinct components, so the code distance argument applies to each component separately.
Let $\cT_{\mathrm{small}}$ denote the sum of the corresponding maps
$F_B$. Hence it suffices to bound
$\norm{\Ad_U-\cT_{\mathrm{small}}}_\diamond$.

For a certain large component $C$, the expansion still includes
all {error} components disconnected from $C$.
At each available location, the identity term has diamond norm one and the {error} term has diamond norm at most $a\coloneqq 2\sin|\eta|$ for $|\eta|\le\pi/2$.
If we bound these terms separately, each available location contributes a factor of $1+a$, producing an overall bound that can grow exponentially with the circuit size. We avoid this growth with a new resummation technique that sums over the disconnected errors before taking norms.
Specifically, let $\nu_m(B)$ be the number of connected components of $H[B]$
containing at least $m$ locations.
We introduce the interpolation {for each subset of circuit locations $W\subseteq \Lambda$}
\begin{equation}
\cT_W(t)
\coloneqq 
\sum_{B\subseteq W}
t^{\nu_m(B)}F_B,
\qquad
0\leq t\leq1,
 \label{eq:sketch-interpolation}
\end{equation} 
with the convention $0^0=1$.
Each large component contributes one factor of $t$, so
\(\cT_\Lambda(0)=\cT_{\mathrm{small}},\ \cT_\Lambda(1)=\Ad_U.\)
The quantity we need to control is therefore \(\norm{\cT_\Lambda(1)-\cT_\Lambda(0)}_\diamond\).

\begin{figure}[t]
    \centering
    \includegraphics[width=\columnwidth]{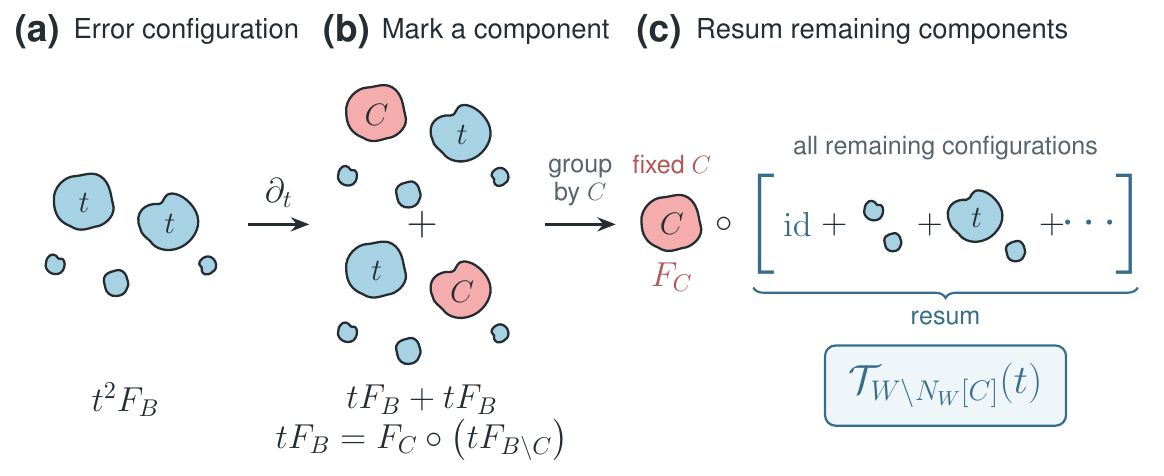}
    \caption{ Schematic illustration of cluster resummation. (a) A configuration in Eq. \eqref{eq:sketch-interpolation} that contains two large components and several small components. Each large component provides a factor $t$. (b) Differentiating with respect to $t$ is equivalent to marking each large component $C$ of $B$ in turn and removing its associated factor of $t$. (c) The terms are grouped by the marked component $C$, and $F_C$ is factored out. Summing the remaining terms, with their associated powers of $t$, over all configurations $B\setminus C\subseteq W\setminus N_W[C]$ gives $\cT_{W\setminus N_W[C]}(t)$, yielding Eq.~\eqref{eq:sketch-derivative-identity}.
    }
    \label{fig:cluster-resummation}
\end{figure}
{
The key observation is that differentiating Eq.~\eqref{eq:sketch-interpolation} isolates each large component of $W$ while exactly resumming other disconnected error configurations as another $\cT$ channel,
\begin{equation}
 \cT_W'(t)=
 \sum_{\substack{C\subseteq W: H[C]\ \text{\normalfont connected}\\|C|\ge m}}
 F_C\,\cT_{W\setminus N_W[C]}(t),
 \label{eq:sketch-derivative-identity}
\end{equation}
where $N_W[C]$ is the closed neighborhood of $C$ in $H[W]$.
We call this  \emph{cluster resummation}. This technique plays the role of the union bound in threshold proofs for local stochastic noise~\cite{Gottesman2014}, without introducing a factor exponential in the circuit size. The derivation of the cluster resummation is illustrated in Fig.~\ref{fig:cluster-resummation}, and a detailed proof is in Appendix \ref{app:cluster-resummation}.
}

\begin{theorem}[Resummation bound]
\label{thm:sketch-resummation}
For unitary noise circuit $U$ with interaction graph $H$,
suppose $\norm{F_\alpha}_\diamond\le a$ for every
$\alpha\in\Lambda$, and define
\begin{equation}
 \Xi \coloneqq 
 \sum_{\substack{
 C\subseteq\Lambda:\ H[C]\ \text{\normalfont connected}\\
 |C|\ge m}}
 a^{|C|}.
\end{equation}
If $\Xi<1$, then
\begin{equation}
 \begin{aligned}
 \norm{\Ad_U-\cT_{\mathrm{small}}}_\diamond
 \le \frac{\Xi}{1-\Xi}.
 \end{aligned}
 \label{eq:interpolation-bound}
\end{equation}
\end{theorem}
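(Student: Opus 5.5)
The plan is to use the cluster resummation identity, Eq.~\eqref{eq:sketch-derivative-identity}, as a differential inequality. We then integrate it from $t=0$ to $t=1$. Since $\cT_\Lambda(1)=\Ad_U$ and $\cT_\Lambda(0)=\cT_{\mathrm{small}}$, the fundamental theorem of calculus gives
\begin{equation*}
\Ad_U-\cT_{\mathrm{small}}=\int_0^1 \cT_\Lambda'(t)\,dt ,
\end{equation*}
and so $\norm{\Ad_U-\cT_{\mathrm{small}}}_\diamond\le \sup_{t\in[0,1]}\norm{\cT'_\Lambda(t)}_\diamond$. Each $\cT_W(t)$ is a polynomial in $t$, so differentiation and integration raise no issues.

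The central quantity is
\begin{equation*}
M\coloneqq\sup_{W\subseteq\Lambda}\ \sup_{t\in[0,1]}\norm{\cT_W(t)}_\diamond .
\end{equation*}
We take diamond norms in Eq.~\eqref{eq:sketch-derivative-identity}. Two facts are needed here. First, the diamond norm is submultiplicative under composition. Second, $\norm{F_C}_\diamond\le a^{|C|}$, which follows because $F_C$ is a circuit-ordered composition of the maps $F_\alpha$. Since $W\setminus N_W[C]\subseteq\Lambda$ and the connected sets $C\subseteq W$ are among those counted in $\Xi$, we get
\begin{equation*}
\norm{\cT_W'(t)}_\diamond\le \sum_{C}a^{|C|}\,\norm{\cT_{W\setminus N_W[C]}(t)}_\diamond\le \Xi M .
\end{equation*}

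It remains to bound $M$ itself, which is the main obstacle. The naive bound $\norm{\cT_W(t)}_\diamond\le\prod(1+a)$ is exactly what resummation is meant to avoid. Here I would bootstrap using $t=0$. At $t=0$ only configurations whose components are all small survive, but $\cT_W(0)$ is not obviously a norm-one map either. Instead, I would use $t=1$: $\cT_W(1)=\sum_{B\subseteq W}F_B$. This is the channel $\Ad_{U_W}$, where $U_W$ is the circuit with only the gates in $W$ kept and the others replaced by identity. Indeed, at each location $\alpha\in W$ we have $\id+F_\alpha=\Ad_{u_\alpha}$, and at locations outside $W$ the factor is $\id$, with the circuit ordering preserved. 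Hence $\norm{\cT_W(1)}_\diamond=1$ for every $W$. Integrating the derivative bound backward from $t=1$ then gives
\begin{equation*}
\norm{\cT_W(t)}_\diamond\le 1+(1-t)\,\Xi M\le 1+\Xi M ,
\end{equation*}
and taking the supremum yields $M\le 1+\Xi M$. Because $\Lambda$ is finite, $M$ is finite, so when $\Xi<1$ this rearranges to $M\le 1/(1-\Xi)$.

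Combining the steps gives $\norm{\Ad_U-\cT_{\mathrm{small}}}_\diamond\le \Xi M\le \Xi/(1-\Xi)$, which is Eq.~\eqref{eq:interpolation-bound}.

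The step I expect to need the most care is the derivative identity Eq.~\eqref{eq:sketch-derivative-identity}, which I take as given from the appendix. Within the present argument, two points must be checked. One is that the restricted circuits $U_W$ are genuine unitary channels with their ordering preserved. The other is that $F_C$ composed in circuit order with $\cT_{W\setminus N_W[C]}$ is what the identity means, since interleaving could matter. If the interleaving is not a plain composition, I would still bound the result by $a^{|C|}M$. To do so I would regard the interleaved product as a circuit on $W\setminus N_W[C]$ with the $F_\alpha$, $\alpha\in C$, inserted, and apply the same $t=1$ unitarity trick with those inserted maps counted as norm-$a$ factors.
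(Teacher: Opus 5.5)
Your proposal is correct and follows the paper's proof essentially step for step: the uniform bound $M$ over all $W\subseteq\Lambda$ and $t\in[0,1]$, the estimate $\norm{\cT_W'(t)}_\diamond\le\Xi M$ from the cluster resummation identity, the unitary endpoint $\norm{\cT_W(1)}_\diamond=1$, backward integration to get $M\le 1+\Xi M$, and finally integrating $\cT_\Lambda'$ over $[0,1]$. Your worry about interleaving is unnecessary: locations in $C$ and in $W\setminus N_W[C]$ act on disjoint supports, so their maps commute and $F_B=F_C\circ F_R$ is a plain composition, which is how the paper's lemma justifies the identity.
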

Intuitively, after integrating Eq.~\eqref{eq:sketch-derivative-identity}
and taking the norm, \(F_C\) contributes the factor \(a^{|C|}\),
\(\cT_{W\setminus N_W[C]}\) is bounded by a constant, and the sum over
\(C\) is controlled by standard cluster counting.
The proof is given in Appendix~\ref{app:ordered-expansion}.

Since $\cT_{\mathrm{small}}$ contributes no logical dependence,
Theorem~\ref{thm:sketch-resummation} and Eq.~\eqref{eq:information-disturbance} give $\eps^\star\le\sqrt{2\Xi/(1-\Xi)}$ for $\Xi<1$, the factor $2$ comes from $\norm{\id-\cD_A}_\diamond\le2$.
Together with $\eps^\star\le1$, this yields $\eps^\star\le2\sqrt{\Xi}$ for all $\Xi\ge0$. The counting estimate in Appendix~\ref{app:counting-constants}
gives $\Xi\le Dn\,q^m/[\chi(1-q)]$ for $q\coloneqq \chi a<1$,
where $\chi$ depends only on $w,\ell,D,r$.
Using $m\ge d_n/r$, we obtain
\begin{equation}
 \sup_U\eps_n^\star(U,\eta)
 \le
 2\sqrt{\frac{D}{\chi(1-q)}}\,
 \sqrt n\,q^{d_n/(2r)}.
 \label{eq:main-bound}
\end{equation}
If $d_n\ge c_d\log n$ for some $c_d>0$, choose $0<q_{\rm th}<e^{-r/c_d}$. Since $q=2\chi\sin|\eta|\to0$ as $\eta\to0$, there exists $\eta_{\rm th}>0$ such that $q\le q_{\rm th}$ whenever $|\eta|<\eta_{\rm th}$. The bound then decays exponentially in $d_n$. If $d_n/\log n\to\infty$, the bound vanishes for any fixed $0\le q<1$.

\section{Discussion}
\label{sec:discussion}
For highly general qLDPC codes under local coherent noise, we established constant code capacity thresholds below which the logical error decays exponentially with the code distance. Our results show that the sparse check structure of qLDPC codes together with logarithmic code distance already suffices to guarantee robustness against general local coherent error, thereby establishing a long-elusive general and rigorous understanding of coherent quantum error correction. 
A key technical advance enabling these results is cluster resummation, which allows us to bound coherent channel expansions without introducing an exponential dependence on system size. We apply the method to both optimal recovery and minimum-weight decoding, and to unitary coherent noise as well as general local CPTP noise. The minimum-weight decoding result establishes a threshold for a concrete decoder. An important future direction is to develop efficient decoders for coherent errors in specific qLDPC families and establish rigorous threshold guarantees.

Moreover, the present results establish code capacity thresholds  under perfect syndrome measurements. A natural next step is to consider circuit-level noise where syndrome extraction is itself noisy and repeated measurements must be decoded  based on the spacetime syndrome history. Establishing thresholds for coherent and more general local CPTP faults in this setting would help extend our analysis toward more practical qLDPC-based FTQC settings. A further challenge is to treat non-Markovian noise, where a persistent environment can induce correlations across space and time ~\cite{TerhalBurkard2005,AharonovKitaevPreskill2006}.  Continued exploration of these results and considerations is expected to provide an essential route toward a comprehensive theory of spacetime quantum fault tolerance.

\begin{acknowledgments}
Z.H., Y.Z., and Z.-W.L.\ are supported in part by NSFC under Grant No.~12475023, Dushi Program, and a startup funding from YMSC. Y.W.\ is supported by a startup funding from SIMIS.
\end{acknowledgments}

\bibliography{coherent}


\begin{appendix}

\section{Setting details}
\label{app:common-setting}

We use the code and noise model of Sec.~\ref{sec:setting}
and collect the notation needed in the appendices.

Let \(\cC_n\) be an \(\qparams{n}{k_n}{d_n}\) stabilizer
code with Pauli stabilizer generators \(\cG_n\).
Each generator has weight at most \(w\), and each qubit
belongs to at most \(\ell\) generator supports, where
\(w\) and \(\ell\) are independent of \(n\).
The logical and physical Hilbert spaces are
\(A_n=\mathbb C^{2^{k_n}}\) and
\(P_n=(\mathbb C^2)^{\otimes n}\).
Fix an encoding isometry \(V_n:A_n\to P_n\), with
code projector \(P_{\cC_n}=V_nV_n^\dagger\) and encoding
channel \(\cE_n=\Ad_{V_n}\), where
\(\Ad_K(X)=KXK^\dagger\).

The check connectivity graph \(\Gamma_n\) has the physical
qubits as vertices, with two qubits adjacent when they
occur in the support of a common generator. Thus
\begin{equation}
\deg(\Gamma_n)\le\kappa,
\qquad
\kappa\coloneqq \ell(w-1).
\label{eq:app-kappa}
\end{equation}

For a given code, we omit the subscript \(n\).
Fix positive integers \(D\) and \(r\), independent of \(n\).
A local coherent error is represented by the circuit
\begin{equation}
\begin{aligned}
U\coloneqq U_n(\eta)&=U_D\cdots U_1,\\
U_t&=\prod_{a\in\mathcal A_t}
e^{-\mathrm i\eta h_{t,a}},
\end{aligned}
\label{eq:app-circuit}
\end{equation}
where each \(h_{t,a}\) is Hermitian and satisfies
\(\norm{h_{t,a}}\le1\) and
\(1\le|\supp(h_{t,a})|\le r\).
Supports within each layer are pairwise disjoint.

{
This definition captures coherent gate miscalibration, which is described in the following proposition.
\begin{figure}[t]
    \centering
    \includegraphics[width=\columnwidth]{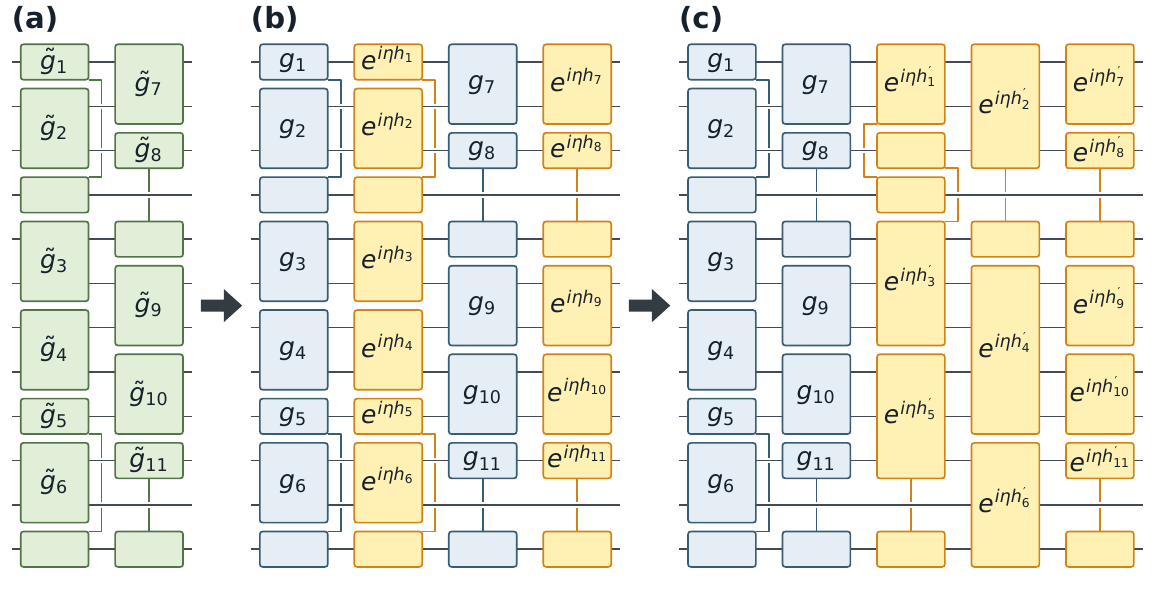}
    \caption{{
    Local coherent error arising from a constant-depth logical operation.
    (a) A two-layer noisy logical operation composed of bounded-support noisy physical gates \(\tilde g_i\).
    (b) Each \(\tilde g_i\) can be decomposed into an ideal physical gate \(g_i\) and a coherent error gate \(e^{i\eta h_i}\).
    (c) Propagating every error gate \(e^{i\eta h_i}\) through the physical circuit allows the decomposation of the full circuit into a two-layer ideal logical operation followed by a three-layer local coherent error.}
    }
    \label{fig:local-coherent-error}
\end{figure}

\begin{proposition}[Noisy constant-depth logical operations]
Suppose a logical operation is implemented by an ideal circuit
\begin{equation}
G=G_D\cdots G_1,
\qquad
G_t=\prod_{\alpha\in A_t}g_{t,\alpha},
\label{eq:ideal-logical-circuit}
\end{equation}
where \(D=O(1)\) and all physical gates have support size at most \(r=O(1)\),
with pairwise-disjoint supports within each layer. Let
\(\widetilde g_{t,\alpha}\) denote the corresponding noisy unitary physical gate
and $\widetilde G$ be the noisy circuit.
Then every noisy physical gate admits a decomposition
\begin{equation}
\begin{aligned}
&\widetilde g_{t,\alpha}
=u_{t,\alpha}g_{t,\alpha},
\qquad
u_{t,\alpha}=e^{-\mathrm i\eta h_{t,\alpha}}, \qquad h_{t,\alpha}=h_{t,\alpha}^\dagger,
\end{aligned}
\label{eq:gate-noise-decomposition}
\end{equation}
with $\norm{h_{t,\alpha}}\le1$ and for a common \(\eta\ge0\). After propagating all \(u_{t,\alpha}\) through the
remaining ideal gates, the noisy implementation can be written as
\begin{equation}
\widetilde G=UG,
\label{eq:propagated-noisy-logical-circuit}
\end{equation}
where \(U\) is a local coherent error in
Eq.~\eqref{eq:app-circuit}, with depth at most $D r^{2D-1}$ and support bound at most $r^D$.
\end{proposition}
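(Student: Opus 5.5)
The plan is to first put each noisy gate into the form $u_{t,\alpha}g_{t,\alpha}$ with a common $\eta$, then commute all error unitaries to the end of the circuit by conjugation, and finally regroup the propagated errors into layers with pairwise-disjoint supports while counting the resulting depth and support size.

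\emph{Decomposition of each noisy gate.} For each location set $u_{t,\alpha}\coloneqq \widetilde g_{t,\alpha}g_{t,\alpha}^\dagger$. This is a unitary supported on $\supp(g_{t,\alpha})$, so its support has size at most $r$. Take its principal logarithm $u_{t,\alpha}=e^{-\mathrm i K_{t,\alpha}}$, with $K_{t,\alpha}$ Hermitian and spectrum in $(-\pi,\pi]$. Define $\eta\coloneqq\max_{t,\alpha}\norm{K_{t,\alpha}}$ and $h_{t,\alpha}\coloneqq K_{t,\alpha}/\eta$ (or $h_{t,\alpha}=0$ if $\eta=0$). Then $\norm{h_{t,\alpha}}\le1$, which gives Eq.~\eqref{eq:gate-noise-decomposition}. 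If one insists on $|\supp h|\ge1$, simply regard $h$ as acting on its gate's support. Writing $\widetilde G_t=E_tG_t$ with $E_t\coloneqq\prod_{\alpha}u_{t,\alpha}$ gives
\begin{equation}
\widetilde G=E_DG_DE_{D-1}G_{D-1}\cdots E_1G_1 .
\end{equation}

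\emph{Propagation.} Let $W_t\coloneqq G_D\cdots G_{t+1}$, with $W_D=I$, and set $V_t\coloneqq W_tE_tW_t^\dagger$. Inserting $W_t^\dagger W_t$ repeatedly yields
\begin{equation}
\widetilde G=V_DV_{D-1}\cdots V_1\,G ,
\end{equation}
so we put $U\coloneqq V_D\cdots V_1$. Each factor $V_t$ is a product of the commuting unitaries $W_tu_{t,\alpha}W_t^\dagger=e^{-\mathrm i\eta W_th_{t,\alpha}W_t^\dagger}$. Their generators are still Hermitian with norm at most $1$. Conjugating an operator supported on a set $S$ by one layer gives support contained in the union of the gates meeting $S$, which has size at most $r|S|$. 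Hence each propagated generator from layer $t$ has support size at most $s_t\coloneqq r^{D-t+1}\le r^D$. This is the claimed support bound.

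\emph{Re-layering, the main obstacle.} The propagated gates within one $V_t$ commute but may overlap, so $V_t$ must be split into sublayers with disjoint supports. I would do this by greedy colouring of the overlap graph. Fix a qubit $q$. Its backward light cone through $G_D,\dots,G_{t+1}$ contains at most $r^{D-t}$ qubits at time $t$. A propagated gate from layer $t$ contains $q$ only if its original gate meets that light cone, and since layer-$t$ gates are disjoint, there are at most $r^{D-t}$ such gates. Each propagated gate has at most $s_t$ qubits, so it overlaps at most $s_t r^{D-t}-1$ others. Greedy colouring therefore uses at most $r^{2(D-t)+1}\le r^{2D-1}$ sublayers. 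Because the factors inside $V_t$ commute, they may be reordered into these sublayers freely.

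\emph{Depth count.} Concatenating the sublayers of $V_1,\dots,V_D$ in this order gives total depth at most $D\,r^{2D-1}$. This expresses $U$ in the form of Eq.~\eqref{eq:app-circuit}, with the stated depth and support bounds.
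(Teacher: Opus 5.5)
Your proposal is correct and follows essentially the same route as the paper. Both arguments write $\widetilde g_{t,\alpha}g_{t,\alpha}^\dagger=e^{-\mathrm i K_{t,\alpha}}$ with $\eta=\max\norm{K_{t,\alpha}}$, conjugate each error layer by $G_D\cdots G_{t+1}$ to get supports of size at most $r^{D-t+1}$, and use a backward-light-cone count to color each propagated layer into $r^{2(D-t)+1}\le r^{2D-1}$ commuting sublayers. Your per-qubit overlap count and your remark about gates with $h_{t,\alpha}=0$ only spell out details that the paper states more tersely.
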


\begin{proof}
The settings are illustrated in Fig.~\ref{fig:local-coherent-error}. Define
\begin{equation}
v_{t,\alpha}
\coloneqq \widetilde g_{t,\alpha}g_{t,\alpha}^\dagger.
\label{eq:relative-unitary-error}
\end{equation}
Since \(v_{t,\alpha}\) is unitary, there exists a Hermitian operator
\(K_{t,\alpha}\) on the same physical-gate support such that
\begin{equation}
v_{t,\alpha}
=e^{-\mathrm iK_{t,\alpha}},
\qquad
\norm{K_{t,\alpha}}\le\pi.
\label{eq:relative-error-logarithm}
\end{equation}
Set
\begin{equation}
\eta\coloneqq \max_{t,\alpha}\norm{K_{t,\alpha}},
\qquad
h_{t,\alpha}\coloneqq 
\begin{cases}
K_{t,\alpha}/\eta, & \eta>0,\\
0, & \eta=0.
\end{cases}
\label{eq:common-noise-strength}
\end{equation}
Then
\begin{equation}
\widetilde g_{t,\alpha}
=e^{-\mathrm i\eta h_{t,\alpha}}g_{t,\alpha},
\qquad
\norm{h_{t,\alpha}}\le1,
\label{eq:gate-error-factorization}
\end{equation}
which proves Eq.~\eqref{eq:gate-noise-decomposition}.

Because the supports within each layer are pairwise disjoint,
\begin{equation}
\widetilde G_t=U_tG_t,
\qquad
U_t\coloneqq \prod_{\alpha\in A_t}u_{t,\alpha}.
\label{eq:noisy-layer-decomposition}
\end{equation}
Define
\begin{equation}
S_t\coloneqq G_D\cdots G_{t+1},
\qquad
S_D\coloneqq I.
\label{eq:remaining-ideal-circuit}
\end{equation}
Successively moving the ideal gates to the right gives
\begin{equation}
\widetilde G
=\left(\prod_{t=D}^{1}S_tU_tS_t^\dagger\right)G,
\label{eq:propagation-identity}
\end{equation}
where the product is ordered with decreasing \(t\). For every \(t\),
\begin{equation}
S_tU_tS_t^\dagger
=\prod_{\alpha\in A_t}e^{-\mathrm i\eta h'_{t,\alpha}},
\qquad
h'_{t,\alpha}\coloneqq  S_th_{t,\alpha}S_t^\dagger,
\label{eq:propagated-generators}
\end{equation}
and unitary conjugation gives
\begin{equation}
\norm{h'_{t,\alpha}}
=\norm{h_{t,\alpha}}
\le1.
\label{eq:propagated-generator-norm}
\end{equation}

Conjugation through one layer of pairwise-disjoint \(r\)-supported gates maps an
operator supported on \(q\) qubits to one supported on at most \(rq\) qubits.
Therefore
\begin{equation}
\bigl|\supp(h'_{t,\alpha})\bigr|
\le r^{D-t+1}
\le r^D.
\label{eq:propagated-support-bound}
\end{equation}
For fixed \(t\), the original \(u_{t,\alpha}\) commute, hence
\begin{equation}
\left[
S_tu_{t,\alpha}S_t^\dagger,
S_tu_{t,\beta}S_t^\dagger
\right]=0
\qquad
(\alpha,\beta\in A_t).
\label{eq:propagated-errors-commute}
\end{equation}
Let \(k=D-t\). A backward light cone through \(k\) layers contains at most
\(r^k\) qubits. Since the original supports in \(A_t\) are pairwise disjoint,
the number of propagated supports from \(A_t\) intersecting a certain propagated
support is at most
\begin{equation}
r^{k+1}r^k=r^{2k+1}\le r^{2D-1}.
\label{eq:propagated-overlap-bound}
\end{equation}
Hence their intersection graph has maximum degree at most
\(r^{2D-1}-1\) and can be colored with at most \(r^{2D-1}\) colors. By Eq.~\eqref{eq:propagated-errors-commute}, the color classes may be
used as sublayers without changing the product. Applying this independently
to the \(D\) original layers gives a circuit representation of \(U\) with depth at
most
\begin{equation}
Dr^{2D-1}=O(1)
\label{eq:effective-noise-depth}
\end{equation}
and support size at most
\begin{equation}
r^D=O(1).
\label{eq:effective-noise-support}
\end{equation}
Thus \(U\) satisfies the conditions of Eq.~\eqref{eq:app-circuit}.
\end{proof}}

To expand the noise channel into local {error} contributions,
let \(\Lambda\) denote the set of gate locations in \(U\),
and let \(u_\alpha\) be the unitary at location \(\alpha\).
Define
\begin{equation}
F_\alpha\coloneqq \Ad_{u_\alpha}-\id.
\label{eq:app-error-map}
\end{equation}
For \(B\subseteq\Lambda\), let \(F_B\) be the composition
of the maps \(F_\alpha\), \(\alpha\in B\), in the order
in which the corresponding gates act in \(U\), with
\(F_\varnothing=\id\).
Expanding each local channel as
\(\Ad_{u_\alpha}=\id+F_\alpha\) gives
\[
\Ad_U=\sum_{B\subseteq\Lambda}F_B.
\]

After the coherent error, the stabilizer generators are
measured. Let \(\{\Pi_s:s\in\Sigma\}\) be their nonzero
joint eigenspace projectors. They satisfy
\begin{equation}
\Pi_s\Pi_t=\delta_{st}\Pi_s,
\qquad
\sum_{s\in\Sigma}\Pi_s=I_P.
\label{eq:app-syndrome-projectors}
\end{equation}
The measurement channel outputs the syndrome in a register
\(S\), together with the postmeasurement physical state:
\begin{equation}
\cM(Y)
=
\sum_{s\in\Sigma}
\ket{s}\!\bra{s}_S\otimes\Pi_sY\Pi_s.
\label{eq:app-retained-measurement}
\end{equation}
Combining encoding, noise, and measurement gives
\begin{equation}
\cN_U\coloneqq \cN_{n,U,\eta}
=
\cM\circ\Ad_U\circ\cE.
\label{eq:app-noisy-channel}
\end{equation}

\section{Optimal recovery threshold}
\label{app:optimal}
In this section, we establish the coherent error threshold for the optimal recovery setting.
The \emph{optimal recovery logical error} is defined by
\begin{equation}
\eps_n^\star(U,\eta)
=
\inf_{\cR }
\frac12
\norm{\cR\circ\cN_U-\id_A}_\diamond,
\label{eq:app-optimal-error}
\end{equation}
where the infimum is over completely positive
trace-preserving (CPTP) maps from \(S\otimes P\) to \(A\).
The recovery is optimized separately for each circuit \(U_n(\eta)\).

\subsection{Logical information leakage and optimal recovery}
\label{app:syndrome-recovery}

In this section, we use the dependence of the syndrome distribution on the logical
input to quantify information leakage and bound the optimal recovery error.

For each \(s\in\Sigma\), define
\[
K_s\coloneqq \Pi_sUV.
\]
The postmeasurement channel in Eq.~\eqref{eq:app-noisy-channel} can then
be written as
\begin{equation}
\cN_U(X)
=
\sum_{s\in\Sigma}
\ket{s}\!\bra{s}_S\otimes K_sXK_s^\dagger .
\label{eq:appendix-postmeasurement-channel}
\end{equation}

Let \(E\) be a Hilbert space with orthonormal basis
\(\{\ket{s}_E:s\in\Sigma\}\), and define
\begin{equation}
W:A\rightarrow S\otimes P\otimes E,
\quad
W
=
\sum_{s\in\Sigma}
\ket{s}_S\otimes K_s\otimes\ket{s}_E .
\label{eq:appendix-W}
\end{equation}
We have
\begin{align}
W^\dagger W
&=
\sum_{s\in\Sigma}K_s^\dagger K_s
\nonumber\\
&=
V^\dagger U^\dagger
\left(\sum_{s\in\Sigma}\Pi_s\right)
UV
=
I_A .
\label{eq:W-isometry}
\end{align}

Hence \(W\) is an isometry. For any operator \(X\) on \(A\),
\begin{equation}
WXW^\dagger=\sum_{s,t\in\Sigma}
\ket{s}\!\bra{t}_S
\otimes K_sXK_t^\dagger
\otimes \ket{s}\!\bra{t}_E.
\end{equation}
Tracing out \(E\) therefore gives
\begin{equation}
\Tr_E\!\left(WXW^\dagger\right)
=
\sum_{s\in\Sigma}
\ket{s}\!\bra{s}_S\otimes K_sXK_s^\dagger
=
\cN_U(X).
\label{eq:appendix-stinespring}
\end{equation}
Thus \(W\) is a Stinespring isometry for \(\cN_U\), with output
system \(S\otimes P\) and environment \(E\).

A complementary channel corresponding to \(W\) is
\begin{align}
\cQ_U(X)
&\coloneqq 
\Tr_{SP}\!\left(WXW^\dagger\right)
\nonumber\\
&=
\sum_{s\in\Sigma}
\Tr\!\left(K_sXK_s^\dagger\right)
\ket{s}\!\bra{s}_E
\nonumber\\
&=
\sum_{s\in\Sigma}
\Tr\!\left(K_s^\dagger K_sX\right)
\ket{s}\!\bra{s}_E .
\label{eq:appendix-complement}
\end{align}
For every logical state \(\rho\), the diagonal entries of
\(\cQ_U(\rho)\) are precisely the probabilities of the syndrome
outcomes. We now compare these probabilities for different logical
inputs.

Let
\begin{equation}
\tau_A\coloneqq \frac{I_A}{\dim A},
\qquad
\cD_A(X)\coloneqq \Tr(X)\tau_A,
\label{eq:logical-depolarizing-channel}
\end{equation}
where \(\cD_A\) is the completely depolarizing channel on \(A\).
For every logical state \(\rho\),
\begin{equation}
\bigl[\cQ_U\circ(\id-\cD_A)\bigr](\rho)
=
\cQ_U(\rho)-\cQ_U(\tau_A).
\label{eq:syndrome-input-dependence}
\end{equation}
Thus the syndrome distribution is independent of the logical input
exactly when
\begin{equation}
\cQ_U\circ(\id-\cD_A)=0.
\label{eq:syndrome-independence}
\end{equation}
More generally,
\(\norm{\cQ_U\circ(\id-\cD_A)}_\diamond\)
measures {the information leakage to the environment, and equivalently} the dependency of input logical states.

The {error} expansion of \(\cQ_U\) contains terms that are independent
of the logical input. The following elementary observation gives a
convenient characterization of such terms.

\begin{lemma}
\label{lem:input-independence}
Let \(L\) be a linear map with input system \(A\). Then
\begin{equation}
L\circ(\id-\cD_A)=0
\quad\Longleftrightarrow\quad
L(X)=\Tr(X)L(\tau_A)
\label{eq:input-independent-characterization}
\end{equation}
for every operator \(X\) on \(A\).
Moreover, this equality remains valid in the presence of an arbitrary
reference system: for every finite-dimensional \(R\),
\begin{equation}
(L\otimes\id_R)
\circ
\bigl[(\id-\cD_A)\otimes\id_R\bigr]
=
0.
\label{eq:input-independent-cancellation}
\end{equation}
\end{lemma}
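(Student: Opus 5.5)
The equivalence follows by expanding the definition of \(\cD_A\) and using linearity. The extension to a reference system then follows from the same identity applied to product operators.

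For the forward direction, assume \(L\circ(\id-\cD_A)=0\). Evaluating on an arbitrary operator \(X\) on \(A\) gives
\[
L(X)-L\bigl(\Tr(X)\tau_A\bigr)=0 .
\]
By linearity of \(L\), this is \(L(X)=\Tr(X)L(\tau_A)\). For the reverse direction, assume \(L(X)=\Tr(X)L(\tau_A)\) for all \(X\). Then
\[
L\bigl((\id-\cD_A)(X)\bigr)=L(X)-\Tr(X)L(\tau_A)=0 .
\]
Since \(X\) is arbitrary, the composition vanishes identically. No positivity or trace preservation of \(L\) is needed, only linearity. This matters because the lemma will later be applied to individual terms \(F_B\) of the error expansion, which are merely Hermiticity-preserving maps.

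For Eq.~\eqref{eq:input-independent-cancellation}, I use that operators on \(A\otimes R\) are spanned by product operators \(X\otimes Y\). The map \((L\otimes\id_R)\circ[(\id-\cD_A)\otimes\id_R]\) equals \([L\circ(\id-\cD_A)]\otimes\id_R\). On a product operator it therefore acts as
\[
\bigl[L\circ(\id-\cD_A)\bigr](X)\otimes Y=0 .
\]
Linearity extends the vanishing to all of \(A\otimes R\). Consequently the diamond norm of such a term is zero, and not merely the norm evaluated without a reference system.

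There is no real obstacle here. The only care needed is the convention that the map "in the presence of a reference" means tensoring with \(\id_R\), together with the spanning argument by product operators in finite dimension.
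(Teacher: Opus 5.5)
Your proof is correct and follows essentially the same route as the paper: both rewrite \(L\circ(\id-\cD_A)=0\) as \(L=L\circ\cD_A\) and use linearity to get \(L(X)=\Tr(X)L(\tau_A)\). Both also obtain the reference-system statement by tensoring the vanishing map with \(\id_R\); your observation that \([L\circ(\id-\cD_A)]\otimes\id_R\) vanishes on product operators, which span the operators on \(A\otimes R\), just spells out what the paper leaves implicit.
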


\begin{proof}
The first condition is equivalent to \(L=L\circ\cD_A\). Hence
\begin{equation}
L(X)
=
L\bigl(\Tr(X)\tau_A\bigr)
=
\Tr(X)L(\tau_A).    
\end{equation}

The converse follows from the same calculation. The last statement
follows by tensoring the equality
\(L\circ(\id-\cD_A)=0\) with \(\id_R\).
\end{proof}

We now relate this measure of syndrome dependence to the optimal
recovery error.

\begin{lemma}
\label{lem:recovery-from-syndrome}
The optimal recovery error of \(\cN_U\) satisfies
\begin{equation}
\epsilon^\star(\cN_U)
\leq
\sqrt{
\norm{\cQ_U\circ(\id-\cD_A)}_\diamond
}.
\label{eq:recovery-from-syndrome}
\end{equation}
\end{lemma}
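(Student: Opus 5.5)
The plan is to derive Lemma~\ref{lem:recovery-from-syndrome} from the information--disturbance tradeoff of Kretschmann, Schlingemann and Werner \cite[Theorem~3]{KretschmannSchlingemannWerner2008}, applied to the Stinespring isometry \(W\) of Eq.~\eqref{eq:appendix-W} and its complementary channel \(\cQ_U\) of Eq.~\eqref{eq:appendix-complement}. In the form we need, that theorem reads
\[
\inf_{\cR}\tfrac12\norm{\cR\circ\cN-\id}_\diamond \le \sqrt{\inf_{\rho_E}\norm{\cN^c-\cS_{\rho_E}}_\diamond},
\qquad \cS_{\rho_E}(X)=\Tr(X)\rho_E ,
\]
where the infimum on the right runs over states \(\rho_E\) of the environment. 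The only work is to (i) check normalization conventions, in particular whether the left side uses \(\frac12\norm{\cdot}_\diamond\) and the right side the full diamond norm, and (ii) choose \(\rho_E\) well. If the cited convention differs by a constant, I would adjust the statement or absorb the constant; the later estimates already use the crude bound \(\eps^\star\le 2\sqrt{\Xi}\), so they are insensitive to such factors.

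For step (ii), choose \(\rho_E=\cQ_U(\tau_A)\), which is the syndrome distribution for the maximally mixed logical input. Then for every \(X\) we have \(\cS_{\rho_E}(X)=\Tr(X)\,\cQ_U(\tau_A)=\cQ_U(\cD_A(X))\). Hence
\[
\cQ_U-\cS_{\rho_E}=\cQ_U\circ(\id-\cD_A).
\]
Since this is an identity of linear maps, it persists after tensoring with any reference system, as in Lemma~\ref{lem:input-independence}, so the diamond norms of the two sides agree. Substituting into the tradeoff bound gives Eq.~\eqref{eq:recovery-from-syndrome}.

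The main obstacle is confirming that the cited theorem applies to our setting. Two points need checking. First, the recovery in Eq.~\eqref{eq:app-optimal-error} acts on the output \(S\otimes P\) of \(\cN_U\), which is precisely the output system of \(W\); with \(E\) as the environment, \(\cQ_U\) is a genuine complementary channel, by Eqs.~\eqref{eq:W-isometry} and \eqref{eq:appendix-stinespring}. Second, the infimum over recoveries is attained or approached within CPTP maps from \(S\otimes P\) to \(A\).

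If the citation's constants turn out to be unsuitable, I would fall back on a direct argument. By Uhlmann's theorem, closeness of \(\cN^c\otimes\id_R\) to the constant channel in fidelity, uniformly over inputs, yields an isometry on \(S\otimes P\) that approximately maps \(W\) to \(V_{\mathrm{id}}\otimes\ket{\phi}\). I would then convert the trace-distance deviation into a fidelity bound via Fuchs--van de Graaf, which produces the square root. This is exactly the route of the KSW proof, so reproducing it would at most cost a constant factor.
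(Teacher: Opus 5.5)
Your proposal is correct and is essentially the paper's proof. The paper also applies Theorem~3 of Kretschmann--Schlingemann--Werner, in the Schr\"odinger-picture form of Hayden--Winter, $\frac14\bigl(\inf_{\cR}\norm{\cR\circ\cN_U-\id_A}_\diamond\bigr)^2\le\norm{\cQ_U-\cQ_U\circ\cD_A}_\diamond$, with the constant channel $\cQ_U\circ\cD_A$; that is exactly your choice $\rho_E=\cQ_U(\tau_A)$. Your normalization (half diamond norm on the left, full diamond norm under the square root) is the right one, so no constant needs absorbing and the Uhlmann fallback is unnecessary.
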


\begin{proof}
Since \(\cQ_U\) is a complementary channel of \(\cN_U\) and
\(\cQ_U\circ\cD_A\) is a constant channel, {applying Theorem~3 of Ref. \cite{KretschmannSchlingemannWerner2008} in  Schr\"odinger picture \cite[Theorem~3]{HaydenWinter2012}} gives
\begin{equation}
\frac{1}{4}
\left(
\inf_{\cR}
\norm{\cR\circ\cN_U-\id_A}_\diamond
\right)^2
\leq
\norm{\cQ_U-\cQ_U\circ\cD_A}_\diamond.    
\end{equation}
The right-hand side is
\(
\norm{\cQ_U\circ(\id-\cD_A)}_\diamond,
\)
while the left-hand side is
\(\epsilon^\star(\cN_U)^2\) by definition. Taking the square root
proves Eq.~\eqref{eq:recovery-from-syndrome}.
\end{proof}

It remains to bound
\(\norm{\cQ_U\circ(\id-\cD_A)}_\diamond\).
In the next appendix, we show that a {error} contribution \(L_B\)
satisfies
\begin{equation}
    L_B(X)=\Tr(X)\sigma_B
\end{equation}
whenever every connected component of its physical support contains
fewer than \(d\) qubits. Here \(\sigma_B\) may depend on \(B\), but
not on \(X\). Lemma~\ref{lem:input-independence} then gives
\begin{equation}
L_B\circ(\id-\cD_A)=0.    
\end{equation}
Consequently, only {error} configurations containing a connected
component of size at least \(d\) need to be bounded.

\subsection{Error clusters and logical leakage}
\label{app:subdistance-cancellation}

This appendix shows that {error} configurations consisting only of
small connected components do not contribute to the dependence of
the syndrome distribution on the logical input. We first relate
connected components in the interaction graph to connected components
of their physical support. We then use the code distance to show that
the corresponding syndrome contributions are independent of the
logical input.

The local coherent noise is defined by a depth-\(D\)
circuit
\[
U=U_D\cdots U_1,
\qquad
U_t=\prod_{a\in\mathcal A_t}e^{-i\eta h_{t,a}},
\]
where \(\mathcal A_t\) indexes the local unitary in layer
\(t\). Each \(h_{t,a}\) is supported on at most \(r\) physical
qubits, and the supports are pairwise disjoint within each layer.

We call the pair \(\alpha=(t,a)\), which identifies one local
unitary in the circuit, a location. The set of
locations is
\begin{equation}
\Lambda
\coloneqq 
\bigl\{(t,a):1\leq t\leq D,\ a\in\mathcal A_t\bigr\}.
\end{equation}
For \(\alpha=(t,a)\in\Lambda\), write
\[
h_\alpha\coloneqq h_{t,a},
\qquad
u_\alpha\coloneqq e^{-i\eta h_\alpha},
\qquad
S_\alpha\coloneqq \supp(h_\alpha).
\]

We now compare connectivity among coherent error locations with connectivity of their physical support. The check connectivity graph \(\Gamma\) has the physical qubits as vertices. The interaction graph \(H\) has vertex set \(\Lambda\), with two locations adjacent whenever their supports have distance at most one in \(\Gamma\).

For \(B\subseteq\Lambda\), define its physical support by 
\[
S_B\coloneqq \bigcup_{\alpha\in B}S_\alpha.
\]
The induced graph \(H[B]\) describes
connectivity among the locations in \(B\), while
\(\Gamma[S_B]\) describes connectivity within their physical support.
The following lemma bounds the size and degree of the interaction graph and relates its connected components to the connected components of their physical support.

\begin{lemma}
\label{lem:layered-geometry}
The coherent error locations and their supports satisfy the following:
\begin{enumerate}
    \item The number of locations     \(|\Lambda|\leq Dn.\)

    \item The maximum degree of the interaction graph \(H\) is at most \(D(\kappa+1)r-1.\)

    \item If \(C\subseteq\Lambda\) is connected in \(H\), then
    \(|S_C|\leq r|C|.\)

    \item For every \(B\subseteq\Lambda\) and every connected component \(T\) of \(\Gamma[S_B]\), there is a connected component \(C\) of \(H[B]\) such that \(T\subseteq S_C\), \(|T|\leq |S_C|\leq r|C|\).
\end{enumerate}
\end{lemma}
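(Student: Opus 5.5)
We prove the four items in order; each is an elementary counting or connectivity argument based on the layered structure of the circuit in Eq.~\eqref{eq:app-circuit} and the degree bound \(\deg(\Gamma)\le\kappa\) from Eq.~\eqref{eq:app-kappa}.

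\emph{Item 1.} Within a fixed layer \(t\), the supports \(S_{(t,a)}\) are pairwise disjoint and each is nonempty, since \(|\supp(h_{t,a})|\ge1\). Hence \(|\mathcal A_t|\le n\). Summing over the \(D\) layers gives \(|\Lambda|\le Dn\).

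\emph{Item 2.} Fix \(\alpha\in\Lambda\). A location \(\beta\) is adjacent to \(\alpha\) in \(H\) exactly when \(S_\beta\) contains a qubit at \(\Gamma\)-distance at most one from \(S_\alpha\). The closed \(\Gamma\)-neighborhood of \(S_\alpha\) has at most \(|S_\alpha|(\kappa+1)\le r(\kappa+1)\) qubits. In each layer, every such qubit lies in the support of at most one location, because supports within a layer are disjoint. So each layer contributes at most \(r(\kappa+1)\) locations \(\beta\) with this property, including \(\alpha\) itself in its own layer. Summing over the \(D\) layers and removing \(\alpha\) gives degree at most \(D(\kappa+1)r-1\).

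\emph{Item 3.} This follows from the union bound \(|S_C|\le\sum_{\alpha\in C}|S_\alpha|\le r|C|\). Connectivity is not actually needed here.

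\emph{Item 4.} This is the only step that requires care, and we expect it to be the main (though mild) obstacle. Let \(T\) be a connected component of \(\Gamma[S_B]\). Consider the sets \(S_C\) as \(C\) ranges over the connected components of \(H[B]\). These sets cover \(S_B\), but they may overlap, so we must argue that \(T\) lies entirely inside one of them. We use the following claim.

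Claim: if two qubits \(x,y\in S_B\) are adjacent or equal in \(\Gamma\), and \(x\in S_\alpha\), \(y\in S_\beta\) with \(\alpha,\beta\in B\), then \(\alpha\) and \(\beta\) lie in the same component of \(H[B]\). Indeed, \(\operatorname{dist}_\Gamma(S_\alpha,S_\beta)\le1\), so either \(\alpha=\beta\) or \(\alpha\) and \(\beta\) are adjacent in \(H\). Two overlapping supports fall under the same criterion, since their \(\Gamma\)-distance is zero.

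To apply the claim, pick \(x_0\in T\) and a location \(\alpha_0\in B\) with \(x_0\in S_{\alpha_0}\), and let \(C\) be the component of \(H[B]\) containing \(\alpha_0\). For any \(y\in T\), take a path \(x_0,x_1,\dots,y\) in \(\Gamma[S_B]\). For each qubit \(x_i\) on the path, choose any location \(\alpha_i\in B\) whose support contains \(x_i\). Consecutive qubits \(x_i,x_{i+1}\) are adjacent in \(\Gamma\), so by the claim \(\alpha_i\) and \(\alpha_{i+1}\) lie in the same component of \(H[B]\). Inducting along the path shows that every \(\alpha_i\) lies in \(C\), and therefore \(y\in S_C\). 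This gives \(T\subseteq S_C\).

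Finally, item 3 applied to this \(C\) yields \(|T|\le|S_C|\le r|C|\).
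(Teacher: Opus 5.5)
Your proof is correct and follows essentially the same route as the paper. Items 1 and 2 use the same count of disjoint supports per layer, item 3 uses the same union bound, and item 4 uses the same path-lifting argument: choose a covering location for each qubit on a path in \(T\), and consecutive choices are equal or \(H\)-adjacent. The only cosmetic difference is in item 4, where the paper shows that \(I_T=\{\alpha\in B:S_\alpha\cap T\neq\varnothing\}\) is connected in \(H[B]\), whereas you propagate from a fixed base location \(\alpha_0\).
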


\begin{proof}
Within each layer, the supports are nonempty and pairwise disjoint,
so there are at most \(n\) locations. Summing over the \(D\) layers
proves item 1.

Let \(N_\Gamma[S_\alpha]\) denote the closed neighborhood
of \(S_\alpha\) in \(\Gamma\). Since
\(\deg(\Gamma)\leq\kappa\) and \(|S_\alpha|\leq r\),
\begin{equation}
|N_\Gamma [S_\alpha]|
\leq
(\kappa+1)|S_\alpha|
\leq
(\kappa+1)r.    
\end{equation}
In any layer, at most \(|N_\Gamma[S_\alpha]|\) pairwise disjoint supports can intersect \(N_\Gamma[S_\alpha]\). Hence at most \(D(\kappa+1)r\) locations are equal or adjacent to \(\alpha\) in
\(H\), which proves item 2 after excluding \(\alpha\) itself.

Item 3 follows directly from
\begin{equation}
    |S_C|
=
\left|\bigcup_{\alpha\in C}S_\alpha\right|
\leq
\sum_{\alpha\in C}|S_\alpha|
\leq
r|C|.
\end{equation}

For item 4, let
\begin{equation}
I_T\coloneqq \{\alpha\in B:S_\alpha\cap T\neq\varnothing\}.    
\end{equation}
This set is connected in \(H[B]\). Indeed, for
\(\alpha,\beta\in I_T\), choose
\(x\in S_\alpha\cap T\) and \(y\in S_\beta\cap T\), and let
\[
x=v_0,v_1,\ldots,v_p=y
\]
be a path in \(T\). For each \(i\), choose
\(\gamma_i\in B\) such that \(v_i\in S_{\gamma_i}\), with
\(\gamma_0=\alpha\) and \(\gamma_p=\beta\). Consecutive locations
\(\gamma_i\) and \(\gamma_{i+1}\) are either equal or adjacent in
\(H[B]\). Thus they form a walk from \(\alpha\) to \(\beta\).

Let \(C\) be the connected component of \(H[B]\) containing \(I_T\).
Every vertex of \(T\) lies in \(S_\alpha\) for some
\(\alpha\in I_T\), so \(T\subseteq S_C\). The final bound follows
from item 3.
\end{proof}

\subsection{Recoverability of small errors}

Set $m\coloneqq \left\lceil\frac{d}{r}\right\rceil$.
If every connected component of \(H[B]\) has fewer than \(m\)
locations, Lemma~\ref{lem:layered-geometry} implies that every
connected component \(T\) of \(\Gamma[S_B]\) satisfies
\begin{equation}
|T|\leq r(m-1)<d.    
\end{equation}

We first record the corresponding consequence for operators supported
on \(S_B\).

\begin{lemma}
\label{lem:app-scalar-compression}
Let \(R\) be a set of physical qubits such that every connected
component of \(\Gamma[R]\) has fewer than \(d\) vertices. For any
operators \(O_1,O_2\) supported on \(R\) and any syndrome
\(s\in\Sigma\), there is a scalar \(\lambda_s(O_1,O_2)\) such that
\begin{equation}
P_{\cC}O_2^\dagger\Pi_sO_1P_{\cC}
=
\lambda_s(O_1,O_2)P_{\cC}.
\label{eq:app-scalar-compression}
\end{equation}
\end{lemma}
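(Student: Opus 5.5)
The plan is to reduce to Pauli operators by linearity, split each Pauli along the connected components of \(\Gamma[R]\), and apply the code-distance property to each component separately. Write \(R=R_1\sqcup\cdots\sqcup R_p\) for the connected components of \(\Gamma[R]\), each with \(|R_i|<d\). Both sides of Eq.~\eqref{eq:app-scalar-compression} are sesquilinear in \((O_1,O_2)\). Every operator on \(R\) is a linear combination of Pauli strings supported on \(R\). So it suffices to prove the claim for Pauli operators \(E_1,E_2\) supported on \(R\); \(\lambda_s\) is then defined by the corresponding linear combination of the Pauli scalars.

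\emph{Step 1: one generator sees at most one component.} The support of each generator \(g\in\cG\) is a clique in \(\Gamma\). If \(\supp g\) met both \(R_i\) and \(R_j\) with \(i\neq j\), those two components would contain adjacent vertices, contradicting that they are distinct components. Hence the generators meeting \(R\) split into disjoint classes \(\cG_i\coloneqq\{g:\supp g\cap R_i\neq\varnothing\}\). A Pauli \(E_i\) supported on \(R_i\) has a nontrivial syndrome bit only on generators in \(\cG_i\).

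\emph{Step 2: syndrome bookkeeping.} Let \(\sigma(E)\in\Sigma\) denote the syndrome of a Pauli \(E\), and identify \(P_\cC=\Pi_0\). Then \(E\Pi_0=\Pi_{\sigma(E)}E\). Consequently \(\Pi_sE_1P_\cC=\delta_{s,\sigma(E_1)}E_1P_\cC\), and similarly \(P_\cC E_2^\dagger\Pi_s=\delta_{s,\sigma(E_2)}P_\cC E_2^\dagger\). The expression therefore vanishes, and we may take \(\lambda_s=0\), unless \(\sigma(E_1)=\sigma(E_2)=s\). In that case it equals \(P_\cC E_2^\dagger E_1P_\cC\).

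Factor \(E_j=\prod_i E_{j,i}\) with \(E_{j,i}\) supported on \(R_i\). By Step 1, the restriction of \(\sigma(E_j)\) to the generators in \(\cG_i\) equals \(\sigma(E_{j,i})\) restricted to \(\cG_i\), and \(\sigma(E_{j,i})\) is trivial outside \(\cG_i\). So \(\sigma(E_1)=\sigma(E_2)\) forces \(\sigma(E_{1,i})=\sigma(E_{2,i})\) for every \(i\). Thus each \(Q_i\coloneqq E_{2,i}^\dagger E_{1,i}\) commutes with all stabilizers.

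\emph{Step 3: distance.} Each \(Q_i\) is a Pauli commuting with the stabilizer group and has support in \(R_i\), of size less than \(d\). By the definition of code distance, \(Q_i\) lies in the stabilizer group up to a phase, so \(P_\cC Q_iP_\cC=c_iP_\cC\) with \(|c_i|\le1\) (possibly \(c_i=0\) if \(-Q_i\) is a stabilizer up to phase; either way it is a scalar). The \(Q_i\) have disjoint supports and each acts on the code space as a scalar times the identity, so \(P_\cC\prod_iQ_iP_\cC=\prod_ic_i\,P_\cC\). Since \(E_2^\dagger E_1=\prod_iQ_i\) up to a phase from reordering disjoint-support factors (none actually arises), this gives the scalar.

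The main point needing care is Step 1 combined with the componentwise syndrome matching. Equality of total syndromes must be pushed down to equality on each component; this is exactly where the check-connectivity structure of \(\Gamma\) is used. The claim holds even though \(|R|\) itself may exceed \(d\).
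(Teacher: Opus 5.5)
Your proof is correct and follows the paper's route: Pauli expansion, discarding every pair unless \(s(E_1)=s(E_2)=s\), then showing that the trivial-syndrome product acts as a scalar on the code space. The one difference is that you prove this last step inline (each generator meets at most one component of \(\Gamma[R]\), so syndromes match componentwise and each \(Q_i\) is a stabilizer up to phase by the definition of distance), whereas the paper cites Lemma~1 of Ref.~\cite{KovalevPryadko2013}. One harmless slip: since each \(Q_i\) is a stabilizer times a phase, \(c_i\) is that phase and is never \(0\) (if \(-Q_i\) is a stabilizer then \(c_i=-1\)).
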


\begin{proof}
Expand \(O_1\) and \(O_2\) in the Pauli basis. For a Pauli operator
\(E\), let \(s(E)\) denote its syndrome. Then
\[
\Pi_sEP_{\cC}
=
\begin{cases}
EP_{\cC}, & s(E)=s,\\
0, & s(E)\neq s.
\end{cases}
\]
Hence a term $P_{\cC}E_2^\dagger\Pi_sE_1P_{\cC}$
can be nonzero only if \(s(E_1)=s(E_2)=s\). For such a pair,
\(E_2^\dagger E_1\) has trivial syndrome and is supported on \(R\).
Every connected component of its support is contained in a connected
component of \(\Gamma[R]\), and therefore has fewer than \(d\)
vertices. By Lemma~1 of ~\cite{KovalevPryadko2013},
\begin{equation}
P_{\cC}E_2^\dagger E_1P_{\cC}
=
\lambda(E_1,E_2)P_{\cC}.    
\end{equation}

Summing over the Pauli expansions of \(O_1\) and \(O_2\) proves the
claim.
\end{proof}

We now apply Lemma~\ref{lem:app-scalar-compression} to the {error}
expansion. Recall that
\[
\Ad_U=\sum_{B\subseteq\Lambda}F_B.
\]
Let
\begin{equation}
\cT_{\mathrm{small}}
\coloneqq 
\sum_{\substack{B\subseteq\Lambda:\\
\text{every component of }H[B]\text{ has size }<m}}
F_B
\label{eq:app-small-map}
\end{equation}
denote the part of the expansion in which every connected component
has fewer than \(m\) locations.

Define the syndrome channel on the physical system by
\begin{equation}
\cQ_{\mathrm{syn}}(Y)
\coloneqq 
\sum_{s\in\Sigma}
\Tr(\Pi_sY)\ket{s}\!\bra{s}.
\label{eq:app-syndrome-channel}
\end{equation}
The complementary channel \(\cQ_U\) can be written as
\begin{equation}
\cQ_U
=
\cQ_{\mathrm{syn}}\circ\Ad_U\circ\Ad_V.
\label{eq:complementary-channel-factorization}
\end{equation}
Indeed,
\begin{equation}
\begin{aligned}
\cQ_{\mathrm{syn}}
\circ\Ad_U\circ\Ad_V(X)
&=
\sum_{s\in\Sigma}
\Tr\!\left(\Pi_sUVXV^\dagger U^\dagger\right)
\ket{s}\!\bra{s} \\
&=
\sum_{s\in\Sigma}
\Tr\!\left(K_s^\dagger K_sX\right)
\ket{s}\!\bra{s}
=
\cQ_U(X).
\end{aligned}
\end{equation}

\begin{lemma}
\label{lem:app-small-cancellation}
There is a diagonal operator \(\sigma_{\mathrm{small}}\) on the
syndrome register such that
\begin{equation}
\cQ_{\mathrm{syn}}
\circ\cT_{\mathrm{small}}
\circ\Ad_V(X)
=
\Tr(X)\sigma_{\mathrm{small}}.
\label{eq:app-small-trace-only}
\end{equation}
Consequently,
\begin{equation}
\cQ_{\mathrm{syn}}
\circ\cT_{\mathrm{small}}
\circ\Ad_V
\circ(\id-\cD_A)
=
0.
\label{eq:app-small-cancellation}
\end{equation}
\end{lemma}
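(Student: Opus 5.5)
By linearity it suffices to treat one configuration $B$ in the sum \eqref{eq:app-small-map}: every connected component of $H[B]$ has fewer than $m$ locations. I will show that $\cQ_{\mathrm{syn}}\circ F_B\circ\Ad_V(X)=\Tr(X)\sigma_B$ for a diagonal $\sigma_B$ independent of $X$. Summing over such $B$ then gives Eq.~\eqref{eq:app-small-trace-only} with $\sigma_{\mathrm{small}}=\sum_B\sigma_B$. Eq.~\eqref{eq:app-small-cancellation} follows from Lemma~\ref{lem:input-independence}.

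For a fixed $B$, I expand $F_B$ as a composition of the maps $F_\alpha=\Ad_{u_\alpha}-\id$, taken in circuit order. Multiplying out gives $F_B=\sum_{B'\subseteq B}(-1)^{|B\setminus B'|}\Ad_{u_{B'}}$, where $u_{B'}$ is the ordered product of the gates $u_\alpha$ with $\alpha\in B'$. Each $u_{B'}$ is an operator supported on $S_B$. Lemma~\ref{lem:layered-geometry}, item 4, together with $|T|\le r(m-1)<d$, shows that every connected component of $\Gamma[S_B]$ has fewer than $d$ vertices. Hence $R=S_B$ satisfies the hypothesis of Lemma~\ref{lem:app-scalar-compression}.

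I then compute the $s$-th diagonal entry:
\[
\Tr\!\bigl(\Pi_s u_{B'}VXV^\dagger u_{B'}^\dagger\bigr)=\Tr\!\bigl(V^\dagger u_{B'}^\dagger\Pi_s u_{B'}V X\bigr).
\]
Since $V=P_{\cC}V$, the operator $V^\dagger u_{B'}^\dagger\Pi_s u_{B'}V$ equals $V^\dagger P_{\cC}u_{B'}^\dagger\Pi_s u_{B'}P_{\cC}V$. By Lemma~\ref{lem:app-scalar-compression} with $O_1=O_2=u_{B'}$, this is $\lambda_s(u_{B'},u_{B'})I_A$. The entry is therefore $\lambda_s\Tr(X)$. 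Summing over $B'$ with signs and over $s$ gives
\[
\sigma_B=\sum_s\sum_{B'}(-1)^{|B\setminus B'|}\lambda_s(u_{B'},u_{B'})\ket{s}\!\bra{s}.
\]

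The only delicate point is the support argument. The physical support of the whole composed term may exceed $d$, yet Lemma~\ref{lem:app-scalar-compression} needs only that each component of $\Gamma[S_B]$ is small. Item 4 of Lemma~\ref{lem:layered-geometry} supplies exactly this, because components of $\Gamma[S_B]$ sit inside supports of components of $H[B]$. Thus no cluster-by-cluster factorization of $u_{B'}$ is required, and noncommutativity across layers is harmless.
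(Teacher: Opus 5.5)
Your proposal is correct and follows essentially the same route as the paper's proof. You expand $F_B$ into the signed sum of $\Ad_{u_{B'}}$, use item 4 of Lemma~\ref{lem:layered-geometry} to see that $\Gamma[S_B]$ has only subdistance components, apply Lemma~\ref{lem:app-scalar-compression} with $O_1=O_2=u_{B'}$, and then invoke Lemma~\ref{lem:input-independence}.
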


\begin{proof}
For \(B\subseteq\Lambda\) and \(K\subseteq B\), let \(u_K\) denote
the product of the unitaries \(u_\alpha\), \(\alpha\in K\), in circuit
order, with \(u_\varnothing=I\). Expanding the {error} maps gives
\begin{equation}
F_B
=
\sum_{K\subseteq B}
(-1)^{|B|-|K|}\Ad_{u_K}.
\label{eq:explicit-FB}
\end{equation}

Consider a set \(B\) appearing in
Eq.~\eqref{eq:app-small-map}. Every connected component of
\(\Gamma[S_B]\) has fewer than \(d\) qubits. Since
\(\supp(u_K)\subseteq S_B\),
Lemma~\ref{lem:app-scalar-compression} gives
\[
V^\dagger u_K^\dagger\Pi_su_KV
=
\lambda_s(u_K,u_K)I_A
\]
for every \(s\in\Sigma\). Hence
\begin{align}
\cQ_{\mathrm{syn}}
\circ\Ad_{u_K}
\circ\Ad_V(X)
&=
\sum_{s\in\Sigma}
\Tr\!\left(
V^\dagger u_K^\dagger\Pi_su_KVX
\right)
\ket{s}\!\bra{s}
\nonumber\\
&=
\Tr(X)
\sum_{s\in\Sigma}
\lambda_s(u_K,u_K)\ket{s}\!\bra{s}.
\label{eq:app-small-unitary-term}
\end{align}
Substituting Eq.~\eqref{eq:explicit-FB} and summing over the sets
\(B\) in Eq.~\eqref{eq:app-small-map}, the coefficient of each
\(\ket{s}\!\bra{s}\) remains proportional to \(\Tr(X)\). This proves
Eq.~\eqref{eq:app-small-trace-only}.
Equation~\eqref{eq:app-small-cancellation} then follows from
Lemma~\ref{lem:input-independence}.
\end{proof}

Since
\[
\Ad_U-\cT_{\mathrm{small}}
=
\sum_{\substack{B\subseteq\Lambda:\\
H[B]\text{ has a component of size }\geq m}}
F_B,
\]
Eqs.~\eqref{eq:complementary-channel-factorization} and
\eqref{eq:app-small-cancellation} give
\begin{equation}
\begin{aligned}
\cQ_U\circ(\id-\cD_A)
&=
\cQ_{\mathrm{syn}}
\circ\Ad_U
\circ\Ad_V
\circ(\id-\cD_A) \\
&=
\cQ_{\mathrm{syn}}
\circ
\bigl(\Ad_U-\cT_{\mathrm{small}}\bigr)
\circ\Ad_V
\circ(\id-\cD_A).
\end{aligned}
\label{eq:large-component-reduction}
\end{equation}
Thus only {error} sets \(B\) for which \(H[B]\) contains a connected
component of at least \(m\) locations remain. The next appendix bounds
their total contribution.

\subsection{Bounding large cluster contributions}
\label{app:ordered-expansion}

Appendix~\ref{app:subdistance-cancellation} shows that only {error} configurations containing a large connected cluster can contribute to logical information leakage. We now bound their total contribution.

For a certain large connected cluster, there are exponentially many possible choices of {errors} elsewhere in the circuit. Bounding these choices separately would introduce an exponential dependence on the total number of circuit locations. We instead sum over all remaining {errors} before taking the norm.

We first record a bound on the strength of a single channel {error}.
\begin{lemma}
\label{lem:single-fault-norm}
If \(|\eta|\leq\pi/2\), then
\begin{equation}
\norm{F_\alpha}_\diamond
\leq
2\sin|\eta|.
\label{eq:appendix-sine}
\end{equation}
\end{lemma}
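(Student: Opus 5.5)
We need to show $\norm{\Ad_u-\id}_\diamond\le 2\sin|\eta|$ for $u=e^{-\mathrm i\eta h}$ with $h$ Hermitian, $\norm{h}\le1$, and $|\eta|\le\pi/2$. The plan is to reduce to a pure-state trace distance, then to a bound on the spectrum of $u$. Since $\Ad_u\otimes\id_R=\Ad_{u\otimes I_R}$ and the diamond norm of a difference of channels is attained on pure input states (by convexity of the trace norm), it suffices to bound $\norm{\ket{\psi'}\!\bra{\psi'}-\ket{\psi}\!\bra{\psi}}_1$ for every pure $\ket\psi$ on the enlarged system, where $\ket{\psi'}=(u\otimes I)\ket\psi$. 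For pure states this trace norm equals $2\sqrt{1-|\langle\psi|\psi'\rangle|^2}$. We therefore need the lower bound $|\langle\psi|(u\otimes I)|\psi\rangle|\ge\cos\eta$.

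Diagonalize $h=\sum_j\lambda_j P_j$ with $\lambda_j\in[-1,1]$. Then
\[
\langle\psi|(u\otimes I)|\psi\rangle=\sum_j p_j e^{-\mathrm i\eta\lambda_j},\qquad p_j=\langle\psi|(P_j\otimes I)|\psi\rangle,
\]
which is a convex combination of points $e^{-\mathrm i\theta}$ with $|\theta|\le|\eta|\le\pi/2$. All these points lie on the arc of the unit circle of half-angle $|\eta|$ centered at $1$. Their convex hull is contained in the half-plane $\{z:\operatorname{Re}z\ge\cos\eta\}$, because $\cos(\eta\lambda_j)\ge\cos\eta$ when $|\eta\lambda_j|\le|\eta|\le\pi/2$. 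Hence $|\langle\psi|u\otimes I|\psi\rangle|\ge\operatorname{Re}\langle\psi|u\otimes I|\psi\rangle\ge\cos\eta\ge0$.

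Substituting this bound gives $\norm{\ket{\psi'}\!\bra{\psi'}-\ket\psi\!\bra\psi}_1\le2\sqrt{1-\cos^2\eta}=2\sin|\eta|$. Taking the supremum over $\ket\psi$ and over reference dimensions yields Eq.~\eqref{eq:appendix-sine}.

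The main point requiring care is the reduction to pure states on the stabilized system. Here I will cite the standard fact that $\norm{\Phi}_\diamond$ for a Hermiticity-preserving difference of channels is attained on rank-one inputs with a reference of equal dimension; this follows from convexity of $\rho\mapsto\norm{(\Phi\otimes\id)(\rho)}_1$ and purification. The hypothesis $|\eta|\le\pi/2$ enters only to keep $\cos(\eta\lambda)$ monotone in $|\lambda|$, which is what gives $\cos(\eta\lambda_j)\ge\cos\eta$.
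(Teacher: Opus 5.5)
Your proof is correct and follows the paper's argument essentially step for step: you restrict to pure inputs on the extended system, apply the pure-state trace-distance formula $2\sqrt{1-|\langle\psi|(u\otimes I)|\psi\rangle|^2}$, and use the spectral bound $\operatorname{Re}\sum_j p_j e^{-\mathrm{i}\eta\lambda_j}\ge\cos|\eta|$. One small correction to your closing remark: monotonicity of $\cos$ already holds for $|\eta|\le\pi$, and the hypothesis $|\eta|\le\pi/2$ is genuinely needed for $\cos|\eta|\ge 0$, which is what lets you square $|\langle\psi|(u\otimes I)|\psi\rangle|\ge\cos|\eta|$.
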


\begin{proof}
Since \(F_\alpha\) is the difference of two channels, the optimization
in its diamond norm may be restricted to pure joint input states. Let
\(\ket{\psi}_{PR}\) be a pure state of the physical system \(P\) and
an arbitrary reference system \(R\). Using the trace distance between
pure states, we obtain
\begin{align}
&\norm{
(F_\alpha\otimes\id_R)
\bigl(\ket{\psi}\!\bra{\psi}\bigr)
}_1
\nonumber\\
&\quad=
\norm{
(u_\alpha\otimes I_R)\ket{\psi}\!\bra{\psi}
(u_\alpha^\dagger\otimes I_R)
-
\ket{\psi}\!\bra{\psi}
}_1
\nonumber\\
&\quad=
2\sqrt{
1-
\left|
\bra{\psi}u_\alpha\otimes I_R\ket{\psi}
\right|^2
}.
\label{eq:pure-state-distance}
\end{align}

Let
\(\rho_P\coloneqq \Tr_R(\ket{\psi}\!\bra{\psi})\).
Since \(h_\alpha\) is Hermitian and
\(\norm{h_\alpha}\leq 1\), it has a spectral decomposition
\begin{equation}
h_\alpha
=
\sum_j\lambda_j\ket{j}\!\bra{j},
\qquad
\lambda_j\in[-1,1].
\label{eq:local-generator-spectrum}
\end{equation}
Writing
\(p_j\coloneqq \bra{j}\rho_P\ket{j}\), we have
\begin{equation}
\bra{\psi}u_\alpha\otimes I_R\ket{\psi}
=
\Tr(\rho_Pu_\alpha)
=
\sum_jp_je^{-\mathrm{i}\eta\lambda_j}.
\label{eq:local-unitary-overlap}
\end{equation}
For \(|\eta|\leq\pi/2\), Eq.~\eqref{eq:local-unitary-overlap}
satisfies
\begin{align}
\operatorname{Re}\Tr(\rho_Pu_\alpha)
=
\sum_jp_j\cos(\eta\lambda_j)
\nonumber\geq
\cos|\eta|,
\label{eq:local-unitary-real-part}
\end{align}
where we used \(|\lambda_j|\leq1\). Therefore,
\begin{equation}
\left|
\bra{\psi}u_\alpha\otimes I_R\ket{\psi}
\right|
\geq
\cos|\eta|.
\label{eq:local-unitary-overlap-bound}
\end{equation}
Substituting Eq.~\eqref{eq:local-unitary-overlap-bound} into
Eq.~\eqref{eq:pure-state-distance} gives
\begin{equation}
\norm{
(F_\alpha\otimes\id_R)
\bigl(\ket{\psi}\!\bra{\psi}\bigr)
}_1
\leq
2\sin|\eta|.
\label{eq:single-fault-state-bound}
\end{equation}
The bound is uniform over the pure input state and the reference
system. Hence
\begin{equation}
\norm{F_\alpha}_\diamond
\leq
2\sin|\eta|,
\end{equation}
as claimed.
\end{proof}

\subsection{Expansion over large clusters}
\label{app:cluster-resummation}

For \(B\subseteq\Lambda\), call a connected component of \(H[B]\)
large if it contains at least \(m\) locations, and let
\(\nu_m(B)\) denote the number of such components. For every
\(W\subseteq\Lambda\), define
\begin{equation}
\cT_W(t)
\coloneqq 
\sum_{B\subseteq W}
t^{\nu_m(B)}F_B,
\qquad
0\leq t\leq1,
\label{eq:app-interpolation}
\end{equation}
with the convention \(0^0=1\). 

For \(W=\Lambda\), the endpoints are
\begin{equation}
\cT_\Lambda(1)=\Ad_U,
\qquad
\cT_\Lambda(0)=\cT_{\mathrm{small}}.
\label{eq:app-interpolation-endpoints}
\end{equation}
Hence
\begin{equation}
\Ad_U-\cT_{\mathrm{small}}
=
\int_0^1\cT_\Lambda'(t)\,\mathrm{d}t.
\label{eq:app-interpolation-integral}
\end{equation}

We next derive a useful expression for the derivative
\(\cT_W'(t)\). For \(C\subseteq W\), let \(N_W[C]\) denote the
closed neighborhood of \(C\) in the induced graph \(H[W]\).

\begin{lemma}
\label{lem:cluster-identity}
For every \(W\subseteq\Lambda\) and \(0\leq t\leq1\),
\begin{equation}
\cT_W'(t)
=
\sum_{\substack{C\subseteq W:\\
H[C]\text{ connected},\ |C|\geq m}}
F_C\circ
\cT_{W\setminus N_W[C]}(t).
\label{eq:appendix-derivative}
\end{equation}
\end{lemma}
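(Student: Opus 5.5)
The plan is to prove the identity term by term. Since \(\cT_W(t)\) is a polynomial in \(t\), I will differentiate each summand of Eq.~\eqref{eq:app-interpolation} and then reorganize the resulting sum with a bijection between index sets. The one piece of structure the argument relies on is that the maps \(F_\alpha\) and \(F_\beta\) commute whenever \(S_\alpha\cap S_\beta=\varnothing\). Non-adjacency in \(H\) guarantees this disjointness, since adjacency already holds at \(\Gamma\)-distance at most one.

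\textbf{Step 1 (termwise derivative).} For each \(B\subseteq W\), the identity \(\tfrac{\mathrm d}{\mathrm dt}t^{\nu_m(B)}=\nu_m(B)\,t^{\nu_m(B)-1}\) holds for all \(t\in[0,1]\), including \(t=0\). This uses the convention \(0^0=1\): the term is \(0\) when \(\nu_m(B)=0\) and \(1\) when \(\nu_m(B)=1\). Writing \(\nu_m(B)\) as a sum of \(1\) over the large components \(C\) of \(H[B]\) gives
\[
\cT_W'(t)=\sum_{B\subseteq W}\ \sum_{\substack{C\text{ large component of }H[B]}} t^{\nu_m(B)-1}F_B .
\]

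\textbf{Step 2 (bijection).} I will show that the map \((B,C)\mapsto(C,B')\) with \(B'\coloneqq B\setminus C\) is a bijection. Its domain is the set of pairs with \(B\subseteq W\) and \(C\) a large component of \(H[B]\). Its codomain is the set of pairs with \(C\subseteq W\), \(H[C]\) connected, \(|C|\ge m\), and \(B'\subseteq W\setminus N_W[C]\).

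To see that the map lands in the codomain, note that \(C\) is a connected component of \(H[B]\). Hence no vertex of \(B\setminus C\) is adjacent to \(C\), so \(B'\cap N_W[C]=\varnothing\).

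For the inverse, take \(B=C\cup B'\). Because no vertex of \(B'\) lies in \(N_W[C]\), no edge of \(H\) joins \(C\) to \(B'\). Therefore the components of \(H[B]\) are \(C\) together with the components of \(H[B']\).

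This also yields \(\nu_m(B)-1=\nu_m(B')\), so \(t^{\nu_m(B)-1}=t^{\nu_m(B')}\).

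\textbf{Step 3 (factorization of \(F_B\)).} This is the only point requiring care, because \(F_B\) is an ordered composition in circuit order. Take \(\alpha\in C\) and \(\beta\in B'\). They are non-adjacent in \(H\), so their supports are disjoint, and therefore \(\Ad_{u_\alpha}\) and \(\Ad_{u_\beta}\), and hence \(F_\alpha\) and \(F_\beta\), act on disjoint tensor factors and commute.

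Using these commutations, I move all factors indexed by \(C\) to the left in the circuit-ordered product. This preserves the relative order within \(C\) and within \(B'\), so
\[
F_B=F_C\circ F_{B'} .
\]
Substituting into Step 1 and reindexing by Step 2 then gives
\[
\cT_W'(t)=\sum_{C}F_C\circ\sum_{B'\subseteq W\setminus N_W[C]}t^{\nu_m(B')}F_{B'}=\sum_{C}F_C\circ\cT_{W\setminus N_W[C]}(t).
\]
Here \(C\) ranges over connected subsets of \(W\) with \(|C|\ge m\), and this is exactly Eq.~\eqref{eq:appendix-derivative}.

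I expect Step 3 to be the main obstacle: one must check that moving \(F_C\) to the front is legitimate despite the circuit ordering and non-commuting layers. The resolution is that the only pairs ever swapped are cross pairs between \(C\) and \(B'\), and these always have disjoint supports.
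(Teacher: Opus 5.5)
Your proposal is correct and follows the paper's proof essentially step for step. Like the paper, you differentiate termwise, reindex the pairs $(B,C)$ by $(C,B\setminus C)$ with $B\setminus C\subseteq W\setminus N_W[C]$, and use $F_B=F_C\circ F_{B\setminus C}$ together with $\nu_m(B)=1+\nu_m(B\setminus C)$. Your explicit justification of the reordering (only cross pairs with disjoint supports are swapped, and relative order within each group is preserved) spells out a detail the paper states in one line.
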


\begin{proof}
Differentiating Eq.~\eqref{eq:app-interpolation} gives
\begin{equation}
\cT_W'(t)
=
\sum_{B\subseteq W}
\;
\sum_{\substack{
C\text{ a connected component of }H[B]\\
|C|\geq m}}
t^{\nu_m(B)-1}F_B.
\label{eq:raw-derivative}
\end{equation}
Here the inner sum accounts for the factor \(\nu_m(B)\) produced by
differentiation.

For each pair \((B,C)\) in
Eq.~\eqref{eq:raw-derivative}, set \(R\coloneqq B\setminus C\). Since \(C\)
is a connected component of \(H[B]\),
\[
R\subseteq W\setminus N_W[C].
\]
Conversely, any connected \(C\subseteq W\) with \(|C|\geq m\),
together with any \(R\subseteq W\setminus N_W[C]\), gives
\(B=C\cup R\) with \(C\) as a large connected component of \(H[B]\).
Thus the sum could be indexed by \(C\) and \(R\).

There are no edges between \(C\) and \(R\), so the corresponding
{error} maps commute across the two sets. Moreover, removing \(C\)
removes exactly one large component. Hence
\begin{equation}
F_B=F_C\circ F_R,
\qquad
\nu_m(B)=1+\nu_m(R).
\label{eq:marked-factorization}
\end{equation}
Reindexing Eq.~\eqref{eq:raw-derivative} and summing first over \(R\)
now gives
\begin{align}
\cT_W'(t)
&=
\sum_{\substack{C\subseteq W:\\
H[C]\text{ connected},\ |C|\geq m}}
F_C\circ
\sum_{R\subseteq W\setminus N_W[C]}
t^{\nu_m(R)}F_R
\nonumber\\
&=
\sum_{\substack{C\subseteq W:\\
H[C]\text{ connected},\ |C|\geq m}}
F_C\circ
\cT_{W\setminus N_W[C]}(t),
\end{align}
which proves Eq.~\eqref{eq:appendix-derivative}.
\end{proof}

\subsection{Bounding the resummed expansion}

Equation~\eqref{eq:appendix-derivative} expresses
\(\cT'_W(t)\) as a sum of terms
\(F_C\circ\cT_{W\setminus N_W[C]}(t)\).
The single-{error} bound \(\norm{F_\alpha}_\diamond\le a\),
with \(a=2\sin|\eta|\), and submultiplicativity give
\begin{equation}
\norm{F_C}_\diamond
\le \prod_{\alpha\in C}\norm{F_\alpha}_\diamond
\le a^{|C|}.
\end{equation}
To bound the second factor, we use the fact that
\(W\setminus N_W[C]\subseteq\Lambda\) and seek a uniform
bound on \(\norm{\cT_W(t)}_\diamond\) over all
\(W\subseteq\Lambda\) and \(0\le t\le1\).
Accordingly, define
\begin{equation}
    M \coloneqq  \max_{\substack{W\subseteq\Lambda\\0\le t\le1}} \norm{\cT_W(t)}_\diamond.
\end{equation}
For each \(W\), the finite sum defining \(\cT_W(t)\) depends continuously on \(t\), and hence so does its diamond norm. This norm attains a finite maximum on the closed interval \([0,1]\). Since there are only finitely many subsets \(W\subseteq\Lambda\), \(M\) is finite.

Recall the total activity of large connected sets,
\[
\Xi =
\sum_{\substack{C\subseteq\Lambda:\\
H[C]\ \text{\normalfont connected},\ |C|\ge m}}
a^{|C|}.
\]
Using the bound on \(F_C\) and the definition of \(M\),
Eq.~\eqref{eq:appendix-derivative} gives, for every
\(W\subseteq\Lambda\) and \(0\le t\le1\),
\begin{equation}
\norm{\cT'_W(t)}_\diamond
\le
\sum_{\substack{C\subseteq W:\\
H[C]\ \text{\normalfont connected},\ |C|\ge m}}
a^{|C|}M
\le \Xi M.
\end{equation}

At \(t=1\), \(\cT_W(1)\) is the unitary channel corresponding to the circuit obtained from \(U\) by replacing every gate \(u_\alpha\) at a location \(\alpha\notin W\) with the identity. Hence \(\norm{\cT_W(1)}_\diamond=1\).
Integrating from \(t\) to \(1\) and using the derivative
bound above gives
\begin{align}
\norm{\cT_W(t)}_\diamond
&\le
\norm{\cT_W(1)}_\diamond
+\int_t^1\norm{\cT'_W(s)}_\diamond\,ds
\nonumber\\
&\le 1+(1-t)\Xi M.
\end{align}
Taking the maximum over \(W\subseteq\Lambda\) and
\(0\le t\le1\) yields \(M\le1+\Xi M\).
For \(\Xi<1\), this implies
\begin{equation}
M\le\frac{1}{1-\Xi}.
\label{eq:MR-bound}
\end{equation}

We now apply the same derivative bound with \(W=\Lambda\).
Using the endpoint identities
\(\cT_\Lambda(1)=\Ad_U\) and
\(\cT_\Lambda(0)=\cT_{\rm small}\), we obtain
\begin{align}
\norm{\Ad_U-\cT_{\rm small}}_\diamond
&\le
\int_0^1\norm{\cT'_\Lambda(t)}_\diamond\,dt
\nonumber\\
&\le \Xi M
\le \frac{\Xi}{1-\Xi}.
\label{eq:app-ordered-bound}
\end{align}

We now translate Eq.~\eqref{eq:app-ordered-bound} into
a bound on the logical error.
Lemma~\ref{lem:app-small-cancellation} shows that the
contribution of \(\cT_{\rm small}\) to the syndrome output
is independent of the logical input. In particular,
\[
\cQ_{\rm syn}\circ\cT_{\rm small}\circ\Ad_V
\circ(\id-\cD_A)=0.
\]
Since
\(\cQ_U=\cQ_{\rm syn}\circ\Ad_U\circ\Ad_V\),
we can therefore write
\begin{equation}
\cQ_U\circ(\id-\cD_A)
=
\cQ_{\rm syn}\circ(\Ad_U-\cT_{\rm small})
\circ\Ad_V\circ(\id-\cD_A).    
\end{equation}

The channels \(\cQ_{\rm syn}\) and \(\Ad_V\) have
diamond norm one, while
\(\norm{\id-\cD_A}_\diamond\le2\).
By submultiplicativity and
Eq.~\eqref{eq:app-ordered-bound}, for \(\Xi<1\) we obtain
\begin{align}
\norm{\cQ_U\circ(\id-\cD_A)}_\diamond
&\le
\norm{\Ad_U-\cT_{\rm small}}_\diamond
\norm{\id-\cD_A}_\diamond
\nonumber\\
&\le \frac{2\Xi}{1-\Xi}.
\label{eq:app-syndrome-tail}
\end{align}

Lemma~\ref{lem:recovery-from-syndrome} bounds the optimal
recovery error by the square root of this norm:
\[
\eps_n^\star(U,\eta)
\le
\sqrt{\norm{\cQ_U\circ(\id-\cD_A)}_\diamond}.
\]
Combining this relation with
Eq.~\eqref{eq:app-syndrome-tail} and
\(\eps_n^\star(U,\eta)\le1\) gives
\begin{equation}
\eps_n^\star(U,\eta)
\le
\min\left\{
1,\sqrt{\frac{2\Xi}{1-\Xi}}
\right\},
\qquad \Xi<1.
\label{eq:app-finite-code-bound}
\end{equation}
The next appendix bounds \(\Xi\) by counting connected
sets of at least \(m\) locations in the interaction graph.

\subsection{Coherent error threshold}
\label{app:counting-constants}

Appendix~\ref{app:ordered-expansion} bounds the logical error in terms
of the total contribution \(\Xi\) of large {error} clusters. We now
estimate \(\Xi\) and complete the proof of the coherent error threshold.

Let \(N=|\Lambda|\), and let \(N_j\) denote the number of sets
\(C\subseteq\Lambda\) with \(|C|=j\) for which \(H[C]\) is
connected. {We bound $N_j$ using the standard cluster counting lemma.

\begin{lemma}[Lemma~5, Ref. \cite{AliferisGottesmanPreskill2008}]
\label{lem:AGP-cluster-counting}
Let \(H\) be a graph with \(N\) vertices and maximum degree
at most \(\Delta\), set
\(\chi=\max\{1,\mathrm e\Delta\}\), and let
$J_t$ be any fixed set of $t$ vertices.  Denote by
$M_\Delta(j,J_t)$ the number of connected vertex sets
$C \subseteq V(H)$ with $|C|=j\ge t$ and $J_t\subseteq C$.
Then
\begin{equation}
    M_\Delta(j,J_t)
    \le
    \mathrm e^{\,t-1}
    \chi^{j-t}.
    \label{eq:AGP-cluster-count}
\end{equation}
\end{lemma}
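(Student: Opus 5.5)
The plan is to prove the bound $M_\Delta(j,J_t)\le \mathrm e^{t-1}\chi^{j-t}$ by encoding each connected set $C\supseteq J_t$ with a spanning forest rooted at $J_t$, and then counting rooted trees in a graph of bounded degree. The standard encoding is a spanning tree of $H[C]$ explored from a root. Two cases need care: $J_t$ itself need not be connected, and for $t=0$ there is no root at all. For $t\ge1$ (the relevant case; for $t=0$ one fixes a root vertex, which is how $\Xi$ will be bounded anyway) I add an auxiliary vertex $z$ adjacent to every vertex of $J_t$. Then $C$ is connected in $H$ exactly when $C\cup\{z\}$ is connected in the augmented graph, and the $t$ edges from $z$ into $J_t$ can be forced into the spanning tree. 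Concretely, $C$ is determined by a spanning forest of $H[C]$ with $t$ trees, each rooted at one vertex of $J_t$, and with $j-t$ non-root vertices.

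First, I fix an ordering of the neighbors of every vertex (at most $\Delta$ each). I then encode each rooted forest as a tuple of ordered trees, each vertex recording which of its $\le\Delta$ neighbor slots are used by children. Any $C$ is recovered from at least one such forest, so $M_\Delta(j,J_t)$ is at most the number of such labelled structures.

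Second, I count. A plane forest with $t$ roots and $j-t$ additional vertices, where each vertex picks children from $\Delta$ labelled slots, corresponds to choosing $j-t$ occupied slots out of $j\Delta$ slots with tree-consistency constraints. By the cycle lemma for $\Delta$-ary forests, the count is
\[
\frac{t}{j}\binom{\Delta j}{j-t}.
\]
I bound this using $\binom{\Delta j}{j-t}\le (\Delta j)^{j-t}/(j-t)!$ and $(j-t)!\ge ((j-t)/\mathrm e)^{j-t}$. This gives $(\mathrm e\Delta)^{j-t}\,(j/(j-t))^{j-t}$, and then $(1+t/(j-t))^{j-t}\le \mathrm e^{t}$. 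Together with the prefactor $t/j\le 1$, a slightly sharper handling of the binomial should deliver $\mathrm e^{t-1}(\mathrm e\Delta)^{j-t}$. When $\mathrm e\Delta<1$, i.e.\ $\Delta=0$, only $j=t$ is possible and the bound is trivial, which explains the $\max\{1,\cdot\}$ in $\chi$.

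The main obstacle is getting exactly the constant $\mathrm e^{t-1}$ rather than $\mathrm e^{t}$. Since the statement is quoted verbatim as Lemma~5 of Ref.~\cite{AliferisGottesmanPreskill2008}, the cleanest route is simply to invoke it, with the sketch above as justification. If a self-contained argument is required, I would use the generating-function form: the $\Delta$-ary tree series $T(x)=x(1+T)^\Delta$ has a singularity giving coefficients at most $\mathrm e\Delta$ per vertex up to a factor. Treating the $t$ roots as a single root of the augmented vertex $z$ then saves one factor of $\mathrm e$.
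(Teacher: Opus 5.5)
The paper does not prove this lemma; it imports it from Ref.~\cite{AliferisGottesmanPreskill2008}. Your fallback of invoking the citation is therefore exactly what the paper does. Your self-contained sketch is also sound in its combinatorial part:
\begin{itemize}
\item the auxiliary vertex $z$ correctly turns each connected $C\supseteq J_t$ into a spanning forest of $H[C]$ with $t$ trees rooted at the vertices of $J_t$;
\item the slot encoding is injective;
\item the number of $\Delta$-ary plane forests with $t$ ordered roots and $j$ vertices is indeed $\frac{t}{j}\binom{\Delta j}{j-t}$, by Lagrange inversion on $T=x(1+T)^\Delta$.
\end{itemize}

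The gap is the last step, which you leave as ``should deliver.'' With $(j-t)!\ge((j-t)/\mathrm e)^{j-t}$ you only get $\frac{t}{j}\,\mathrm e^{t}(\mathrm e\Delta)^{j-t}$. The prefactor $t/j$ cannot absorb the extra $\mathrm e$: for $j=t+1$, $t/j\to1$. Your generating-function remark does not close this either. Re-rooting the forest at $z$ does not change the count. A saddle-point bound on $[x^j]T(x)^t$ discards the Lagrange factor $t/j$ and by itself gives only $\mathrm e^{t}(\mathrm e\Delta)^{j-t}$.

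The missing factor comes from the factorial, not the binomial. The case $j=t$ is trivial, since the count is $1\le\mathrm e^{t-1}$. For $s\coloneqq j-t\ge1$, use $\ln s!\ge\int_1^s\ln x\,\mathrm dx=s\ln s-s+1$, i.e.\ $s!\ge\mathrm e\,(s/\mathrm e)^s$. Then
\[
\frac{t}{j}\binom{\Delta j}{s}\le\frac{t}{j}\,\frac{(\Delta j)^s}{\mathrm e\,(s/\mathrm e)^s}=\frac{t}{j}\,\mathrm e^{-1}(\mathrm e\Delta)^s\Bigl(1+\frac{t}{s}\Bigr)^{s}\le\mathrm e^{t-1}(\mathrm e\Delta)^{s}\le\mathrm e^{t-1}\chi^{s}.
\]
If $\Delta=0$, the left side vanishes for $s\ge1$, so the last inequality holds in that case too.

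Restricting to $t\ge1$ is also not just a convenience: for $t=0$ the statement as written is false. Indeed $M_\Delta(1,\varnothing)=N$ is unbounded in $N$, while the right side is $\mathrm e^{-1}\chi$. This is harmless here, because the paper only uses the case $t=1$, in Lemma~\ref{lem:connected-set-count}.
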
}

\begin{lemma}
\label{lem:connected-set-count}
Let \(H\) be a graph with \(N\) vertices and maximum degree
at most \(\Delta\), and set
\(\chi=\max\{1,\mathrm e\Delta\}\).
\(N_j\) satisfies
\begin{equation}
N_j\le N\chi^{j-1},
\qquad j\ge1.
\label{eq:connected-set-count}
\end{equation}
\end{lemma}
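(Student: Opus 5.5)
The bound follows directly from Lemma~\ref{lem:AGP-cluster-counting} with $t=1$, combined with a union bound over a distinguished vertex. The idea is to assign each connected set $C$ with $|C|=j$ to one of its vertices. Every such $C$ is nonempty, since $j\ge1$, so it contains at least one vertex $v\in V(H)$. Hence
\[
N_j\le\sum_{v\in V(H)}\#\{C\subseteq V(H): H[C]\ \text{connected},\ |C|=j,\ v\in C\}=\sum_{v\in V(H)}M_\Delta\bigl(j,\{v\}\bigr).
\]
Here each connected set is counted at least once, in fact $j$ times, so the inequality is valid and loses only a factor of $j$. That loss is harmless for our purposes.

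Next, apply Lemma~\ref{lem:AGP-cluster-counting} with $t=1$ and $J_1=\{v\}$. The condition $j\ge t$ holds because $j\ge1$. This gives $M_\Delta(j,\{v\})\le \mathrm e^{0}\chi^{j-1}=\chi^{j-1}$ for every vertex $v$. Summing over the $N$ vertices yields $N_j\le N\chi^{j-1}$, which is Eq.~\eqref{eq:connected-set-count}.

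There is essentially no obstacle here. The only points to check are that the cited lemma is used with a single-vertex anchor set, and that the convention $\chi=\max\{1,\mathrm e\Delta\}$ covers the degenerate case $\Delta=0$. In that case the only connected sets are singletons, and the bound reads $N_1\le N$, with $N_j=0\le N$ for $j\ge2$, which is consistent.

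For later use, this lemma will be combined with item~2 of Lemma~\ref{lem:layered-geometry}, namely $\Delta\le D(\kappa+1)r-1$, and with $N\le Dn$. This gives $\Xi\le\sum_{j\ge m}Dn\chi^{j-1}a^j=\frac{Dn}{\chi}\sum_{j\ge m}q^j$ with $q=\chi a$, which sums to the geometric bound $Dn\,q^m/[\chi(1-q)]$ quoted in the main text.
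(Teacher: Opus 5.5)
Your proposal is correct and follows the paper's proof: you apply Lemma~\ref{lem:AGP-cluster-counting} with $t=1$ and anchor $J_1=\{v\}$ to get $M_\Delta(j,\{v\})\le\chi^{j-1}$, then sum over the $N$ choices of $v$, which counts every connected $j$-set at least once. Your extra remarks on the degenerate case $\Delta=0$ and on the resulting geometric tail bound for $\Xi$ are consistent with how the paper uses the lemma.
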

{
\begin{proof}
For a certain vertex $v$, Lemma~\ref{lem:AGP-cluster-counting}
with $t=1$ gives
\begin{equation}
    M_\Delta(j,\{v\})
    \le
    (\mathrm e\Delta)^{j-1}
    \le
    \chi^{j-1}.
\end{equation}
Summing over the $N$ possible choices of $v$ counts every connected
$j$-vertex set at least once.  Therefore
\begin{equation}
    N_j
    \le
    \sum_{v\in V(H)}M_\Delta(j,\{v\})
    \le
    N\chi^{j-1},
\end{equation}
which proves Eq.~\eqref{eq:connected-set-count}.
\end{proof}}

For the interaction graph, Lemma~\ref{lem:layered-geometry}
gives \(N=|\Lambda|\le Dn\) and
\begin{equation}
\deg(H)\le\Delta,
\qquad
\Delta\coloneqq D(\kappa+1)r-1.
\label{eq:app-conflict-degree}
\end{equation}
Set
\begin{equation}
    \chi\coloneqq \max\{1,\mathrm e\Delta\},
\qquad
q\coloneqq \chi a=2\chi\sin|\eta|.
\end{equation}

For \(|\eta|\le\pi/2\) and \(q<1\), grouping the terms
in \(\Xi\) by the size of the connected set gives
\begin{align}
\Xi
&=\sum_{j=m}^{N}N_j a^j
\nonumber\\
&\le N\sum_{j=m}^{\infty}\chi^{j-1}a^j
=\frac{N}{\chi}\sum_{j=m}^{\infty}q^j
\nonumber\\
&=\frac{N}{\chi}\frac{q^m}{1-q}
\le\frac{Dn}{\chi}\frac{q^m}{1-q}.
\label{eq:appendix-xi-tail}
\end{align}

We next substitute this estimate into the recovery bound.
When \(\Xi\le1/2\),
Eq.~\eqref{eq:app-finite-code-bound} gives
\begin{equation}
\eps_n^\star(U,\eta)
\le\sqrt{\frac{2\Xi}{1-\Xi}}
\le2\sqrt{\Xi}.
\label{eq:appendix-general-small-tail}
\end{equation}
When \(\Xi>1/2\), the bound
\(\eps_n^\star(U,\eta)\le1<2\sqrt{\Xi}\)
gives the same estimate.
Thus \(\eps_n^\star(U,\eta)\le2\sqrt{\Xi}\)
holds in both cases.

The preceding bounds hold for every allowed local coherent error \(U\).
Combining them and using \(m\ge d_n/r\) and \(0\le q<1\),
we obtain
\begin{align}
\sup_U\eps_n^\star(U,\eta)
&\le
2\sqrt{\frac{Dn}{\chi(1-q)}}\,q^{m/2}
\nonumber\\
&\le
2\sqrt{\frac{D}{\chi(1-q)}}\,
\sqrt n\,q^{d_n/(2r)}.
\label{eq:appendix-general-bound}
\end{align}

This estimate gives vanishing recovery error as \(n\to\infty\) for sufficiently small \(|\eta|\).
{
\begin{theorem}[Optimal recovery threshold]
\label{thm:main}
Fix positive integers $w,\ell,D,r$.  Suppose the
stabilizer qLDPC family above
satisfies
\begin{equation}
    d_n = \Omega(\log n)
    \label{eq:log-distance}
\end{equation}
for every $n$.  Define
\begin{equation}
 \begin{aligned}
& \chi\coloneqq \max\{1,\mathrm e(D[\ell(w-1)+1]r-1)\},\\
 &q(\eta)\coloneqq 2\chi\sin|\eta|.
 \end{aligned}
 \label{eq:q-eta-definition}
\end{equation}
For every $|\eta|\le\pi/2$, there exists a constant $q_{\rm th}$, depending only on the stated
code and circuit parameters, such that for every $n$, whenever $|\eta|< \eta_{\rm th}$, we have,
\begin{equation}
    \sup_U  \eps_n^\star(U,\eta) \le2\sqrt{\frac{D}{\chi(1-q)}} \sqrt n\,q^{d_n/(2r)}.
    \label{eq:appendix-main-bound}
\end{equation}
and
\begin{equation}
    \lim_{n\rightarrow\infty} \sup_U  \eps_n^\star(U,\eta) = 0.
    \label{eq:appendix-threshold-limit}
\end{equation}  The supremum is over all realizations
of local coherent error and obeying the stated
bounded-support and layer conditions while fixing $\eta$.  
Specifically, the threshold value can be chosen as 
\begin{equation}
\eta_{\rm th}=
\begin{cases}
\displaystyle
\arcsin\!\left(\frac{e^{-r/c_d}}{2\chi}\right),
& d_n\ge c_d\log n,\\[3mm]
\displaystyle
\arcsin\!\left(\frac{1}{2\chi}\right),
& d_n=\omega(\log n).
\end{cases}
\label{eq:eta-threshold}
\end{equation}
\end{theorem}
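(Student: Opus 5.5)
Almost all the analytic work is already done: Eq.~\eqref{eq:appendix-general-bound} is exactly the bound \eqref{eq:appendix-main-bound}, valid for every $|\eta|\le\pi/2$ with $q<1$. The proof of Theorem~\ref{thm:main} therefore reduces to three tasks: confirming that the hypotheses behind that bound hold uniformly in $U$, choosing $\eta_{\rm th}$ so that $q$ is small enough, and checking the decay \eqref{eq:appendix-threshold-limit}.

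\textbf{Step 1 (uniformity in $U$).} Every ingredient of the bound depends on $U$ only through $D$, $r$, $w$, $\ell$. Lemma~\ref{lem:layered-geometry} gives $|\Lambda|\le Dn$ and $\deg H\le\Delta=D(\kappa+1)r-1$ with $\kappa=\ell(w-1)$, which reproduces the $\chi$ in \eqref{eq:q-eta-definition}. Lemma~\ref{lem:single-fault-norm} gives $a=2\sin|\eta|$. Lemma~\ref{lem:connected-set-count} then bounds $\Xi$ as in \eqref{eq:appendix-xi-tail}. The reduction $\eps^\star\le 2\sqrt{\Xi}$ combines Lemmas~\ref{lem:recovery-from-syndrome} and \ref{lem:app-small-cancellation} with the resummation bound \eqref{eq:app-ordered-bound}, together with the trivial bound $\eps^\star\le1$ when $\Xi>1/2$. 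None of these constants depends on the particular circuit, so taking $\sup_U$ is immediate. Finally, $m=\lceil d_n/r\rceil\ge d_n/r$ and $q<1$ give $q^{m/2}\le q^{d_n/(2r)}$.

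\textbf{Step 2 (choice of threshold).} For $d_n\ge c_d\log n$, define $\eta_{\rm th}$ by $2\chi\sin\eta_{\rm th}=e^{-r/c_d}$. This is well defined because $e^{-r/c_d}/(2\chi)\le 1/2$, since $\chi\ge1$. Then $|\eta|<\eta_{\rm th}$ implies $q<q_{\rm th}:=e^{-r/c_d}<1$. Note that $|\eta|\le\pi/2$ holds automatically, since $\arcsin(\cdot)\le\pi/6$. Writing $q^{d_n/(2r)}=e^{-\frac{d_n}{2r}\log(1/q)}$ and using $d_n\ge c_d\log n$, we get
\[
\sqrt n\, q^{d_n/(2r)}\le n^{1/2-\frac{c_d}{2r}\log(1/q)}.
\]
Since $\log(1/q)>r/c_d$, the exponent is strictly negative. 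Hence the right side of \eqref{eq:appendix-main-bound} tends to zero, which gives \eqref{eq:appendix-threshold-limit}. The prefactor is uniformly bounded because $1-q\ge 1-q_{\rm th}$. For $d_n=\omega(\log n)$, any $q<1$ suffices: $\sqrt n\,q^{d_n/(2r)}=\exp(\tfrac12\log n-\tfrac{d_n}{2r}\log(1/q))\to0$. This yields the choice $\eta_{\rm th}=\arcsin(1/(2\chi))$, for which $q<1$ is exactly the condition needed.

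\textbf{Main obstacle.} There is no deep obstacle left at this point. The only delicate thing is bookkeeping: the constants must be verified to be uniform in $U$ and $n$, and the $\sqrt n$ prefactor must be beaten. The latter requires the strict inequality $q<e^{-r/c_d}$, with a little room to spare. This is precisely where the $\Omega(\log n)$ distance assumption enters, and it is also what produces the explicit constant in the first case of \eqref{eq:eta-threshold}.
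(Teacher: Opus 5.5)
Your proposal is correct and follows the paper's proof essentially step for step: you invoke Eq.~\eqref{eq:appendix-general-bound}, which holds uniformly in $U$, use $|\eta|<\eta_{\rm th}$ to force $q<e^{-r/c_d}$ (respectively $q<1$), and then show $\sqrt n\,q^{d_n/(2r)}\to0$ in each of the two distance regimes. The only difference is cosmetic: you turn $d_n\ge c_d\log n$ into a negative power of $n$, whereas the paper turns $n\le e^{d_n/c_d}$ into $e^{-cd_n}$ with $c=\tfrac12\bigl(\log(1/q_0)/r-1/c_d\bigr)$; working pointwise with the strict inequality $|\eta|<\eta_{\rm th}$ also avoids the paper's slight mismatch between requiring $q_0<e^{-r/c_d}$ and choosing $\eta_{\rm th}$ exactly at the boundary.
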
}
\begin{proof}
For \(|\eta|\le\eta_{\rm th}\), we have
\(q=2\chi\sin|\eta|\le q_0\).
Equation~\eqref{eq:appendix-general-bound} therefore gives
\begin{equation}
\sup_{|\eta|\le\eta_{\rm th}}\sup_U
\eps_n^\star(U,\eta)
\le
2\sqrt{\frac{D}{\chi(1-q_0)}}\,
\sqrt n\,q_0^{d_n/(2r)}.    
\end{equation}

For the first statement, \(d_n\ge c_d\log n\) implies
\(n\le\mathrm e^{d_n/c_d}\). Hence
\begin{equation}
\sqrt n\,q_0^{d_n/(2r)}
\le
\exp\!\left[
-\frac{d_n}{2}
\left(\frac{\log(1/q_0)}{r}-\frac{1}{c_d}\right)
\right]
=e^{-cd_n}.
\label{eq:appendix-tail-mechanism}
\end{equation}
The condition \(q_0<\mathrm e^{-r/c_d}\) ensures \(c>0\).
Combining the preceding estimates and using
\(d_n\ge c_d\log n\), we obtain
\[
\sup_{|\eta|\le\eta_{\rm th}}\sup_U
\eps_n^\star(U,\eta)
\le Ce^{-cd_n}
\le Cn^{-cc_d}
\longrightarrow0.
\]
For the second statement, \(d_n/\log n\to\infty\)
and \(q_0<1\) imply
\begin{equation}
\log\!\left(\sqrt n\,q_0^{d_n/(2r)}\right)
=
\frac12\log n
-\frac{d_n}{2r}\log(1/q_0)
\longrightarrow-\infty.
\label{eq:appendix-superlog-tail}
\end{equation}
Thus \(\sqrt n\,q_0^{d_n/(2r)}\to0\), proving
Eq.~\eqref{eq:appendix-threshold-limit}.
\end{proof}

{
\subsection{Local CPTP error threshold}
\label{app:local-cptp}

The above argument also applies to general \emph{local CPTP error}.
Keep the code, supports $S_\alpha$, and layer conditions of
Appendix~\ref{app:common-setting}, and replace each local unitary
channel by a CPTP map $\Phi_\alpha$ on the same physical qubits.
The resulting noise circuit is
\begin{equation}
 \Phi=\Phi_D\circ\cdots\circ\Phi_1,
 \qquad
 \Phi_t=\bigotimes_{a\in\mathcal A_t}\Phi_{t,a},
 \label{eq:cptp-circuit}
\end{equation}
with the identity on untouched qubits understood. Each support has
size at most $r$ and need not be geometrically connected. Define
the local noise strength by
\begin{equation}
 \frac12\norm{\Phi_\alpha-\id_{S_\alpha}}_\diamond\le p,
 \qquad \alpha\in\Lambda,\quad 0\le p\le1.
 \label{eq:cptp-strength}
\end{equation}
Set $\cN_\Phi\coloneqq \cM\circ\Phi\circ\cE$, and define
$\eps_n^\star(\Phi,p)$ by Eq.~\eqref{eq:app-optimal-error}
with $\cN_U$ replaced by $\cN_\Phi$. Notice that unlike coherent error, the environment of CPTP error is no longer isomorphic to the syndrome space.  
Recovery applies to the syndrome
and postmeasurement physical state, but has no access to the noise environments.

\begin{theorem}[Local CPTP error threshold]
\label{thm:local-cptp}
Fix positive integers $w,\ell,D,r$, and suppose the stabilizer
qLDPC family satisfies $d_n=\Omega(\log n)$.
Let $\chi$ be as in Theorem~\ref{thm:main}, and set
\begin{equation}
 q(p)\coloneqq 2\chi\sqrt{2p}.
 \label{eq:cptp-q}
\end{equation}
For every $0\le p<p_{\rm th}$ and every $n$,
with $q=q(p)$,
\begin{equation}
 \sup_\Phi\eps_n^\star(\Phi,p)
 \le 2\sqrt{\frac{D}{\chi(1-q)}}\,
       \sqrt n\,q^{d_n/(2r)},
 \label{eq:cptp-main-bound}
\end{equation}
and $\lim_{n\to\infty}\sup_\Phi\eps_n^\star(\Phi,p)=0$.
The supremum is over all local CPTP noise realizations satisfying
Eqs.~\eqref{eq:cptp-circuit} and \eqref{eq:cptp-strength},
with the stated support and layer conditions. The threshold can
be chosen as
\begin{equation}
 p_{\rm th}=
 \begin{cases}
 \displaystyle\frac{e^{-2r/c_d}}{8\chi^2},
       & d_n\ge c_d\log n,\\[2mm]
 \displaystyle\frac{1}{8\chi^2},
       & d_n=\omega(\log n),
 \end{cases}
 \label{eq:cptp-threshold}
\end{equation}
where the second choice gives the larger threshold under the
stronger distance assumption.
\end{theorem}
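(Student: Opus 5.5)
The plan is to repeat the unitary argument verbatim, with the local error maps $F_\alpha\coloneqq\Phi_\alpha-\id$ in place of $\Ad_{u_\alpha}-\id$. The main new work is to find a substitute for the two places where unitarity was used.

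The first place is the subdistance cancellation of Lemma~\ref{lem:app-small-cancellation}. For the CPTP case, I would not work with the complementary channel of $\cN_\Phi$ directly, since it now includes the noise environments and the recovery cannot access them. Instead I would Stinespring-dilate each $\Phi_\alpha$ to a unitary $\tilde u_\alpha$ acting on $S_\alpha$ and a fresh environment $E_\alpha$ prepared in $\ket{0}$. These dilations are chosen so that $\tilde u_\alpha$ is close to the identity. Concretely, I would pick the dilation with
\begin{equation}
\tfrac12\norm{\Ad_{\tilde u_\alpha}\circ(\cdot\otimes\ket0\!\bra0)-\id\otimes\ket0\!\bra0}_\diamond\le\sqrt{2p}.
\end{equation}
This follows from the continuity of Stinespring dilations (Kretschmann--Schlingemann--Werner). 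The bound $\frac12\norm{\Phi_\alpha-\id}_\diamond\le p$ gives a Bures distance to the trivial dilation of at most $\sqrt{2p}$, and hence a trace distance of the dilated isometries of at most $\sqrt{2p}$ in the normalization above. The constant $2\sqrt{2p}$ then plays the role of $a=2\sin|\eta|$, which is why $q(p)=2\chi\sqrt{2p}$.

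Giving the environment to the adversary only makes recovery harder: any recovery for the dilated problem that ignores $E$ is a recovery for $\cN_\Phi$. So it suffices to apply the information-disturbance bound to the channel $\cN_\Phi$ itself. Its complementary channel outputs the syndrome together with all the $E_\alpha$ registers.

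The second place is the cancellation itself. I would redo Lemma~\ref{lem:app-scalar-compression} with operators $O_i$ acting on $R\cup E$. Expanding them in a product basis (Paulis on $P$ tensored with matrix units on $E$), the code-space compression still yields scalars times $P_{\cC}$ whenever the physical support has subdistance components. Hence for small configurations $B$ the complementary output is $\Tr(X)\sigma_B$, an operator on $S\otimes E$ that does not depend on $X$. The locality and connectivity statements of Lemma~\ref{lem:layered-geometry} are unchanged, because the environments are private to each location and create no new adjacencies.

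Next, the resummation and counting steps go through as before. Set $F_\alpha$ to be the dilated map minus the trivial embedding. Then $\norm{F_\alpha}_\diamond\le2\sqrt{2p}=a$, Lemma~\ref{lem:cluster-identity} holds because its proof uses only commutation across disconnected sets, and $\cT_W(1)$ is a genuine channel of norm one. Theorem~\ref{thm:sketch-resummation}, the counting bound \eqref{eq:appendix-xi-tail}, and the final estimate then reproduce \eqref{eq:cptp-main-bound}. The thresholds come from requiring $q\le e^{-r/c_d}$, or $q<1$ in the superlogarithmic case, and solving for $p$:
\begin{equation}
p=\frac{q^2}{8\chi^2}.
\end{equation}
This gives the values in \eqref{eq:cptp-threshold}.

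I expect the main obstacle to be the dilation step. It has to deliver the $\sqrt{2p}$ constant, a diamond-norm bound for a map with differing input and output spaces, and a complementary channel that still satisfies the trace-only property. If the continuity theorem's constants do not match exactly, I would fall back on an explicit construction: start from a Kraus decomposition $\Phi_\alpha(\rho)=\sum_k K_k\rho K_k^\dagger$ and define the isometry $\sum_k K_k\otimes\ket{k}$, with the phase of $K_0$ chosen so that it is close to the identity.
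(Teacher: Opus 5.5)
Your proposal is correct and follows essentially the paper's route. The paper also takes a Kretschmann--Schlingemann--Werner dilation $W_\alpha$ with $\norm{W_\alpha-J_\alpha}\le\sqrt{2p}$ and applies the information-disturbance bound to the full complement, meaning the syndrome copy plus all noise environments. It gets the trace-only property for small configurations from environment matrix elements and Lemma~\ref{lem:app-scalar-compression}, and then reruns the resummation and counting unchanged with $a=2\sqrt{2p}$.

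The only bookkeeping difference is that the paper writes $F_\alpha=(\Ad_{W_\alpha}-\Ad_{J_\alpha})\circ\Tr_{E_\alpha}$ on $P\otimes\widetilde E$, with environments pre-initialized in $\ket{0}$. This is what keeps your ``dilated map minus trivial embedding'' within the $2\sqrt{2p}$ bound; a unitary extension $\Ad_{\tilde u_\alpha}-\id$ acting on arbitrary environment inputs would not satisfy it.
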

\begin{proof}
We give the replacements needed in the AQEC proof.
Fix a code and omit the subscript $n$, writing
$P=(\mathbb C^2)^{\otimes n}$ for the physical Hilbert space.
For the given noise channel $\Phi$ in
Eq.~\eqref{eq:cptp-circuit}, the channel to be recovered is
$\cN_\Phi\coloneqq \cM\circ\Phi\circ\cE$.
To apply the AQEC recovery bound, we first construct a
complementary channel $\cQ_\Phi$ of $\cN_\Phi$.

By continuity of Stinespring dilations
\cite[Theorem~1]{KretschmannSchlingemannWerner2008}, each
$\Phi_\alpha$ has an isometric dilation $W_\alpha$ into the
physical support and an environment $E_\alpha$ such that
\begin{equation}
 \norm{W_\alpha-J_\alpha}\le\sqrt{2p},
 \qquad
 J_\alpha\ket{\psi}=\ket{\psi}\otimes\ket{0_\alpha}.
 \label{eq:cptp-local-dilation}
\end{equation}
Let $\widetilde E=\bigotimes_{\alpha\in\Lambda}E_\alpha$.
Composing the local isometries in the original circuit order,
with each acting trivially on previously introduced environments,
gives an isometry $W_\Phi:P\to P\otimes\widetilde E$ satisfying
\[
 \Phi=\Tr_{\widetilde E}\circ\Ad_{W_\Phi}.
\]
Combining this noise dilation with the measurement dilation
of Appendix~\ref{app:syndrome-recovery} gives
\begin{equation}
 \cQ_\Phi
 =
 (\cQ_{\mathrm{syn}}\otimes\id_{\widetilde E})
 \circ\Ad_{W_\Phi}\circ\cE.
 \label{eq:cptp-complement}
\end{equation}
Its output includes both the copied syndrome register $E$
and the noise environment $\widetilde E$.
For coherent noise, a
copied syndrome space alone gives the environment.
For general CPTP noise, the environment is different and must be specified.

We now expand the dilation appearing in
Eq.~\eqref{eq:cptp-complement}.
To place all local maps on the common Hilbert space
$P\otimes\widetilde E$, prepare the environments in
$\ket{0_{\widetilde E}}=\bigotimes_\alpha\ket{0_\alpha}$
and replace the local error map of
Eq.~\eqref{eq:app-error-map} by
\begin{equation}
 \begin{aligned}
 F_\alpha\coloneqq 
   (\Ad_{W_\alpha}-\Ad_{J_\alpha})\circ\Tr_{E_\alpha}, \qquad
 \norm{F_\alpha}_\diamond\le2\sqrt{2p}.
 \end{aligned}
 \label{eq:cptp-error-map}
\end{equation}
The partial trace removes the input environment $E_\alpha$,
while the isometries produce its output.
The norm bound follows by expanding the difference of the
two isometric channels and using
$\norm{AXB}_1\le\norm{A}\norm{X}_1\norm{B}$,
also with an arbitrary reference system.
Using the same ordered products $F_B$ of the modified error maps,
Eq.~\eqref{eq:cptp-complement} becomes
\begin{equation}
 \cQ_\Phi(X)
 =
 \sum_{B\subseteq\Lambda}
 (\cQ_{\mathrm{syn}}\otimes\id_{\widetilde E})
 \left[
 F_B\!\left(
 \cE(X)\otimes
 \ket{0_{\widetilde E}}\!\bra{0_{\widetilde E}}
 \right)
 \right].
 \label{eq:cptp-complement-expansion}
\end{equation}
Indeed, immediately before applying the channel $F_\alpha$, its environment
is still in $\ket{0_\alpha}$ and uncorrelated with the other
systems. Thus $\Ad_{J_\alpha}\circ\Tr_{E_\alpha}$ acts as the
identity on that input, and $\id+F_\alpha$ implements the
local dilation $\Ad_{W_\alpha}$.
Expanding these factors in circuit order gives the sum over
$B$ in Eq.~\eqref{eq:cptp-complement-expansion}.

In the definitions of $\cT_{\mathrm{small}}$ and $\cT_W(t)$,
replace each $F_B$ by the map
$Y\mapsto F_B(Y\otimes
\ket{0_{\widetilde E}}\!\bra{0_{\widetilde E}})$.
The same argument shows that $\cT_W(1)$ is the CPTP dilation
of the subcircuit on $W$, with unused environments left in
their initial states. Hence
$\norm{\cT_W(1)}_\diamond=1$ and
$\cT_\Lambda(1)=\Ad_{W_\Phi}$.
Here, \(\cT_W(t)\) is defined with all environment registers initialized in \(\ket{0_{\widetilde E}}\). Hence \(\cT_W(1)\) is CPTP, even though \(\id+F_\alpha\), viewed as a channel on arbitrary inputs of \(P\otimes\widetilde E\), need not be CPTP, unlike the coherent error case.

For a small-component set \(B\), we proceed as in the coherent error case, except that the noise environment is retained rather than traced out. Fix two basis states \(\ket{e},\ket{e'}\) of \(\widetilde E\). For a certain syndrome \(s\), the coefficient of \(\ket{e}\!\bra{e'}\) in the environment output has the form $\Tr\!\left[
V^\dagger O_{e'}^\dagger \Pi_s O_e V\,X
\right]$,
where \(O_e\) and \(O_{e'}\) are physical operators supported on \(S_B\). Since \(B\) has only small connected components, Lemma~\ref{lem:app-scalar-compression} gives
$$
V^\dagger O_{e'}^\dagger \Pi_s O_e V
=\lambda_{s,e,e'} I_A.
$$
The above coefficient is therefore \(\lambda_{s,e,e'}\Tr(X)\). Since this holds for every \(e,e'\), the entire environment output in each syndrome sector depends on \(X\) only through \(\Tr(X)\). Hence Eq.~\eqref{eq:app-small-cancellation} remains valid with \(\cQ_{\mathrm{syn}}\) replaced by \(\cQ_{\mathrm{syn}}\otimes\id_{\widetilde E}\).

The interaction graph and $m=\lceil d/r\rceil$ are unchanged.
Maps belonging to distinct components act on disjoint physical
supports and distinct environments, so
Lemma~\ref{lem:cluster-identity} applies to the modified
interpolation without further changes.
The derivative estimate and integration in
Appendix~\ref{app:ordered-expansion} use only this identity,
the bound $\norm{F_\alpha}_\diamond\le a$,
and the CPTP endpoints.
They therefore give Eq.~\eqref{eq:app-ordered-bound}
with $a=2\sqrt{2p}$ and $\Ad_U$ replaced by $\cT_\Lambda(1)$.

Applying Lemma~\ref{lem:recovery-from-syndrome} to the full
complement \eqref{eq:cptp-complement} gives
$\eps_n^\star(\Phi,p)\le2\sqrt{\Xi}$ by the same argument.
The counting estimate \eqref{eq:appendix-xi-tail}, with
$q=2\chi\sqrt{2p}$, then yields Eq.~\eqref{eq:cptp-main-bound}.
Finally, the logarithmic and superlogarithmic distance arguments
in Theorem~\ref{thm:main} apply with this $q$:
$q<e^{-r/c_d}$ and $q<1$, respectively, give the two choices
in Eq.~\eqref{eq:cptp-threshold} and the stated limit.
\end{proof}
}

\section{Minimum-weight decoding threshold}
\label{app:minimum-weight}

In this section, we establish a coherent error threshold for {minimum-weight Pauli decoding}. We adopt the code and coherent noise model of Appendix~\ref{app:common-setting} and assume perfect syndrome measurements. The correction depends only on the measured syndrome.

We follow the same three steps as in the proof of the threshold for optimal recovery: identifying configurations with no nontrivial logical action, separating the contribution of a large connected component through an interpolation, and bounding all such contributions by cluster resummation and counting. The main difference is that, under minimum-weight decoding, at least half of the qubits in each component lie in the error support. The contribution of a component therefore decays with its size, which gives a direct bound on the logical error after recovery.

\subsection{Minimum-weight decoder and channel expansion}
\label{sec:MW-channel-expansion}
For each valid syndrome \(s\in\Sigma\), fix a minimum-weight Pauli correction
\begin{equation}
 C_s\in\operatorname*{argmin}_{C:\,s(C)=s}\operatorname{wt}(C).
 \label{eq:MW-decoder-choice}
\end{equation}
Since \(C_sP_{\cC}C_s^\dagger=\Pi_s\), applying \(C_s^\dagger\)
after observing \(s\) returns the physical state to the code space.
We then decode with \(V^\dagger\) and denote the resulting recovery
by \(\cR^{\rm MW}\). We extend it arbitrarily to a CPTP map from
\(S\otimes P\) to \(A\). We define the \emph{minimum-weight decoding logical error} by
\begin{equation}
 \eps_n^{\rm MW}(U,\eta)
 \coloneqq \frac12\norm{\cR^{\rm MW}\circ\cN_U-\id_A}_\diamond.
 \label{eq:MW-error-app}
\end{equation}
Since \(\cR^{\rm MW}\) has already been specified, this definition
contains no minimization over recovery maps.  

For \(s\in\Sigma\), define
\begin{equation}
 \cR_s(Y)\coloneqq V^\dagger C_s^\dagger\Pi_sY\Pi_sC_sV.
 \label{eq:MW-recovery-branch}
\end{equation}
Then \(\sum_{s\in\Sigma}\cR_s=\cR^{\rm MW}\circ\cM\).
Using \(\Ad_U=\sum_{B\subseteq\Lambda}F_B\), we obtain
\begin{equation}
 \cR^{\rm MW}\circ\cN_U
 =\sum_{B\subseteq\Lambda}\sum_{s\in\Sigma}
   \cR_s\circ F_B\circ\cE.
 \label{eq:MW-syndrome-expansion}
\end{equation}

In the proof for optimal recovery, it was enough to consider the
connected components of the error support \(S_B\). For minimum-weight
decoding, we must also include the support of the correction, since
we bound the logical error after applying \(C_s\). Thus, for each
\((B,s)\), define
\begin{equation}
 S_{B,s}\coloneqq S_B\cup\supp C_s.
 \label{eq:MW-configuration-support}
\end{equation}
We then consider the connected components of
\(\Gamma[S_{B,s}]\).

\subsection{Recoverability of small errors}
\label{sec:MW-small-errors}

\begin{lemma}
\label{lem:MW-small-components}
For every \(B\subseteq\Lambda\) and \(s\in\Sigma\), the following
statements hold.
\begin{enumerate}
\item If every connected component of \(\Gamma[S_{B,s}]\) has fewer
than \(d\) vertices, then \(\cR_s\circ F_B\circ\cE\) is a scalar
multiple of \(\id_A\).

\item If \(\cR_s\circ F_B\circ\cE\ne0\), then every connected component
\(K\) of \(\Gamma[S_{B,s}]\) satisfies
\begin{equation}
 |K\cap S_B|\ge\frac{|K|}{2}.
 \label{eq:MW-half-coverage}
\end{equation}
\end{enumerate}
\end{lemma}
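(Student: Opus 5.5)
Both items will be derived from the explicit expansion Eq.~\eqref{eq:explicit-FB}, $F_B=\sum_{K\subseteq B}(-1)^{|B|-|K|}\Ad_{u_K}$, which gives
\[
\cR_s\circ F_B\circ\cE(X)=\sum_{K\subseteq B}(-1)^{|B|-|K|}\,V^\dagger C_s^\dagger\Pi_s u_K V X V^\dagger u_K^\dagger\Pi_s C_s V .
\]
Every operator appearing here is supported on $S_{B,s}$: $u_K$ is supported on $S_B$ and $C_s$ on $\supp C_s$.

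\textbf{Item 1.} I will prove the stronger statement that each operator $V^\dagger C_s^\dagger\Pi_s u_K V$ is a scalar multiple of $I_A$. Expand $u_K$ in Pauli operators $E$ supported on $S_B$. Since $\Pi_s EP_{\cC}$ equals $EP_{\cC}$ when $s(E)=s$ and vanishes otherwise, only Pauli terms with $s(E)=s$ survive. For each such $E$, the product $C_s^\dagger E$ has trivial syndrome and is supported on $S_{B,s}$. Every connected component of its support therefore lies inside a component of $\Gamma[S_{B,s}]$, and so has fewer than $d$ vertices. Lemma~1 of~\cite{KovalevPryadko2013}, used exactly as in Lemma~\ref{lem:app-scalar-compression} with $O_2=C_s$ and $O_1=u_K$ on $R=S_{B,s}$, gives
\[
P_{\cC}C_s^\dagger E P_{\cC}=\lambda P_{\cC}.
\]
Hence $V^\dagger C_s^\dagger\Pi_s u_K V=\mu_K I_A$, each summand above equals $|\mu_K|^2X$, and the whole sum is a scalar times $X$. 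No appeal to Lemma~\ref{lem:app-scalar-compression} with the adjoint structure is needed.

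\textbf{Item 2.} I argue by contraposition using minimality of $C_s$. Suppose some component $K$ of $\Gamma[S_{B,s}]$ has $|K\cap S_B|<|K|/2$. Write $C_s=C_K C_{\bar K}$, with $C_K$ the restriction of $C_s$ to $K$. Because $K$ is a component, no check touches both $K$ and $S_{B,s}\setminus K$, so the syndrome splits across the two regions.

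Nonvanishing of the term means some Pauli $E$ in the expansion of some $u_K$, with $s(E)=s$, contributes. Its restriction $E_K$ to $K\cap S_B$ then has the same syndrome as $C_K$ on the checks meeting $K$. Consider $C'=E_K C_{\bar K}$. It has syndrome $s$, since the checks meeting $K$ see $E_K$ and the other checks see $C_{\bar K}$. Its weight is
\[
\operatorname{wt}(C')\le |K\cap S_B|+\operatorname{wt}(C_{\bar K}).
\]
On the other side, $K\subseteq (K\cap S_B)\cup(K\cap\supp C_s)$ gives $\operatorname{wt}(C_K)\ge |K|-|K\cap S_B|>|K|/2>|K\cap S_B|$. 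So $C'$ has strictly smaller weight than $C_s$, contradicting minimality. Hence the term vanishes, which is the contrapositive of Eq.~\eqref{eq:MW-half-coverage}.

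\textbf{Main obstacle.} The delicate point is the syndrome-splitting claim in item 2. A check may meet $K$ only through qubits outside $S_{B,s}$, and on those qubits both $E$ and $C_s$ act trivially. So the claim reduces to a localization statement: the syndrome bits of a Pauli supported on $S_{B,s}$ at checks meeting $K$ depend only on its restriction to $K$. This holds because any check meeting $K$ and another part of $S_{B,s}$ would connect them in $\Gamma$, contradicting that $K$ is a component.

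A second subtlety is that $E$ ranges over a superposition of Paulis. This is handled by noting that a single surviving Pauli term suffices to produce the weight contradiction: $\Pi_s u_K V\neq0$ forces some $E$ with $s(E)=s$.
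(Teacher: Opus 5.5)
Your proof is correct and follows essentially the paper's argument: item 1 uses the syndrome projection to discard mismatched Pauli terms and then applies Lemma~\ref{lem:app-scalar-compression} with $R=S_{B,s}$ and $O_2=C_s$, and item 2 compares $C_s$ with the correction that agrees with a surviving syndrome-$s$ Pauli $E$ on $K$ and with $C_s$ off $K$. The differences are cosmetic: you organize item 1 through the $\Ad_{u_K}$ expansion of Eq.~\eqref{eq:explicit-FB} rather than a direct Pauli expansion of $F_B$, and you phrase item 2 as a contrapositive with a strict weight inequality (note that you reuse the letter $K$ for both a subset of $B$ and a component).
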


\begin{proof}
Expand \(F_B\) in the Pauli basis. Since \(F_B\) is supported on
\(S_B\), every term has the form
\(Y\mapsto E_1YE_2^\dagger\), where \(E_1\) and \(E_2\) are supported
on \(S_B\). The syndrome projections leave only terms with
\(s(E_1)=s(E_2)=s\).

Suppose that every connected component of \(\Gamma[S_{B,s}]\) has
fewer than \(d\) vertices. For each such term,
Lemma~\ref{lem:app-scalar-compression}, applied with
\(R=S_{B,s}\), \(O_1=E_i\), and \(O_2=C_s\) for \(i=1,2\), gives
\begin{equation}
 V^\dagger C_s^\dagger E_iV
 =\lambda_s(E_i,C_s)I_A,
 \qquad i=1,2.
 \label{eq:MW-small-residual}
\end{equation}
Thus every term in \(\cR_s\circ F_B\circ\cE\) is proportional to
\(\id_A\), and so is their sum. This proves the first statement.

For the second statement, suppose that
\(\cR_s\circ F_B\circ\cE\ne0\). Then the Pauli expansion contains
at least one term \(Y\mapsto E_1YE_2^\dagger\) with
\(s(E_1)=s(E_2)=s\). Fix \(E=E_1\). The residual operator
\(C_s^\dagger E\) has zero syndrome and is supported on \(S_{B,s}\).

Let \(K\) be a connected component of \(\Gamma[S_{B,s}]\).
By the definition of \(\Gamma\), any two qubits in the support of
the same stabilizer generator are adjacent. Hence the support of a stabilizer generator intersects at most one connected component of \(\Gamma[S_{B,s}]\). Since
\(C_s^\dagger E\) has zero syndrome, its restriction to \(K\) also
has zero syndrome. Therefore,
\(C_s[(C_s^\dagger E)|_K]\) is another Pauli correction with syndrome \(s\). Up to an overall
phase, it agrees with \(E\) on \(K\) and with \(C_s\) outside \(K\).
By the minimality of \(C_s\),
\begin{align}
 \operatorname{wt}(C_s)
 &\le
 \operatorname{wt}\bigl(C_s[(C_s^\dagger E)|_K]\bigr)
 \nonumber\\
 &=\operatorname{wt}(C_s|_{K^c})
   +\operatorname{wt}(E|_K).
 \label{eq:MW-component-comparison}
\end{align}
Since
\begin{equation}
    \operatorname{wt}(C_s)
 =
 \operatorname{wt}(C_s|_K)
 +\operatorname{wt}(C_s|_{K^c}),
\end{equation}
we have
\begin{equation}
    \operatorname{wt}(C_s|_K)
 \le\operatorname{wt}(E|_K)
 \le |K\cap S_B|.
\end{equation}
Finally, \(K\subseteq S_B\cup\supp C_s\) implies
\begin{equation}
 |K|
 \le |K\cap S_B|+\operatorname{wt}(C_s|_K)
 \le2|K\cap S_B|.
 \label{eq:MW-half-coverage-proof}
\end{equation}
This proves the second statement.
\end{proof}

Thus, a configuration can have a nontrivial logical action only if
\(\Gamma[S_{B,s}]\) contains a connected component of at least \(d\)
qubits. For every nonzero contribution, at least half of the qubits
in each component lie in \(S_B\).
\subsection{Expansion over large clusters}
\label{sec:MW-interpolation}
For \(B\subseteq\Lambda\) and \(s\in\Sigma\), call a connected
component of \(\Gamma[S_{B,s}]\) large if it contains at least \(d\)
qubits, and let \(\mathcal C_d(B,s)\) denote the set of large
components. As in Eq.~\eqref{eq:app-interpolation}, define
\begin{equation}
 \cT^{\rm MW}(t)
 \coloneqq 
 \sum_{B\subseteq\Lambda}\sum_{s\in\Sigma}
 t^{|\mathcal C_d(B,s)|}
 \cR_s\circ F_B\circ\cE,
 \qquad 0\le t\le1.
 \label{eq:MW-interpolation}
\end{equation}
We use the convention \(0^0=1\).

The endpoints are
\begin{equation}
 \cT^{\rm MW}(1)=\cR^{\rm MW}\circ\cN_U,\quad \cT^{\rm MW}(0)=c_\Lambda\id_A,
 \label{eq:MW-interpolation-endpoints}
\end{equation}
where the second identity follows from
Lemma~\ref{lem:MW-small-components} for some scalar \(c_\Lambda\).
The map \(\cT^{\rm MW}(0)\) need not preserve trace, so
\(c_\Lambda\) need not equal one. Hence
\begin{equation}
 \cR^{\rm MW}\circ\cN_U-c_\Lambda\id_A
 =
 \int_0^1
 \frac{\mathrm d}{\mathrm dt}\cT^{\rm MW}(t)\,\mathrm dt.
 \label{eq:MW-interpolation-integral}
\end{equation}

Differentiating Eq.~\eqref{eq:MW-interpolation} gives
\begin{equation}
 \begin{aligned}
 \frac{\mathrm d}{\mathrm dt}\cT^{\rm MW}(t)
 ={}&\sum_{B\subseteq\Lambda}\sum_{s\in\Sigma}
       \sum_{K\in\mathcal C_d(B,s)}
       t^{|\mathcal C_d(B,s)|-1}\\
 &\hspace{18mm}\times\cR_s\circ F_B\circ\cE.
 \end{aligned}
 \label{eq:MW-raw-derivative}
\end{equation}
In the proof for optimal recovery, for each large component \(C\),
the remaining errors sum to
\(\cT_{W\setminus N_W[C]}(t)\). This decomposition does not hold
for minimum-weight decoding because \(C_s\) depends on the total syndrome. We therefore group the terms by their large components
without separating the sum over \(s\).

Let \(\mathcal C_d\) denote all connected subsets \(K\) of the
vertices of \(\Gamma\) with \(|K|\ge d\). A collection
\(\mathcal J\subseteq\mathcal C_d\) is called compatible if its sets
are pairwise disjoint and no edge of \(\Gamma\) joins two distinct
sets in \(\mathcal J\). For every nonempty compatible
\(\mathcal J\), define
\begin{equation}
 \mathcal W_{\mathcal J}
 \coloneqq 
 \sum_{B\subseteq\Lambda}
 \sum_{\substack{
 s\in\Sigma:\,
 \mathcal J\subseteq\mathcal C_d(B,s)}}
 \cR_s\circ F_B\circ\cE.
 \label{eq:MW-joint-channel}
\end{equation}
The condition
\(\mathcal J\subseteq\mathcal C_d(B,s)\) means that every
\(K\in\mathcal J\) is a large connected component of
\(\Gamma[S_{B,s}]\).

For \(K\in\mathcal C_d(B,s)\), use
\(t=1+(t-1)\) for each
\(K'\in\mathcal C_d(B,s)\setminus\{K\}\). Then
\begin{align}
 t^{|\mathcal C_d(B,s)|-1}
 &=
 \prod_{K'\in\mathcal C_d(B,s)\setminus\{K\}}
 [1+(t-1)]
 \nonumber\\
 &=
 \sum_{\mathcal J\subseteq
       \mathcal C_d(B,s)\setminus\{K\}}
 (t-1)^{|\mathcal J|}.
 \label{eq:MW-t-expansion}
\end{align}
For fixed \(K\) and \(\mathcal J\), the sum over \(B\) and \(s\)
satisfying
\begin{equation}
    \{K\}\cup\mathcal J\subseteq\mathcal C_d(B,s)
\end{equation}
is \(\mathcal W_{\{K\}\cup\mathcal J}\). Reordering the finite sums
in Eq.~\eqref{eq:MW-raw-derivative} gives
\begin{equation}
 \frac{\mathrm d}{\mathrm dt}\cT^{\rm MW}(t)
 =
 \sum_{K\in\mathcal C_d}
 \sum_{\substack{
 \mathcal J\subseteq\mathcal C_d\setminus\{K\}:\\
 \{K\}\cup\mathcal J\ \text{compatible}}}
 (t-1)^{|\mathcal J|}
 \mathcal W_{\{K\}\cup\mathcal J}.
 \label{eq:MW-component-derivative}
\end{equation}


\subsection{Resummation of the remaining errors}
\label{sec:MW-channel-resummation}
Unlike the case of the optimal recovery threshold, we need to further derive the upper bound for $W_{\mathcal J}$ using another layer of resummation.
Fix a nonempty compatible collection
\(\mathcal J=\{K_1,\ldots,K_j\}\), and let
\(
 J\coloneqq \bigcup_{i=1}^jK_i.
\)
Let \(\partial J\) be the set of vertices outside \(J\) that are
adjacent to \(J\). The requirement that every \(K_i\) be a connected
component of \(\Gamma[S_{B,s}]\) is equivalent to
\begin{equation}
 J\subseteq S_B\cup\supp C_s,
 \qquad
 (S_B\cup\supp C_s)\cap\partial J=\varnothing.
 \label{eq:MW-component-condition}
\end{equation}
The first condition ensures that every vertex of \(J\) belongs to
\(S_{B,s}\), while the second prevents \(J\) from being connected to
any vertex of \(S_{B,s}\setminus J\). Since the \(K_i\) are connected
and pairwise nonadjacent, they are then distinct connected components
of \(\Gamma[S_{B,s}]\).

For a set of physical qubits \(X\), define
\( \Lambda_X\coloneqq \{\alpha\in\Lambda:S_\alpha\cap X\ne\varnothing\}.\)
For \(B\) satisfying Eq.~\eqref{eq:MW-component-condition}, let
\(
 T\coloneqq B\cap\Lambda_J.
\)
The second condition in
Eq.~\eqref{eq:MW-component-condition} implies
\begin{equation}
 T\subseteq\Lambda_J\setminus\Lambda_{\partial J},
 \qquad
 B\cap\Lambda_{\partial J}=\varnothing.
 \label{eq:MW-fixed-locations}
\end{equation}
Hence \(B\) can be written uniquely as
\begin{equation}
     B=T\cup B_{\rm out},
 \qquad
 B_{\rm out}\subseteq
 \Lambda\setminus\Lambda_{J\cup\partial J}.
\end{equation}
Summing over \(B_{\rm out}\), define
\begin{equation}
 F_{J,T}
 \coloneqq 
 \sum_{B_{\rm out}\subseteq
       \Lambda\setminus\Lambda_{J\cup\partial J}}
 F_{T\cup B_{\rm out}}.
 \label{eq:MW-completed-map}
\end{equation}
Equivalently, \(F_{J,T}\) is obtained by composing the maps in their
original circuit order: \(F_\alpha\) is used for \(\alpha\in T\),
\(\id\) for
\(\alpha\in\Lambda_{J\cup\partial J}\setminus T\), and
\(\id+F_\alpha=\Ad_{u_\alpha}\) at every other location.
Thus no commutation between circuit locations is used.

The locations in \(B_{\rm out}\) do not act on \(J\) or
\(\partial J\), so \(J\cap S_B=J\cap S_T\). For the \(K_i\) to remain connected
components of \(\Gamma[S_{B,s}]\), the correction must cover
\(J\setminus S_T\) and avoid \(\partial J\). The syndromes satisfying
these conditions form the set
\begin{equation}
 \Sigma_J(T)\coloneqq 
 \left\{
 s\in\Sigma:
 \begin{array}{l}
 J\setminus S_T\subseteq\supp C_s,\\
 \supp C_s\cap\partial J=\varnothing
 \end{array}
 \right\}.
 \label{eq:MW-syndrome-set}
\end{equation}
Using the notation
\(\cR_\Omega\coloneqq \sum_{s\in\Omega}\cR_s\) for
\(\Omega\subseteq\Sigma\), the sums over \(B_{\rm out}\) and
\(s\in\Sigma_J(T)\) give
\begin{equation}
 \mathcal W_{\mathcal J}
 =
 \sum_{T\subseteq
       \Lambda_J\setminus\Lambda_{\partial J}}
 \cR_{\Sigma_J(T)}\circ F_{J,T}\circ\cE.
 \label{eq:MW-resummation}
\end{equation}
All choices of \(B_{\rm out}\) are summed before any norm is taken.

\subsection{Bounding the resummed expansion}
\label{sec:MW-resummed-bound}

Let \(a\coloneqq 2\sin|\eta|\) for \(|\eta|\le\pi/2\). By
Lemma~\ref{lem:single-fault-norm},
\(\norm{F_\alpha}_\diamond\le a\). Assume \(a\le1\).
In \(F_{J,T}\), each location in \(T\) contributes \(F_\alpha\),
while every other location contributes either \(\id\) or a unitary
channel. Submultiplicativity gives
\begin{equation}
 \norm{F_{J,T}}_\diamond\le a^{|T|}.
 \label{eq:MW-completed-map-bound}
\end{equation}

The Kraus operators of \(\cR_\Omega\) satisfy
\begin{equation}
 \sum_{s\in\Omega}
 (V^\dagger C_s^\dagger\Pi_s)^\dagger
 (V^\dagger C_s^\dagger\Pi_s)
 =
 \sum_{s\in\Omega}\Pi_s
 \le I_P.
 \label{eq:MW-restricted-recovery-effect}
\end{equation}
Thus \(\cR_\Omega\) is completely positive and
trace-nonincreasing, with
\(\norm{\cR_\Omega}_\diamond\le1\), independently of
\(|\Omega|\).

\begin{lemma}
\label{lem:MW-joint-channel-bound}
Let \(\lambda_{\rm MW}\coloneqq 2^D a^{1/(2r)}\). For every nonempty
compatible collection \(\mathcal J\), with
\(J\coloneqq \bigcup_{K\in\mathcal J}K\),
\begin{equation}
 \norm{\mathcal W_{\mathcal J}}_\diamond
 \le\lambda_{\rm MW}^{|J|}.
 \label{eq:MW-joint-channel-bound}
\end{equation}
\end{lemma}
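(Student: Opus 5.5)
The proof will start from the resummed representation in Eq.~\eqref{eq:MW-resummation} and bound it term by term with the triangle inequality. Three bounds combine: $\norm{\cR_{\Sigma_J(T)}}_\diamond\le1$ from Eq.~\eqref{eq:MW-restricted-recovery-effect}, $\norm{\cE}_\diamond=1$, and $\norm{F_{J,T}}_\diamond\le a^{|T|}$ from Eq.~\eqref{eq:MW-completed-map-bound}. This gives
\[
\norm{\mathcal W_{\mathcal J}}_\diamond\le\sum_{T\subseteq\Lambda_J\setminus\Lambda_{\partial J}:\ \Sigma_J(T)\neq\varnothing,\ \text{term}\neq0} a^{|T|}.
\]
We then restrict to those $T$ that can actually contribute and control two things: how many such $T$ there are, and how small $|T|$ can be.

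For the count, we use $\Lambda_J$. Each qubit of $J$ lies in at most one location per layer, since supports within a layer are pairwise disjoint. Hence $|\Lambda_J|\le D|J|$, and the number of subsets $T$ is at most $2^{D|J|}=(2^D)^{|J|}$. This accounts for the factor $2^D$ in $\lambda_{\rm MW}$.

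For the size of $T$, we invoke item~2 of Lemma~\ref{lem:MW-small-components}. Suppose a term with $T$ contributes, so that some $B=T\cup B_{\rm out}$ and some $s\in\Sigma_J(T)$ give $\cR_s\circ F_B\circ\cE\neq0$ with every $K\in\mathcal J$ a component of $\Gamma[S_{B,s}]$. Then $|K\cap S_B|\ge|K|/2$ for each such $K$. Since $J\cap S_B=J\cap S_T$, summing over $K\in\mathcal J$ gives $|S_T\cap J|\ge|J|/2$. Each location covers at most $r$ qubits, so $|T|\ge|J|/(2r)$. With $a\le1$ this yields $a^{|T|}\le a^{|J|/(2r)}$, and combining with the count gives $\norm{\mathcal W_{\mathcal J}}_\diamond\le 2^{D|J|}a^{|J|/(2r)}=\lambda_{\rm MW}^{|J|}$.

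The main obstacle is the second step. Lemma~\ref{lem:MW-small-components} is stated for individual pairs $(B,s)$, but after resumming over $B_{\rm out}$ and $s$ we must justify discarding the terms with small $T$. The plan is to go back before resummation: split $\mathcal W_{\mathcal J}$ into the sum over $(B,s)$ with $\cR_s\circ F_B\circ\cE\ne0$, and group by $T=B\cap\Lambda_J$. For any $T$ with $|S_T\cap J|<|J|/2$, every individual summand vanishes by the lemma. Hence the entire grouped term is identically zero, and these $T$ can be dropped from Eq.~\eqref{eq:MW-resummation} before the norm is taken. A nonempty $\mathcal J$ guarantees $|J|\ge d\ge1$, so the exponent bookkeeping is valid.
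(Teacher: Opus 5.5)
Your proposal is correct and follows essentially the same route as the paper. Item~2 of Lemma~\ref{lem:MW-small-components}, together with $J\cap S_B=J\cap S_T$, discards the $T$ that cannot contribute; this gives $r|T|\ge|J|/2$, you count $T\subseteq\Lambda_J$ using $|\Lambda_J|\le D|J|$, and you combine $\norm{F_{J,T}}_\diamond\le a^{|T|}$ with $\norm{\cR_{\Sigma_J(T)}}_\diamond\le1$. Two details differ only cosmetically: you check explicitly that each discarded group vanishes summand by summand before any norm is taken, and you use the aggregate condition $|S_T\cap J|\ge|J|/2$ instead of the per-component one; both just spell out the paper's one-line assertion.
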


\begin{proof}
By Eqs.~\eqref{eq:MW-resummation} and
\eqref{eq:MW-completed-map},
\(\mathcal W_{\mathcal J}\) is a sum over \(T\),
\(B_{\rm out}\), and \(s\in\Sigma_J(T)\), with
\(B=T\cup B_{\rm out}\). Since the locations in
\(B_{\rm out}\) do not act on \(J\), we have
\(K\cap S_B=K\cap S_T\) for every \(K\in\mathcal J\).
Lemma~\ref{lem:MW-small-components} therefore implies that only
\(T\) satisfying
\begin{equation}
 |K\cap S_T|\ge\frac{|K|}{2}
 \qquad\text{for every }K\in\mathcal J
 \label{eq:MW-local-cover-condition}
\end{equation}
can contribute.

For every \(T\) satisfying
Eq.~\eqref{eq:MW-local-cover-condition}, the sets in
\(\mathcal J\) are disjoint and each location acts on at most
\(r\) qubits. Hence
\begin{equation}
 r|T|
 \ge |J\cap S_T|
 =\sum_{K\in\mathcal J}|K\cap S_T|
 \ge\frac{|J|}{2}.
 \label{eq:MW-T-lower-bound}
\end{equation}
Since the supports within each layer are disjoint, at most
\(|J|\) of them can intersect \(J\). Hence
\(|\Lambda_J|\le D|J|\), and there are at most
\(2^{D|J|}\) choices of \(T\).

For \(0<a\le1\), Eq.~\eqref{eq:MW-T-lower-bound} gives
\(a^{|T|}\le a^{|J|/(2r)}\). Using
Eq.~\eqref{eq:MW-completed-map-bound} and
\(\norm{\cR_{\Sigma_J(T)}}_\diamond\le1\), we obtain
\begin{align}
 \norm{\mathcal W_{\mathcal J}}_\diamond
 &\le
 \sum_{\substack{
 T\subseteq\Lambda_J\setminus\Lambda_{\partial J}:\\
 |K\cap S_T|\ge|K|/2\ 
 \text{ for every }K\in\mathcal J}}
 a^{|T|}
 \nonumber\\
 &\le
 2^{D|J|}a^{|J|/(2r)}
 =\lambda_{\rm MW}^{|J|}.
 \label{eq:MW-joint-channel-estimate}
\end{align}
\end{proof}

Since the sets in \(\mathcal J\) are disjoint,
\(|J|=\sum_{K\in\mathcal J}|K|\). Hence
Lemma~\ref{lem:MW-joint-channel-bound} gives
\(\norm{\mathcal W_{\mathcal J}}_\diamond
\le\prod_{K\in\mathcal J}\lambda_{\rm MW}^{|K|}\).
Define
\begin{equation}
 \Xi_{\rm MW}
 \coloneqq 
 \sum_{K\in\mathcal C_d}\lambda_{\rm MW}^{|K|}.
 \label{eq:MW-Xi}
\end{equation}
After taking norms in
Eq.~\eqref{eq:MW-component-derivative}, we drop the condition that
\(\{K\}\cup\mathcal J\) be compatible and sum over all
\(\mathcal J\subseteq\mathcal C_d\setminus\{K\}\). Using
the subset expansion in
Eq.~\eqref{eq:MW-t-expansion}, we obtain
\begin{align}
 \norm{\frac{\mathrm d}{\mathrm dt}
       \cT^{\rm MW}(t)}_\diamond
 &\le
 \sum_{K\in\mathcal C_d}\lambda_{\rm MW}^{|K|}
 \prod_{K'\in\mathcal C_d\setminus\{K\}}
 \left[1+(1-t)\lambda_{\rm MW}^{|K'|}\right]
 \nonumber\\
 &\le
 \Xi_{\rm MW}e^{(1-t)\Xi_{\rm MW}}.
 \label{eq:MW-derivative-bound}
\end{align}
The second inequality follows from \(1+x\le e^x\) for \(x\ge0\).

Integrating this bound and using
Eq.~\eqref{eq:MW-interpolation-integral}, we obtain
\begin{align}
 \norm{\cR^{\rm MW}\circ\cN_U
       -\cT^{\rm MW}(0)}_\diamond
 &\le
 \int_0^1
 \norm{\frac{\mathrm d}{\mathrm dt}
       \cT^{\rm MW}(t)}_\diamond\,\mathrm dt
 \nonumber\\
 &\le e^{\Xi_{\rm MW}}-1.
 \label{eq:MW-integrated-bound}
\end{align}
By Eq.~\eqref{eq:MW-interpolation-endpoints},
\(\cT^{\rm MW}(0)=c_\Lambda\id_A\). Since
\(\cR^{\rm MW}\circ\cN_U\) is trace preserving, every logical state
\(\sigma\) satisfies
\begin{align}
 |1-c_\Lambda|
 &=
 \left|
 \Tr\left[
 \bigl(
 \cR^{\rm MW}\circ\cN_U-\cT^{\rm MW}(0)
 \bigr)(\sigma)
 \right]
 \right|
 \nonumber\\
 &\le
 \norm{
 \cR^{\rm MW}\circ\cN_U-\cT^{\rm MW}(0)
 }_\diamond.
 \label{eq:MW-small-scalar-bound}
\end{align}
The triangle inequality then gives
\begin{equation}
 \begin{aligned}
 \eps^{\rm MW}(U,\eta)
 &\le
 \frac12\left(
 \norm{
 \cR^{\rm MW}\circ\cN_U-\cT^{\rm MW}(0)
 }_\diamond
 +|c_\Lambda-1|
 \right)\\
 &\le
 \norm{
 \cR^{\rm MW}\circ\cN_U-\cT^{\rm MW}(0)
 }_\diamond.
 \end{aligned}
 \label{eq:MW-recovery-from-tail}
\end{equation}
Combining Eqs.~\eqref{eq:MW-integrated-bound} and
\eqref{eq:MW-recovery-from-tail} with
\(\eps^{\rm MW}(U,\eta)\le1\), we obtain
\begin{equation}
 \eps^{\rm MW}(U,\eta)
 \le
 \min\{1,e^{\Xi_{\rm MW}}-1\}
 \le 2\Xi_{\rm MW}.
 \label{eq:MW-finite-code-bound}
\end{equation}
For \(\Xi_{\rm MW}\le1/2\), the second inequality follows from
\(e^{\Xi_{\rm MW}}-1
\le\Xi_{\rm MW}/(1-\Xi_{\rm MW})
\le2\Xi_{\rm MW}\). For \(\Xi_{\rm MW}>1/2\), it follows from
\(\eps^{\rm MW}(U,\eta)\le1<2\Xi_{\rm MW}\).

\subsection{Coherent error threshold}
\label{sec:MW-threshold}

Equation~\eqref{eq:MW-Xi} is a sum over connected sets of
\(\Gamma\). Since \(\Gamma\) has \(n\) vertices and maximum degree
at most \(\kappa\), Lemma~\ref{lem:connected-set-count} applies with
\begin{equation}
 \chi_{\rm MW}\coloneqq \max\{1,\mathrm e\kappa\},
 \quad
 \rho_{\rm MW}(\eta)
 \coloneqq \chi_{\rm MW}\lambda_{\rm MW}.
 \label{eq:MW-counting-constants}
\end{equation}
Write \(\rho_{\rm MW}\coloneqq \rho_{\rm MW}(\eta)\). For
\(\rho_{\rm MW}<1\), grouping the connected sets by their size gives
\begin{align}
 \Xi_{\rm MW}
 &\le
 n\sum_{j=d}^n
 \chi_{\rm MW}^{j-1}\lambda_{\rm MW}^j
 \nonumber\\
 &\le
 \frac{n}{\chi_{\rm MW}}
 \sum_{j=d}^\infty\rho_{\rm MW}^j
 =
 \frac{n}{\chi_{\rm MW}}
 \frac{\rho_{\rm MW}^d}{1-\rho_{\rm MW}}.
 \label{eq:MW-component-tail}
\end{align}
The assumption \(a\le1\) used above follows from
\(\rho_{\rm MW}<1\), since
\(a<(2^D\chi_{\rm MW})^{-2r}\le1\).

\begin{theorem}[Minimum-weight decoding threshold]
\label{thm:MW-threshold}
Fix positive integers \(w,\ell,D,r\), and suppose the stabilizer
qLDPC family satisfies \(d_n=\Omega(\log n)\). Let
\(\chi_{\rm MW}\) and \(\rho_{\rm MW}(\eta)\) be defined by
Eq.~\eqref{eq:MW-counting-constants}. For every \(n\) and every
\(|\eta|\le\pi/2\) such that
\(\rho_{\rm MW}\coloneqq \rho_{\rm MW}(\eta)<1\),
\begin{equation}
 \sup_U\eps_n^{\rm MW}(U,\eta)
 \le
 \frac{2n}{\chi_{\rm MW}(1-\rho_{\rm MW})}
 \rho_{\rm MW}^{d_n}.
 \label{eq:MW-main-bound}
\end{equation}
A threshold angle may be chosen as
\begin{equation}
 \eta_{\rm th}^{\rm MW}
 =
 \begin{cases}
 \displaystyle
 \arcsin\!\left[
 \frac12
 \left(
 \frac{\mathrm e^{-1/c_d}}
      {2^D\chi_{\rm MW}}
 \right)^{2r}
 \right],
 & d_n\ge c_d\log n,\\[3mm]
 \displaystyle
 \arcsin\!\left[
 \frac{1}{2(2^D\chi_{\rm MW})^{2r}}
 \right],
 & d_n=\omega(\log n).
 \end{cases}
 \label{eq:MW-threshold-angle}
\end{equation}
For every \(|\eta|<\eta_{\rm th}^{\rm MW}\),
\begin{equation}
 \lim_{n\to\infty}
 \sup_U\eps_n^{\rm MW}(U,\eta)=0.
 \label{eq:MW-threshold-limit}
\end{equation}
The supremum is over all allowed coherent error circuits
\(U_n(\eta)\). The decoder depends only on the code and the measured
syndrome.
\end{theorem}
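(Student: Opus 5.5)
The plan is to combine the finite-size estimate already established, Eq.~\eqref{eq:MW-finite-code-bound}, with the counting bound \eqref{eq:MW-component-tail}. Then, in the logarithmic and superlogarithmic distance regimes, I would choose $\eta$ so that $\rho_{\rm MW}$ is small enough, exactly as in the proof of Theorem~\ref{thm:main}.

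\textbf{Step 1: the main bound.} Fix $n$, a circuit $U$, and $|\eta|\le\pi/2$ with $\rho_{\rm MW}<1$. As noted after Eq.~\eqref{eq:MW-component-tail}, this forces $a\le1$, so Lemma~\ref{lem:MW-joint-channel-bound} and the derivative estimate \eqref{eq:MW-derivative-bound} apply. Together they give $\eps^{\rm MW}_n(U,\eta)\le 2\Xi_{\rm MW}$. Substituting \eqref{eq:MW-component-tail} yields
\[
\eps^{\rm MW}_n(U,\eta)\le \frac{2n}{\chi_{\rm MW}(1-\rho_{\rm MW})}\rho_{\rm MW}^{d_n}.
\]
The right-hand side depends only on $n$, $d_n$, the code parameters and $\eta$, not on $U$. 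Taking the supremum over $U$ therefore gives Eq.~\eqref{eq:MW-main-bound}.

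\textbf{Step 2: the threshold angle.} Since $\rho_{\rm MW}=\chi_{\rm MW}2^D(2\sin|\eta|)^{1/(2r)}$, the condition $\rho_{\rm MW}\le\rho_0$ is equivalent to
\[
\sin|\eta|\le\tfrac12\bigl(\rho_0/(2^D\chi_{\rm MW})\bigr)^{2r}.
\]
In the case $d_n\ge c_d\log n$, take $\rho_0=e^{-1/c_d}$. Then $|\eta|<\eta^{\rm MW}_{\rm th}$ gives $\rho_{\rm MW}<e^{-1/c_d}<1$, so Step 1 applies. Using $n\le e^{d_n/c_d}$,
\[
n\rho_{\rm MW}^{d_n}\le \exp\!\bigl[-d_n(\log(1/\rho_{\rm MW})-1/c_d)\bigr],
\]
and the exponent is strictly negative. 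Hence the bound is $\le Ce^{-cd_n}\le Cn^{-cc_d}\to0$.

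In the case $d_n=\omega(\log n)$, take $\rho_0=1$; any $|\eta|<\eta^{\rm MW}_{\rm th}$ then gives $\rho_{\rm MW}<1$. We have
\[
\log\bigl(n\rho_{\rm MW}^{d_n}\bigr)=\log n-d_n\log(1/\rho_{\rm MW})\to-\infty,
\]
which proves Eq.~\eqref{eq:MW-threshold-limit}.

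\textbf{Main obstacle.} The substantive work is the bound $\eps^{\rm MW}\le2\Xi_{\rm MW}$, and that has already been carried out above. What remains is to check that the hypotheses used there hold under the stated threshold. The point needing care is the strict inequality: $|\eta|<\eta_{\rm th}^{\rm MW}$ must give $\rho_{\rm MW}$ strictly below $e^{-1/c_d}$, respectively strictly below $1$. This follows because $\sin$ is strictly increasing on $[0,\pi/2]$, so the factor $(1-\rho_{\rm MW})^{-1}$ stays finite and the decay rate $c$ is strictly positive.
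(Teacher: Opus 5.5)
Your proposal is correct and follows the paper's proof essentially step for step. You combine Eq.~\eqref{eq:MW-finite-code-bound} with the counting estimate \eqref{eq:MW-component-tail}, take the supremum over $U$, and then use $n\le e^{d_n/c_d}$ (respectively $\log n/d_n\to0$), with the threshold angle guaranteeing $\rho_{\rm MW}<e^{-1/c_d}$ (respectively $\rho_{\rm MW}<1$); the only cosmetic difference is that the paper treats $\eta=0$ separately, whereas your argument covers it automatically because $\rho_{\rm MW}=0$ makes the bound vanish.
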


\begin{proof}
Equations~\eqref{eq:MW-finite-code-bound} and
\eqref{eq:MW-component-tail} hold for every allowed \(U\).
Taking the supremum over \(U\) gives
Eq.~\eqref{eq:MW-main-bound}.

At \(\eta=0\), we have \(U=I\), and the correction of minimum-weight decoder
for the zero syndrome is the identity. Recovery is therefore exact.
We henceforth assume
\(0<|\eta|<\eta_{\rm th}^{\rm MW}\).

Suppose first that \(d_n\ge c_d\log n\). The first choice in
Eq.~\eqref{eq:MW-threshold-angle} gives
\(\rho_{\rm MW}<\mathrm e^{-1/c_d}\). Since
\(n\le\mathrm e^{d_n/c_d}\) for all sufficiently large \(n\),
\begin{equation}
 n\rho_{\rm MW}^{d_n}
 \le
 \exp\!\left[
 -d_n\left(
 \log(1/\rho_{\rm MW})-\frac1{c_d}
 \right)
 \right].
 \label{eq:MW-distance-suppression}
\end{equation}
Since
\(\log(1/\rho_{\rm MW})-1/c_d>0\), the right-hand side of
Eq.~\eqref{eq:MW-distance-suppression} decays exponentially in
\(d_n\). Equation~\eqref{eq:MW-main-bound} therefore implies
Eq.~\eqref{eq:MW-threshold-limit}.

If \(d_n=\omega(\log n)\), the second choice in
Eq.~\eqref{eq:MW-threshold-angle} gives \(\rho_{\rm MW}<1\).
Since \(\log n/d_n\to0\) and
\(\log(1/\rho_{\rm MW})>0\),
\begin{equation}
 \log\!\left(n\rho_{\rm MW}^{d_n}\right)
 =
 d_n\left(
 \frac{\log n}{d_n}-\log(1/\rho_{\rm MW})
 \right)
 \longrightarrow-\infty.
 \label{eq:MW-superlog-tail}
\end{equation}
Equation~\eqref{eq:MW-main-bound} again implies
Eq.~\eqref{eq:MW-threshold-limit}.
\end{proof}

\subsection{Local CPTP noise threshold}
\label{sec:MW-local-cptp}

We now consider the local CPTP noise model of
Appendix~\ref{app:local-cptp}, using the same minimum-weight
decoder as above. Let
\(\cN_\Phi\coloneqq \cM\circ\Phi\circ\cE\), where \(\Phi\) satisfies
Eqs.~\eqref{eq:cptp-circuit} and \eqref{eq:cptp-strength}, and define
\begin{equation}
 \eps_n^{\rm MW}(\Phi,p)
 \coloneqq 
 \frac12
 \norm{\cR^{\rm MW}\circ\cN_\Phi-\id_A}_\diamond.
 \label{eq:MW-cptp-error}
\end{equation}

\begin{theorem}[Local CPTP noise threshold]
\label{cor:MW-local-cptp}
Under the assumptions of Theorem~\ref{thm:MW-threshold}, define
\begin{equation}
 \rho_{\rm MW}(p)
 \coloneqq 
 2^D\chi_{\rm MW}(2p)^{1/(2r)}.
 \label{eq:MW-cptp-effective-noise}
\end{equation}
For every \(n\) and every \(0\le p\le1\) such that
\(\rho_{\rm MW}(p)<1\),
\begin{equation}
 \sup_\Phi\eps_n^{\rm MW}(\Phi,p)
 \le
 \frac{2n}
 {\chi_{\rm MW}[1-\rho_{\rm MW}(p)]}
 \rho_{\rm MW}(p)^{d_n}.
 \label{eq:MW-cptp-main-bound}
\end{equation}
A valid threshold value is
\begin{equation}
 p_{\rm th}^{\rm MW}
 =
 \begin{cases}
 \displaystyle
 \frac12
 \left(
 \frac{\mathrm e^{-1/c_d}}
      {2^D\chi_{\rm MW}}
 \right)^{2r},
 & d_n\ge c_d\log n,\\[3mm]
 \displaystyle
 \frac{1}
 {2(2^D\chi_{\rm MW})^{2r}},
 & d_n=\omega(\log n).
 \end{cases}
 \label{eq:MW-cptp-threshold}
\end{equation}
For every \(0\le p<p_{\rm th}^{\rm MW}\),
\begin{equation}
 \lim_{n\to\infty}
 \sup_\Phi\eps_n^{\rm MW}(\Phi,p)=0.
 \label{eq:MW-cptp-threshold-limit}
\end{equation}
The supremum is over all local CPTP noise circuits satisfying
Eqs.~\eqref{eq:cptp-circuit} and \eqref{eq:cptp-strength}. The
decoder uses only the code and the measured syndrome.
\end{theorem}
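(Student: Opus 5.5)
The plan is to repeat the minimum-weight argument of Theorem~\ref{thm:MW-threshold} verbatim, with the local CPTP dilation of Appendix~\ref{app:local-cptp} in place of the unitary expansion. Each $\Phi_\alpha$ gets an isometric dilation $W_\alpha$ with $\norm{W_\alpha-J_\alpha}\le\sqrt{2p}$. We use the modified error maps $F_\alpha=(\Ad_{W_\alpha}-\Ad_{J_\alpha})\circ\Tr_{E_\alpha}$ with environments initialized in $\ket{0_{\widetilde E}}$. Then $\Phi=\Tr_{\widetilde E}\circ\sum_B F_B(\,\cdot\otimes\ket{0_{\widetilde E}}\!\bra{0_{\widetilde E}})$. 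Since the recovery has no access to $\widetilde E$, we set
\[
\cR^{\rm MW}\circ\cN_\Phi=\sum_{B}\sum_s(\cR_s\otimes\Tr_{\widetilde E})\circ F_B\circ(\cE\otimes\ket{0}\!\bra{0}).
\]
The point to note is the strength of a single error. Rather than using $a=2\sqrt{2p}$, we bound each Pauli/Kraus term by writing $W_\alpha=J_\alpha+\Delta_\alpha$ with $\norm{\Delta_\alpha}\le\sqrt{2p}$. The threshold formula $\rho_{\rm MW}(p)=2^D\chi_{\rm MW}(2p)^{1/(2r)}$ then corresponds to $\lambda_{\rm MW}=2^D a^{1/(2r)}$ with $a=2p$. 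So I will aim to show that the effective per-location weight entering Lemma~\ref{lem:MW-joint-channel-bound} is $a=2p\le 1$. If that fails, I will fall back to $a=2\sqrt{2p}$ and adjust constants, which changes only the threshold value.

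Next, Lemma~\ref{lem:MW-small-components} must be re-verified. The operators acting on $P$ are now $O_e=(\bra{e}\otimes I)W_B(\ket{0}\otimes I)$ supported on $S_B$. The partial trace over $\widetilde E$ gives Kraus pairs $O_e(\cdot)O_e^\dagger$, which have the same support structure as before. Part~1 follows from Lemma~\ref{lem:app-scalar-compression} with $R=S_{B,s}$. Part~2, the half-coverage bound, follows from the same minimality argument applied to any Pauli $E$ in the expansion of $O_e$ with syndrome $s$.

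Then I define $\cT^{\rm MW}(t)$ exactly as in Eq.~\eqref{eq:MW-interpolation}. The derivative identity Eq.~\eqref{eq:MW-component-derivative} and the resummation Eq.~\eqref{eq:MW-resummation} are purely combinatorial in $(B,s)$, so both carry over. In $F_{J,T}$, the locations outside $\Lambda_{J\cup\partial J}$ become $\id+F_\alpha$, which acts as the dilation isometry channel on the fresh environment. The locations in $\Lambda_{J\cup\partial J}\setminus T$ contribute the $\Ad_{J_\alpha}\circ\Tr_{E_\alpha}$ padding, which is norm one on the relevant inputs. This is the main obstacle: these intermediate maps are not CPTP on arbitrary inputs of $P\otimes\widetilde E$. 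Norm bounds therefore have to be taken on the state-restricted maps, as in Appendix~\ref{app:local-cptp}. I would handle this by bounding everything applied to inputs of the form $Y\otimes\ket{0}\!\bra{0}$ and tracking each environment register as either fresh or produced. That gives $\norm{F_{J,T}}\le a^{|T|}$ and $\norm{\cR_\Omega\otimes\Tr_{\widetilde E}}_\diamond\le1$.

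With Lemma~\ref{lem:MW-joint-channel-bound} re-established, the derivative bound Eq.~\eqref{eq:MW-derivative-bound} follows unchanged, and integration gives $\eps^{\rm MW}\le 2\Xi_{\rm MW}$. The $c_\Lambda$ argument still holds because $\cR^{\rm MW}\circ\cN_\Phi$ is trace preserving. The counting bound Eq.~\eqref{eq:MW-component-tail} then gives Eq.~\eqref{eq:MW-cptp-main-bound}. The threshold values follow from requiring $\rho_{\rm MW}(p)<e^{-1/c_d}$ or $<1$, exactly as in the proof of Theorem~\ref{thm:MW-threshold}. These requirements are what produce Eq.~\eqref{eq:MW-cptp-threshold}.
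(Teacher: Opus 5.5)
\textbf{Gap: the per-location weight.} Your plan leaves open the one step that fixes the constants, and the route you propose for it cannot work. With the Stinespring error maps $F_\alpha=(\Ad_{W_\alpha}-\Ad_{J_\alpha})\circ\Tr_{E_\alpha}$, the per-location weight is genuinely of order $\sqrt p$, not $p$. Take $\Phi_\alpha(\rho)=(1-p)\rho+pX\rho X$, which has $\frac12\norm{\Phi_\alpha-\id}_\diamond=p$. For every dilation, the environment branch carries amplitude $\sqrt p$, so
\[
\norm{(\Ad_{W_\alpha}-\Ad_{J_\alpha})(\ket{0}\!\bra{0}\otimes\ket{0_\alpha}\!\bra{0_\alpha})}_1\ge2\sqrt p,
\]
and likewise $\norm{W_\alpha-J_\alpha}\ge\sqrt p$. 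Splitting $W_\alpha=J_\alpha+\Delta_\alpha$ does not help, since the cross terms $J_\alpha(\cdot)\Delta_\alpha^\dagger$ are only controlled by $\norm{\Delta_\alpha}\le\sqrt{2p}$. Your fallback $a=2\sqrt{2p}$ gives a correct but different theorem. It yields $\lambda_{\rm MW}=2^D(2\sqrt{2p})^{1/(2r)}$, which is strictly larger than $2^D(2p)^{1/(2r)}$ for every $p\in(0,1]$. It also yields a threshold of order $\frac18\bigl(\mathrm e^{-1/c_d}/(2^D\chi_{\rm MW})\bigr)^{4r}$. Neither matches Eq.~\eqref{eq:MW-cptp-main-bound} with the stated $\rho_{\rm MW}(p)$ or the stated $p^{\rm MW}_{\rm th}$. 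So ``adjusting constants'' does not prove the statement as given.

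\textbf{Missing observation: no dilation is needed.} The dilation was used in Appendix~\ref{app:local-cptp} only because the optimal-recovery bound runs through a complementary channel. The minimum-weight argument instead bounds $\cR^{\rm MW}\circ\cN_\Phi-\id_A$ directly, and the recovery never sees $\widetilde E$. Since no later location acts on $E_\alpha$, you may trace it out immediately after location $\alpha$. Your expansion then collapses termwise to the one built from $F_\alpha\coloneqq\Phi_\alpha-\id_{S_\alpha}$, whose norm bound $\norm{F_\alpha}_\diamond\le2p$ is simply the hypothesis \eqref{eq:cptp-strength}. Each $F_B$ is a linear combination of maps $Y\mapsto E_1YE_2^\dagger$ with Paulis supported on $S_B$, which is all that Lemma~\ref{lem:MW-small-components} uses. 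In $F_{J,T}$, every location outside $T$ is either $\id$ or $\Phi_\alpha$, both CPTP with diamond norm one. Hence $\norm{F_{J,T}}_\diamond\le(2p)^{|T|}$ holds without any state-restricted norms, and the ``main obstacle'' you identify never arises. With $a=2p\le1$, which is implied by $\rho_{\rm MW}(p)<1$, the rest of your argument (Lemma~\ref{lem:MW-joint-channel-bound}, the derivative bound, the $c_\Lambda$ step, and the counting) goes through and gives exactly the stated constants.
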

\begin{proof}
For each location, set
\(F_\alpha\coloneqq \Phi_\alpha-\id_{S_\alpha}\). By
Eq.~\eqref{eq:cptp-strength},
\(\norm{F_\alpha}_\diamond\le2p\).
Expanding \(\Phi_\alpha=\id_{S_\alpha}+F_\alpha\) in circuit order
gives the channel expansion in
Eq.~\eqref{eq:MW-syndrome-expansion}, with
\(\cN_U\) replaced by \(\cN_\Phi\).

Each \(F_B\) is supported on \(S_B\) and can be written as a linear combination of maps \(Y\mapsto E_1YE_2^\dagger\), where \(E_1\) and
\(E_2\) are Pauli operators supported on \(S_B\). These are the only
properties of \(F_B\) used in the proof of
Lemma~\ref{lem:MW-small-components}. Therefore, both statements of the lemma remain valid for local CPTP noise.

The interpolation and cluster resummation techniques apply
unchanged to local CPTP noise, with
\(\cT^{\rm MW}(1)=\cR^{\rm MW}\circ\cN_\Phi\).
In \(F_{J,T}\), every location outside \(T\) contributes either
\(\id\) or \(\id+F_\alpha=\Phi_\alpha\). Since both maps are CPTP,
they have diamond norm one. Hence
Eq.~\eqref{eq:MW-completed-map-bound} holds with \(a=2p\), and
Lemma~\ref{lem:MW-joint-channel-bound} holds with
\(\lambda_{\rm MW}(p)\coloneqq 2^D(2p)^{1/(2r)}\).

The condition \(\rho_{\rm MW}(p)<1\) implies
\(2p<(2^D\chi_{\rm MW})^{-2r}\le1\). Thus the same cluster counting applies with
\(\rho_{\rm MW}=\rho_{\rm MW}(p)\), proving
Eq.~\eqref{eq:MW-cptp-main-bound}.

The threshold and convergence follow by the same argument
as in the proof of Theorem~\ref{thm:MW-threshold}, with
\(\rho_{\rm MW}(\eta)\) replaced by \(\rho_{\rm MW}(p)\).

\end{proof}
\end{appendix}


\end{document}